\definecolor{dkgreen}{rgb}{0,0.6,0}
\definecolor{gray}{rgb}{0.5,0.5,0.5}
\definecolor{mauve}{rgb}{0.58,0,0.82}
\newtheorem{invariant}{Invariant}[section]
\newtheorem{theorem}{Theorem}[section]
\newtheorem{lemma}[theorem]{Lemma}
\newtheorem{claim}{Claim}[section]
\newtheorem{observation}{Observation}[section]
\DeclareMathOperator*{\argmax}{arg\;max}
\DeclareMathOperator*{\argmin}{{arg\;min}}
\DeclareMathOperator*{\argminc}{arg\;minc}
\DeclareMathOperator*{\minc}{minc}
\DeclareMathOperator*{\mine}{mine}
\newcommand{\maxe}{\text{maxe}}
\newcommand{\eps}{\varepsilon}
\renewcommand{\epsilon}{\varepsilon}
\newcommand{\E}{\mathbf{E}}
\newcommand{\rank}{\mathrm{rank}}
\newcommand{\uni}{E}
\newcommand{\cI}{\mathcal{I}}
\newcommand{\cM}{\mathcal{M}}
\newcommand{\cN}{\mathcal{N}}
\newcommand{\cT}{\mathcal{T}}
\newcommand{\cD}{\mathcal{D}}
\newcommand{\cB}{\mathcal{B}}
\newcommand{\cS}{\mathcal{S}}
\newcommand{\RR}{{\mathbb R}}
\newcommand{\maxlight}{\text{maxlight}}
\newcommand{\minlight}{\text{minlight}}
\def\b1{{\bf 1}}
\def\bx{{\bf x}}
\newtcolorbox[auto counter, number within=section]{algbox}[2][]{%
    title=Algorithm~\thetcbcounter: {#2}, #1}
\newtcolorbox[auto counter, number within=section]{funcbox}[2][]{%
    title= {#2}, #1}
\newcommand{\erclogowrapped}[1]{%
\setlength\intextsep{0pt}%
\begin{wrapfigure}[3]{r}{#1}%
\includegraphics[width=#1]{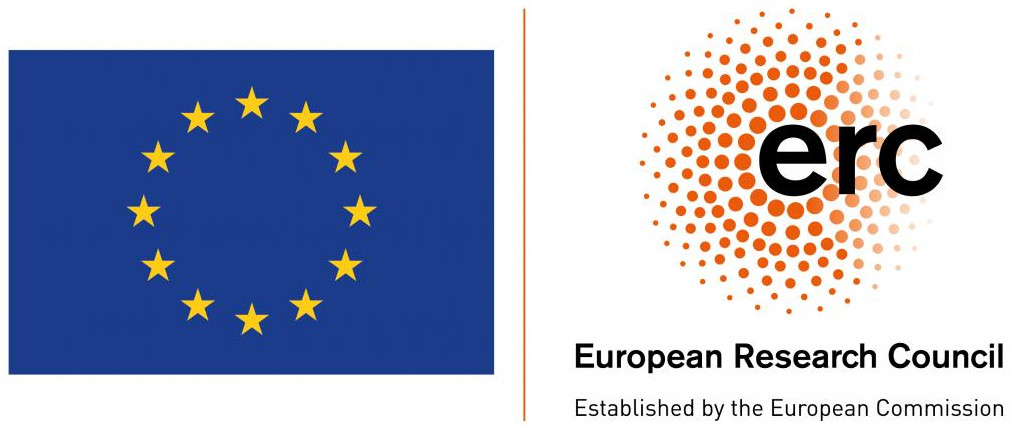}%
\end{wrapfigure}%
}
\title{Faster submodular maximization for several classes of matroids}
\author{Monika Henzinger\thanks{
\erclogowrapped{5\baselineskip} 
This 
project has received funding from the European Research Council (ERC)
under the European Union's 
Horizon 2020 research and innovation programme
(Grant agreement No.\ 101019564 ``The Design of Modern Fully Dynamic Data
Structures (MoDynStruct)'' and from the Austrian Science Fund (FWF) project
``Static and Dynamic Hierarchical Graph Decompositions'', 
I 5982-N, and project 
``Fast Algorithms for a Reactive Network Layer (ReactNet)'', P~33775-N, with
additional funding from the \textit{netidee SCIENCE Stiftung}, 2020--2024.
}
\\
Institute of Science and Technology Austria 
\and
Paul Liu
\\
Stanford University
\and 
Jan Vondr\'{a}k
\\
Stanford University
\and
Da Wei Zheng
\\
University of Illinois Urbana-Champaign
}
\date{}
\begin{document}

\maketitle
\thispagestyle{empty}

\begin{abstract}
The maximization of submodular functions have found widespread application in areas such as machine learning, combinatorial optimization, and economics, where
practitioners often wish to enforce various constraints; the matroid constraint has been investigated extensively due to its algorithmic properties and expressive power. Though tight approximation algorithms for general matroid constraints exist in theory, the running times of such algorithms typically scale quadratically, and are not practical for truly large scale settings.
Recent progress has focused on fast algorithms for important classes of matroids given in explicit form. Currently, nearly-linear time algorithms only exist for graphic and partition matroids~\cite{EN19}. In this work, we develop algorithms for monotone submodular maximization constrained by graphic, transversal matroids, or laminar matroids in time near-linear in the size of their representation. Our algorithms achieve an optimal approximation of $1-1/e-\eps$ and both generalize and accelerate the results of Ene and Nguyen~\cite{EN19}. In fact, the running time of our algorithm cannot be improved within the fast continuous greedy framework of Badanidiyuru and Vondrák~\cite{BV14}.

To achieve near-linear running time, we make use of dynamic data structures that maintain bases with approximate maximum cardinality and weight under certain element updates. These data structures need to support a weight decrease operation and a novel {\sc Freeze} operation that allows the algorithm to freeze elements (i.e. force to be contained) in its basis regardless of future data structure operations.
For the laminar matroid, we present a new dynamic data structure using the top tree interface of Alstrup, Holm, de Lichtenberg, and Thorup~\cite{AHLT05} that maintains the maximum weight basis under insertions and deletions of elements in $O(\log n)$ time. This data structure needs to support certain subtree query and path update operations that are performed every insertion and deletion that are non-trivial to handle in conjunction.
For the transversal matroid the {\sc Freeze} operation corresponds to requiring the data structure to keep a certain set $S$ of vertices matched, a property that we call $S$-stability. While there is a large body of work on dynamic matching algorithms, none are $S$-stable \emph{and} maintain an approximate maximum weight matching under vertex updates. We give the first such algorithm for bipartite graphs with total running time linear (up to log factors) in the number of edges.
\end{abstract}

\pagebreak

\clearpage
\pagenumbering{arabic} 

\section{Introduction}
Submodular optimization is encountered in a variety of applications -- combinatorial optimization, information retrieval, and machine learning, to name a few~\cite{B22}. Many such applications involve constraints, which are often in the form of cardinality or weight constraints on certain subsets of elements, or combinatorial constraints such as connectivity or matching. A convenient abstraction which has been studied heavily in this context is that of a {\em matroid constraint}\footnote{A matroid on $N$ is a family of ``independent sets'' $\cI \subset 2^N$ which is down-closed, and satisfies the extension axiom: For any $A,B \in \cI$, if $|A|<|B|$ then there is an element $e \in B \setminus A$ such that $A \cup \{e\} \in \cI$.}. For instance, transversal matroids appear in ad placement and matching applications~\cite{BK07,BHK08}, laminar and partition matroids capture capacity constraints on subsets which are widely used in recommendation settings (e.g. YouTube video recommendation algorithm~\cite{WRBJCG18}), and graphic matroids appear in applications for approximating Metric TSP~\cite{XR15}. Maximization of a submodular function subject to any of these constraints is an APX-hard problem, but a $(1-1/e)$-approximation is known in this setting for any matroid constraint \cite{CCPV11}, and the factor of $1-1/e$ is also known to be optimal \cite{NW78,F98}. Considering this, it has been a long-standing quest to develop fast algorithms for the submodular maximization problem that achieve an approximation close to the optimal factor of $1-1/e$. In this work, we achieve this goal for several common classes of matroids.

Perhaps the first step in this direction was the \emph{threshold-greedy technique} which gives a fast $(1/2-\epsilon)$-approximation \cite{BMKK14} for the cardinality constraint. With more work, this technique can be extended to give approximations close to $1-1/e$ \cite{BV14} and ultimately a $(1-1/e-\eps)$-approximation in running time $O(n/\eps)$ was found for the cardinality constraint.\footnote{Running time in this paper includes value queries to the objective function $f(S)$ as unit-time operations.}

 For general matroids, the fastest known algorithm is the \emph{fast continuous greedy algorithm}, which uses $O(nr \eps^{-4} \log^2 (n/\eps))$ oracle, where $n$ is the number of elements in the matroid and $r$ is the rank of the matroid \cite{BV14}. (The exact running time would depend on the implementation of these queries, which \cite{BV14} does not address.) We can assume that the rank $r$ scales polynomially with $n$ and hence this algorithm is not near-linear. Further work on fast submodular optimization developed in the direction of parallelized and distributed settings (see~\cite{MZLK19, LFKK22, LV19} and the references therein); we do not discuss these directions in this paper.

A recent line of work initiated by Ene and Nguyen~\cite{EN19} attempts  to develop a ``nearly-linear" continuous greedy algorithm, i.e., with a running time of $n \cdot \hbox{poly}(1/\eps, \log n)$. They achieved this goal for partition and graphic matroids. Prior to their work, the fastest known algorithm for any matroid class beyond cardinality was the work of Buchbinder, Feldman, and Schwartz~\cite{BFS17}, who showed an $O(n^{3/2})$-time algorithm for partition matroids.

This immediately leads to the question of whether such improvements are also possible for other classes of matroid constraints. As observed by Ene and Nguyen~\cite{EN19}, even determining feasibility may take longer than linear time for certain matroids. One such example is for matroids represented by linear systems, as simply checking the independence of a linear system takes $O(\rank(\cM)^\omega)$ time, where $\rank(\cM)$ is the rank of the linear system and $\omega$ is the exponent for fast matrix multiplication.

\subsection{Our contributions} 
\label{sec:contributions}
In this paper, we generalize and significantly improve the work of Ene and Nguyen \cite{EN19}, to develop a $(1-1/e-\eps)$-approximation for maximizing a monotone submodular function subject to a matroid constraint, for several important classes of matroids: namely for graphic, laminar, and transversal matroids. The technical developments behind these results are on two fronts: \\
(i) a refinement of the optimization framework of \cite{EN19} and formulation of abstract data structures required for this framework; \\
(ii) implementation of such data structures for graphic, laminar and transversal matroids. (See \Cref{sec:preliminaries} for definitions of these classes of matroids.)

To describe our results in more detail, 
the efficiency of optimizing a submodular function $f$ can be broken down into two components: the number of oracle calls to $f$, and the number of additional arithmetic operations needed to support the algorithm optimizing $f$. The number of oracle calls to $f$ that our framework need to achieve a $(1-1/e-\eps)$-approximation is $O_\eps(n\log^2(n))$ regardless of the matroid, where $n$ is the number of elements in the matroid constraint. Thus for all results below, the running time is measured in the number of the number of arithmetic operations needed by the data structures supporting the optimization of $f$. Our contributions are as follows:
\begin{itemize}
    \item We give nearly-linear time versions of continuous greedy for laminar, graphic, and transversal matroids. 
    These algorithms are accelerated by special data structures we developed for each matroid and which might be of independent interest. For all of our matroids, it is impossible to improve our running time without improving the continuous greedy algorithm itself.
    
    \item For graphic matroids on $n$ vertices and $m$ edges, we improve the running time of~\cite{EN19} from $O_\eps(n \log^5 n + m \log^2 n)$ to $O_\eps(m \log^2 n)$. 
    
    \item We generalize the partition matroids results of~\cite{EN19} to laminar matroids, and match their running time of $O_\eps(n \log^2 n)$ for continuous greedy. As a by-product, we also develop the first data structure that maintains the maximum weight basis on a laminar matroid with $O(\log n)$ update time for insertions and deletions that may be of independent interest. This data structure uses the top tree interface of 
    Alstrup, Holm, de Lichtenberg, and Thorup~\cite{AHLT05}.
    
    \item For transversal matroids represented by a bipartite graph with $m$ edges and the ground set being one side of the partition with $n$ vertices, we give an algorithm running in $O_\eps(m \log n + n\log^2 n)$ time.\footnote{Any transversal matroid with $n$ elements can be represented as the family of matchable sets the left-hand side of an $n \times n^2$ bipartite graph, with degrees at most $n$. See \Cref{sec:preliminaries} for more details.} This is the first such fast algorithm for transversal matroids. 
    
    For this we develop a  dynamic matching algorithm in a vertex-weighted, bipartite graph with a weighted vertex sets $L$ and an unweighted vertex set $R$ with the following conditions: (i) There exists a dynamically changing set $S  \subseteq L$ such that every vertex in $S$ must be matched in the current matching. (ii) The matching must give both an approximation in terms of cardinality in comparison to the maximum cardinality matching \emph{as well as} the weight of the matching in comparison to the best matching that matches all vertices of $S$, where the weight of the matching is the sum of the weights of the matched vertices of $L$. 
\end{itemize}
We emphasize that our results are true running times, as opposed to black-box independence queries to the matroid. The only black box operation we need is the query to the objective function $f(S)$. 

The performance of the dynamic matching data structure used in our transversal matroid algorithm is interesting as none of the earlier work on dynamic matching can maintain both a constant approximation to the weight as well as to the cardinality of the matching. Our algorithm builds on a recent fast algorithm for maximum-weight matching by Zheng and Henzinger~\cite{NEW}. We briefly mention a few relevant works:

\begin{itemize}
\item There is a conditional lower bound based on the OMv conjecture~\cite{HenzingerKNS15} that shows that  maintaining a \emph{maximum weight matching} in an edge weighted bipartite graph with only \emph{edge weight increase operations} cannot be done in amortized time $O(n^{1-\delta})$ per edge weight increase and amortized time $O(n^{2-\delta})$ per query operation for any $\delta > 0$~\cite{HenzingerP021}. 
The reduction from~\cite{HenzingerP021} can be adapted to the setting with only \emph{edge weight decrement} operations, achieving the same lower bound. Thus, this shows that our running time bound cannot achieved if a \emph{exact} maximum weight matching has to be maintained under edge weight decrement operations.

\item Le, Milenkovic, Solomon, and Vassilevska-Williams~\cite{LeMSW22} studied one-sided vertex updates (insertions and deletions) in bipartite graphs and gave a \emph{maximal matching} algorithm whose total running time is $O(K)$, where $K$ is the total number of inserted and deleted \emph{edges}. Bosek el al.~\cite{BLSZ14}  studied one-sided vertex updates (either insertions-only or deletions-only) in bipartite graphs  and gave algorithms that  for any $\eps > 0$ maintain a $(1-\eps)$-approximate maximum cardinality matching in total time $O(m/\eps)$. Gupta and Peng~\cite{GuptaP13} developed the best known dynamic algorithm that allows both \emph{edge} insertions and deletions and maintains a $(1-\eps)$-approximate maximum cardinality matching (for any $\eps > 0$). It requires time
$O(\sqrt m/\eps)$ amortized time per operation. For all these algorithms, they either cannot be extended to the weighted setting, or cannot maintain both a constant approximation to the weight as well as to the cardinality of the matching.
\end{itemize}

\subsection{Technical Overview}

\paragraph{The submodular optimization framework.}
Our framework is an adaptation and improvement over that of Ene and Nguyen~\cite{EN19}. The framework consists of two phases:
\begin{enumerate}[label={(\arabic*)}]
    \item a {\sc LazySamplingGreedy} phase, which aims to build a partial solution that either provides a good approximation on its own, or reduces the problem to a residual instance with bounded marginal values of elements.
    \item a {\sc ContinuousGreedy} phase, which is essentially the original fast continuous greedy algorithm \cite{CCPV11,BV14}, with an improved analysis based on the fact the marginal values are bounded.
\end{enumerate}
The original fast {\sc ContinuousGreedy} runs in time $O_\eps(nr\log^2 n)$, where the factor of $r\log^2 n$ is due to the cost in evaluating the multilinear extension of $f$. The multilinear extension is an average over values of $f$ on randomly drawn subsets of the input. In {\sc ContinuousGreedy}, $O(r\log^2 n)$ samples are needed to estimate the multilinear extension to sufficient accuracy.

The {\sc LazySamplingGreedy} phase transforms $f$ into a function $\tilde{f}$ such that (a) the number of samples required in {\sc ContinuousGreedy}
for $\tilde{f}$ is reduced by a factor of $r$ and (b) the optimal solution of $\tilde f$ is within a $(1-\eps)$ factor of the optimal solution $f$. {\sc LazySamplingGreedy} does this by constructing an initial independent set $S$ in our matroid $\cM$ such that $\tilde{f}(T) = f(T\cup S)$ has relatively small marginal values. The final solution is obtained by running {\sc ContinuousGreedy} on $\tilde{f}$ constrained by $\cM \setminus S$ and combining the solution with $S$. The construction of $S$ relies on a fast data structure to get the maximum weight independent set of $\cM$ at any given time, subject to weight changes on each element. 

We begin by simplifying and improving the {\sc LazySamplingGreedy} phase (\Cref{sec:lazysample}). A significant part of {\sc LazySamplingGreedy} in \cite{EN19} is dedicated to randomly checking and refreshing estimates of marginal values for each element in the matroid. We show that (a) this random checking can be dramatically reduced, and (b) the maximum-weight independent set requirement for the data structure can be relaxed to a constant-factor approximation of the maximum-weight independent set. The relaxation to constant-factor approximations enables us to design much more efficient data structures than the previous work of \cite{EN19}, and also allows us to handle more classes of matroids.

In the {\sc ContinuousGreedy} phase (\Cref{sec:continuousgreedy}), a subroutine requires an independence oracle to check if proposed candidate solutions are independent sets of the matroid. Ene and Nguyen use fully dynamic independence oracles to implement this, where independence queries require $O(\textrm{polylog} \ n)$ time. We show that only incremental independence oracles are needed. This opens the door to implementing such oracles for new classes of matroids, as sublinear fully dynamic independence oracles are provably hard in many settings (e.g. bipartite maximum-cardinality matchings are hard to maintain exactly in faster than $O(n^{2-\eps})$ amortized time per update \cite{AbboudW14}, which corresponds to finding the maximum independent set in a transversal matroid).

\paragraph{Dynamic data structures for various matroid constraints.}
From a data structures point of view, we give the first efficient data structure for two settings. (We refer the reader to \Cref{sec:preliminaries} for definitions of laminar, graphic and transversal matroids.)

\emph{Maximum-weight basis in a laminar matroid.} In a laminar matroid with $n$ elements we are able to output the maximum-weight basis under an online sequence of insertions and deletions of weighted elements with $O(\log n)$ time per update, which we show to be worst-case optimal. 
The biggest challenge for laminar matroids is that each element may have as many as $O(n)$ constraints that need to be kept track of on each insertion and deletion. We leverage the tree structure induced by a laminar set system to build data structures.
%
Specifically, we show there exists a data structure with $O(\log^2 n)$ query time using the heavy-light decomposition technique of \cite{Tarjan75}.
Since the heavy-light decomposition technique decomposes the tree into paths, we store some carefully chosen auxiliary information at each vertex to transform our subtree queries into path queries.
We further improve this to $O(\log n)$ using the top tree interface of 
Alstrup, Holm, de Lichtenberg, and Thorup~\cite{AHLT05}
that more naturally support both path and subtree operations using a similar idea of carefully storing the right auxiliary information at each top tree node combined with lazily propagating path changes.

\emph{Approximate maximum-weight basis in a transversal matroid.}
In the case of transversal matroids, our {\sc LazySamplingGreedy+} algorithm requires what we call a \emph{$(c, d)$-approximate maximum weight matching oracle}: a matching that is an $c$-approximation in weight and at least $d$ of the cardinality of the maximum cardinality matching. In addition, the oracle must implement two update operations, (a) a {\sc Freeze}  operation that adds a vertex of $L$ to $S$ and (b) a {\sc Decrement} operation that reduces the weight of a vertex of $L$ to a given value. We give a novel algorithm that maintains a maximal (inclusion-wise) matching $M$ in a weighted graph such that the weight of the matching is a $(1-\epsilon)$-approximation of the max-weight solution in total time $O(m (1/\epsilon + \log n))$. Thus our algorithm is a $(1-\epsilon, 1/2)$-approximate maximum weight matching oracle. Due to standard rescaling techniques we can assume that the maximum weight $W = O(n)$. 

To illustrate the challenges introduced by {\sc Freeze} operations, assume we want to maintain a $(1/2,1/2)$-approximate maximum weight matching oracle and let $n\ge 5$ be an odd integer. Consider a graph consisting of the path $\ell_0, r_0, \ell_1, r_2, \dots , \ell_{(n+1)/2}$ of length $n-1$ where the first and last edge have large weight $W$< say $W = 100 n$, and all other edges have weight 1. If the initial $(\Omega(1), \Omega(1))$-approximate matching has greedily picked every second edge, it achieves a weight of $W + \frac{n-1}{2} -1$ versus an optimum weight of $2W + \frac{n-1}{2} - 2.$ 
Now if the weight of the first edge is halved, the weight of the computed solution drops to $W/2 + \frac{n-1}{2} -1$, which is no longer a 2-approximation of the weight of the optimum solution (for large enough $W$). Thus the algorithm needs to match the last edge in order to maintain a 2-approximation to the weight. This would only require two changes to the matching, namely un-matching the edge $(\ell_{(n-1)/2}, r_{(n-1)/2})$
and matching the last edge. However, if prior {\sc Freeze} operations added the vertices $\ell_1, \ell_2, ... \ell_{(n-1)/2}$ to
$S$, i.e. $S = V \setminus \{ \ell_0\}$, then we cannot un-match $\ell_{(n-1)/2}$. Thus, the edge $(\ell_{(n-1)/2}, r_{(n-3)/2})$ needs to be unmatched, which in turn un-matches the
vertex $\ell_{(n-3)/2}$, leading to further un-matchings and matchings along the path. More specifically, {all} $(n-1)/2$ matched edges need to change. Next, the weight of the last edge is divided by 4 and the process along the path starts again.
Thus, due to the prior {\sc Freeze} operations, each {\sc Decrement} operation can lead to $\Theta(n) = \Theta(m)$ changes in the graph.
As this can be repeated $\Theta(\log W) = \Theta(\log n)$ times it shows that work $\Omega(m \log n)$ is unavoidable if a 2-approximation to the weight is maintained with {\sc Freeze} operations. This is the running time that we achieve.

More precisely, for any $\eps > 0$, we give an $(1-\epsilon, 1/2)$-approximate maximum weight matching oracle under {\sc Freeze} and {\sc Decrement} operations with total time $O(m (\log n + 1/\eps))$.
It follows that $O(m (\log n + 1/\eps))$ is also an upper bound on the number of matching and un-matching operations of the algorithm. To do so we extend
a recent algorithm~\cite{NEW} for fast $(1-\epsilon)$-approximate maximum weight matching algorithm in bipartite graphs based on the multiplicative weight update idea. The analysis within~\cite{NEW} shows stability properties that makes the {\sc Freeze} operation trivial to implement. However, {\sc LazySamplingGreedy+} requires that the maintained matching is at least $1/2$ the size of the maximum matching. Furthermore, we need to support the {\sc Decrement} operation. We show that both extensions are possible and obtain an algorithm with total time (for all operations) of $O(m (1/\epsilon + \log n))$.

\section{Preliminaries}
\label{sec:preliminaries}
\paragraph{Set notation shorthands.} Given a set $S \subseteq \cN$ and an element $u \in \cN$, we denote $S \cup \{u\}$ and $S \setminus \{u\}$ by $S + u$ and $S - u$ respectively. Similarly, given sets $S, T \subseteq \cN$, we denote $S \cup T$ and $S \setminus T$ by $S+T$ and $S-T$ respectively.

\paragraph{Submodular functions.}
Given a set $\cN$, a set function $f\colon 2^\cN \to \mathbb{R}$ is called \emph{submodular} if for any two sets $S$ and $T$ 
we have
$$
	f(S) + f(T) \geq f(S + T) + f(S - T).
$$
We only consider monotone submodular functions, where $f(S)\leq f(T)$ for any sets $S\subseteq T$.

\paragraph{Matroids.}
A set system is a pair $\cM = (\cN, \cI)$, where $\cI \subseteq 2^{\cN}$. We say that a set $S \subseteq \cN$ is \emph{independent} in $\cM$ if $S \in \cI$. The \emph{rank} of the set system $M$ is defined as the maximum size of an independent set in it. The independent sets must satisfy (i) $S\subseteq T, T \in \cI \implies S \in \cT$, and (ii) $S, T \in I, |S| < |T| \implies \exists e \in T \setminus S$ such that $S + e \in \cI$.

A matroid constraint means that $f$ is optimized over the independent sets of a matroid.

\paragraph{Graphic matroids.}
Let $G = (V,E)$ be a graph. A graphic matroid has $\cN = E$ and $\cI$ equal to the forests of $G$. The rank of $\cM$ is $|V|-C$ where $C$ is the number of connected components in $G$.

\paragraph{Laminar matroids.}
Let $\{S_1, S_2, \ldots, S_m\}$ be a collection of subsets of a set $\cN$ such that for any two intersecting sets $S_i$ and $S_j$ where $i \neq j$, either $S_i \subset S_j$ or $S_j \subset S_i$. Furthermore, let there be non-negative integers $\{c_1, c_2, \ldots, c_m\}$ associated with the $S_i$'s. Let $\cI$ be the sets $S \subseteq \cN$ for which $|S\cap S_i| \leq c_i$ for all $i$. $\cI$ is then the collection of independent sets in a laminar matroid on $\cN$. A laminar matroid has a natural representation as a tree on $m$ nodes, which we describe in \Cref{sec:laminar}.

\paragraph{Transversal matroids.}
Let $G=((L,R), E)$ be a bipartite graph with a bipartition of vertices $(L, R)$. Let $\cI$ be the collection of sets $S\subseteq L$ such that there is a matching of all vertices in $S$ to $|S|$ vertices in $R$. A transversal matroid has $\cN = L$ and $\cI$ as its independent sets. From the definition, it is clear that $\rank(\cM)$ is the size of the maximum cardinality matching in $G$. 
It is known that every transveral matroid can be represented by a bipartite graph where $L$ is the ground set of the matroid and $|R| = \rank(\cM)$ (see \cite{Schrijver-book}, Volume B, equation (39.18)).


\section{Improved nearly-linear submodular maximization}
In what follows, $f$ is the submodular function we want to maximize, $\cM$ is the matroid constraint, $n$ is the number of elements in $\cM$, and $OPT$ is the optimal independent set. Additionally, we can assume that $\rank(\cM) = \omega(\log n)$, as the standard {\sc ContinuousGreedy} would run in $O(n\,\mathrm{polylog} n)$ time otherwise. 

Our high-level framework adapts and improves upon the nearly-linear time framework of Ene and Nguyen~\cite{EN19}. We will do the following:
\begin{algbox}[label=alg:framework]{Overall Framework}
\begin{enumerate}
    \item Run {\sc LazySamplingGreedy+} (see below), to obtain a partial solution $S_0$.
    \item Run {\sc ContinuousGreedy} on $\tilde{f}(T) = f(S_0 \cup T) - f(S_0)$ with the constraint $\cM / S_0$, to obtain a solution $S_1$.
    \item Return $S_0 \cup S_1$.
\end{enumerate}
\end{algbox}

As previously discussed, the original {\sc ContinuousGreedy} runs in time $O_{\eps}(nr\log^2 n)$, where the $r\log^2 n$ is due to the number of samples needed to evaluate the multilinear extension of $f$. {\sc LazySamplingGreedy+} finds a set $S_0$ such that $\tilde{f}(T) := f(T \mid S_0) = f(S_0 \cup T) - f(S_0)$ has a tighter range of marginal values. This allows us to reduce the number of samples used in {\sc ContinuousGreedy} by a factor of $r$.

The overall idea is to run {\sc LazySamplingGreedy+} until the marginals in $\tilde{f}$ are small enough to guarantee good performance in the {\sc ContinuousGreedy} phase of our overall framework (Algorithm~\ref{alg:framework}). To accelerate our algorithms, we construct specialized data structures for both {\sc LazySamplingGreedy+} and {\sc ContinuousGreedy}. 

\subsection{Data structure requirements}
\label{sec:ds-reqs}
We next describe the data structures needed for the two phases of our algorithm. In the {\sc LazySamplingGreedy+} phase we need a $c$-approximate dynamic max-weight independent set oracle. In the {\sc CountinuousGreedy} phase we have two options of dynamic independence oracles, both of them unweighted. In addition, our {\sc LazySamplingGreedy+} requires the ability to obtain a weighted sample from the approximate max-weight independent set. Since we use these data structures as subroutines in our static algorithm which uses the answers of the data structure to determine future updates, it is important that their running time bounds are valid against an \emph{adaptive} adversary.

\paragraph{Dynamic $(c,d)$-approximate maximum weight oracle.}
Let $\cM=(\uni, \cI)$ be a matroid. Given an independent set $S \subseteq \uni$
the \emph{independent sets relative to $S$} are the independent sets of $\cM$ that contain $S$.
Let $rank(\cM)$ denote the size of the largest independent set, which equals the size of the largest independent set containing $S$. The weight of an independent set is the sum of the weights of its elements.
A \emph{maximum weight basis in $\cM$ relative to $S$} is a basis $B^*$ that maximizes the sum of $w_e$ over all bases of $\cM$ that contain $S$.

Let $c<1$ and $d<1$ be constants.
An independent set $B$ is called an \emph{$(c,d)$-approximate independent set relative to $S$} if it fulfills the following conditions: (a) its size is at least $\rank(\cM)\cdot d$
and
(b) its weight is at least a $c$-approximation to the weight of a maximum weight basis relative to $S$.

We study the dynamic setting where each element $e \in \uni$ has a dynamically changing weight $w_e \in \RR^+$ and where $S$ is a dynamically growing subset of $\uni$. 
A \emph{$(c,d)$-approximate dynamic maximum weight oracle} is a data structure which maintains a $(c,d)$-approximate independent set $B$ relative to $S$ (i.e. in the matroid $\cM / S$) 
while $S$ and the weight of elements not in $S$ can change.
Initially $S$ is an empty set and the data structure supports the following operations: 
\begin{itemize}
    \item {\sc Freeze($e$):} Add to $S$ the element $e$, where $e$ must belong to the current $(c,d)$-approximate basis relative to (the old) $S$  and return the changes to $B$.
   
    \item {\sc Decrement($e,w$):} Return the weight $w_e$ of $e \notin S$ to $w$, which is guaranteed to be smaller than the current weight of $e$  and return the changes to $B$.

    \item {\sc ApproxBaseWeight():} Return the weight of the $(c,d)$-approximate independent set maintained by this data structure.
\end{itemize}

If $c=1$ and $d=1$ we call such a data structure a \emph{dynamic maximum weight oracle relative to $S$.}

We will use $(c,d)$-approximate maximum weight oracles in the {\sc LazySamplingGreedy+} phase of the algorithm.  

We will also need to augment this data structure with two additional sampling operations.
Whenever the independent set $B$ maintained by the data structure changes, we need to spend an extra $O(1)$ time updating a sampling data structure.
This sampling data structure can be generically and efficiently implemented to augment any $(c,d)$-approximate maximum weight oracle as long as the $(c,d)$-approximate maximum weight oracle does not change the independent set $B$ too much amortized over all calls to the data structure.
This is described in \Cref{sec:sampling_ds}.

\begin{itemize}
    \item {\sc Sample($t$):} Return a subset of $B\setminus S$, where each element is included independently with probability $\min\left(1, \frac{t}{w(B\setminus S)}w_e\right)$.
    \item {\sc UniformSample():} Return a uniformly random element from $B\setminus S$.
\end{itemize}

\paragraph{$(1-\epsilon)$-approximate independence oracles.}
For the second phase of our algorithm {\sc ContinuousGreedy} we have a choice between two data structures. Both of them are unweighted, i.e., elements have no associated weights.
We can either use an incremental (i.e. insertions-only) exact data structure or a dynamic $(1-\epsilon)$-approximate data structure, for a small $\epsilon >0$.
Next we define both in more details.
 
\emph{Incremental independence oracle.}
The incremental independence oracle data structure maintains an independent set $B$ and supports the following operation:
\begin{itemize}
    \item {\sc Test($e$):} Given an element $e$, decide if $B \cup \{e\}$ is independent. If so, output YES, otherwise output NO.
    \item {\sc Insert($e$):} Given an element $e$ such that $B \cup \{e\}$ is independent, add $e$ to $B$.
\end{itemize}

\emph{$(1-\epsilon)$-approximate dynamic maximum independent set data structure.}
Let $\epsilon >0$ be a  small constant and
let us call an independent set $B$ of a matroid $\cM$ that contains at least $(1-\epsilon) \cdot \rank(\cM)$ elements an  \emph{$(1-\epsilon)$-approximate basis} of $\cM$.
The $(1-\epsilon)$-approximate data structure
maintains an $(1-\epsilon)$-approximate basis $B$ for a dynamically changing matroid $\cM$ and supports the following operations.
\begin{itemize}
    \item {\sc Batch-Insert($E'$):} Given a set $E'$ of new elements, insert all elements of $E'$ into the matroid  $\cM$ and compute a new $(1-\epsilon)$-approximate basis $B$ such that all elements that were in the basis before the update belong to $B$. Return all new elements that were introduced to $B$.
    \item {\sc Delete($e$):} Given an element $e$ of $\cM$, delete $e$ from $\cM$ and update the independent set $B$ such that it consists of at least $(1-\epsilon) \cdot \rank(\cM)$ elements of the new $\cM$. If any new elements were added to $B$, return this set of new elements. Otherwise, return $\emptyset$. 
\end{itemize}

Depending on which version of the algorithm we use, we will need either an exact incremental oracle or a $(1-\epsilon)$-approximate dynamic maximum independent set data structure.

\subsection{The {\sc LazySamplingGreedy+} algorithm}
\label{sec:lazysample}
In this section, we describe the implementation of {\sc LazySamplingGreedy+}.

The {\sc LazySamplingGreedy+} algorithm is inspired by the Random Greedy algorithm of Buchbinder, Feldman, Naor, and Schwartz~\cite{BFNS14} and the Lazy Sampling Greedy algorithm of Ene and Nguyen~\cite{EN19}. The algorithm begins with an initially empty solution $S$, and runs until the function $\tilde{f}(T) = f(T|S)$ has small enough marginals to reduce the sampling requirements of {\sc ContinuousGreedy}. 

We denote the weight of an element by $w_e(S) := f(S\cup\{e\})-f(S)$, and $\textrm{weight}(T)$ to denote $\sum_{e \in T}w_e(S)$. The algorithm will only ever add elements to $S$, so by submodularity, $w_e(S)$ can only decrease as the algorithm runs (satisfying the requirements of \Cref{sec:ds-reqs}). Throughout this algorithm, we use a $(c,d)$-approximate maximum-weight oracle (\Cref{sec:ds-reqs}) that maintains a maximum-weight independent set $B$ as the weights $w_e(S)$ are updated. For the sake of exposition, we assume $c\geq 1/2$, and $d \geq 1/2$.

\paragraph{Discretizing the marginal weights.}
Whenever $S$ is changed, the weight $w_e(S)$ of all elements $e$ can be changed. To reduce the number of weight changes, we use a standard rounding trick. Assume we have some constant-factor approximation $M$ to $f(OPT)$ (which can be computed in $O(n)$ time via well-known algorithms~\cite{BFS17}). Instead of maintaining $w_e(S)$ exactly, we round $w_e(S)$ to one of logarithmically many weight classes, that is, $w_e(S)$ belongs to weight class $j$ if $w_e(S)\in ((1-\eps)^{j+1} M, (1-\eps)^j M]$, with the lowest class containing all weights from $[0, O(\eps M/r)]$. The value of the rounded weight is then $\tilde{w}_e = (1-\eps)^{j_e}$. We denote by \emph{bucket} ${\cal B}^{(j)}$ all elements that belong to weight class $j$. Throughout the algorithm, we maintain estimates $\tilde{j_e}$ for the weight class that $e$ is in (and thus estimates of $w_e$ as well). An estimate is called \emph{stale} if $w_e(S)$ is not actually in the weight class indicated by $\tilde{j_e}$. To achieve a multiplicative error of $(1-\eps)$, it suffices for the number of different weight classes to be at most $O(\eps^{-1} \log (r/\eps))$, where $r$ is the rank of the matroid. We denote by $\textrm{weight}(B)$ the sum of current weight estimates over the set $B$.

The analysis of our algorithm works with any constant-factor approximation to $f(OPT)$ and any constant $c$-approximate maximum weight independent set data structure, albeit with slight changes in the approximation factors.

\

\begin{algbox}{\sc LazySamplingGreedy+}
$S \leftarrow \emptyset$, and set the weight estimate $\tilde{w}_e$ to $w_e(\emptyset)$ for all $e \in \cM$. \\
$\cD \leftarrow \text{$(c,d)$-approximate dynamic maximum weight oracle on $\cM$ and $\tilde{w}$.}$ \\
\vspace{0em} \\
While $\cD.\mbox{\sc ApproxBaseWeight}() \ge \frac{50}{\eps} f(OPT)$:
\begin{enumerate}
\item $B' \gets \cD.\mbox{\sc Sample}(128 \log n)$  \\
(a random subset of $B \setminus S$, each element included independently with probability
 $p_e = \min \{ 1, \frac{128 \log n}{\tilde{w}(B \setminus S)} \tilde{w}_e \}$)
\label{algline:random-sample}
\item Update the weights of all stale elements $e \in B'$ by computing $j_e$, $\tilde{w}_e = (1-\eps)^{j_e}$ and then calling $\cD$.\mbox{\sc Decrement}$(e, \tilde{w}_e)$. \label{algline:weight-update}
\item If less than half of the elements in $B'$ where $p_e = 1$ were stale (i.e. needed an update), and less than half of the elements in $B'$ where $p_e<1$ were stale, then add $e = \cD.\textsc{UniformSample}()$ to $S$ by calling $\cD$.\mbox{\sc Freeze}$(e)$. \label{algline:add-to-B}
\end{enumerate}
\end{algbox}

Note that in each iteration, we check and update only the weights of some random sample of elements.
This is for efficiency; we show the estimated weight $\tilde{w}(B)$ is still correct in expectation up to a constant multiplicative factor. We begin the correctness proof by showing the following lemma.
\begin{lemma}
\label{lem:at-least-gain}
Assume $0 < \eps < 1/3$.
With high probability, if less than $\frac12$ of the estimated weight of $B'$ is in elements which are stale, then 
$$ \sum_{e \in B \setminus S} w_e(S) \geq \frac{4}{\eps} f(OPT).$$
\end{lemma}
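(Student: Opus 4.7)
I would prove this by contrapositive, combining the rounding invariant with a Hoeffding-style concentration bound on the weighted sample $B'$.

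First, by the rounding invariant, non-stale $e \in B\setminus S$ satisfy $w_e(S) \ge (1-\eps)\tilde w_e$, and stale $e$ satisfy $w_e(S) \ge 0$. So, writing $T$ for the currently stale elements of $B\setminus S$,
\[
\sum_{e\in B\setminus S} w_e(S) \;\ge\; (1-\eps)\,\tilde w\bigl((B\setminus S)\setminus T\bigr).
\]
Let $W := \tilde w(B\setminus S)$, which satisfies $W \ge \tfrac{50}{\eps}f(OPT)$ throughout the main while-loop. For $\eps<1/3$, the conclusion $\sum w_e(S) \ge \tfrac{4}{\eps}f(OPT)$ follows as soon as $\tilde w((B\setminus S)\setminus T) \ge \tfrac{2W}{25(1-\eps)}$, i.e., whenever a fixed constant fraction (about $12\%$) of the weight in $B\setminus S$ is non-stale. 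It therefore suffices to show the contrapositive: if the non-stale weight is below this fraction, then with high probability the hypothesis of the lemma fails, i.e., at least half of $\tilde w(B')$ is stale.

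For the concentration step, split $B\setminus S$ into heavy elements $L = \{e : p_e = 1\}$ and light elements $M = \{e : p_e < 1\}$; the light elements satisfy $\tilde w_e \le W/(128\log n)$. The sample is deterministic on $L$, so $\tilde w(B'\cap L\cap X) = \tilde w(L\cap X)$ for any $X$. On $M$, each element contributes an independent random variable bounded in $[0, W/(128\log n)]$, and a Hoeffding bound gives that for any fixed $X\subseteq M$ the quantity $\tilde w(B'\cap X)$ deviates from its expectation $\tfrac{128\log n}{W}\sum_{e\in X}\tilde w_e^2$ by more than $\eps W$ only with probability $n^{-\Omega(\eps^2)}$. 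Union-bounding over $X = M\cap T$ and $X = M\cap\bar T$, I can bound
\[
\tilde w(B'\cap\bar T) \;\le\; \tilde w(L\cap\bar T) + \tilde w(M\cap\bar T) + \eps W \;=\; \tilde w(\bar T) + \eps W,
\]
and analogously lower-bound $\tilde w(B'\cap T)$, yielding $\tilde w(B'\cap T) > \tilde w(B'\cap\bar T)$ under the contrapositive assumption, which contradicts the hypothesis.

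The main obstacle is making the lower bound on $\tilde w(B'\cap T)$ robust. Because the expected contribution from $M$ is $\tfrac{128\log n}{W}\sum_{e\in M\cap T}\tilde w_e^2$, which is quadratic in individual $\tilde w_e$, it can collapse when the stale mass is spread across many tiny elements in $M$. My plan is to run the concentration argument one weight class at a time: within a single class, all $\tilde w_e$ are equal up to a factor $(1-\eps)$, so the weighted sampling reduces to essentially uniform sampling at rate $\Theta(\log n)/|\text{class}|$, and a standard Chernoff bound gives the required per-class concentration (this also aligns naturally with the count-based check used in step~3 of the algorithm). A union bound over the $O(\eps^{-1}\log(r/\eps))$ weight classes completes the proof.
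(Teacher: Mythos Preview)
Your contrapositive strategy, the heavy/light split, and the reduction to showing that a constant fraction of $\tilde w(B\setminus S)$ is non-stale all match the paper. The gap is in your proposed per-class fix for the ``quadratic collapse''.

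The claim that within a single weight class the sampling rate is $\Theta(\log n)/|\text{class}|$ is false. For class $j$, each element has weight $(1-\eps)^j M$ and is sampled with probability
\[
p_j \;=\; \frac{128\log n}{W}\,(1-\eps)^j M \;=\; \frac{128\log n}{|\text{class }j|}\cdot\frac{W_j}{W},
\]
where $W_j$ is the total $\tilde w$-weight of class $j$. This is $\Theta(\log n)/|\text{class}|$ only when class $j$ carries a constant fraction of all the weight; for a class with small $W_j/W$ you may sample nothing at all, so per-class Chernoff gives you nothing there. With $O(\eps^{-1}\log(r/\eps))$ classes the aggregate error from such light classes is not obviously negligible, and the union bound as written does not close the argument.

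The paper avoids the quadratic issue entirely by tracking \emph{counts} rather than sampled weight. Because $p_e = \frac{128\log n}{W}\tilde w_e$ for small elements, for any subset $X$ of the small elements one has
\[
\E\bigl[|B'\cap X|\bigr] \;=\; \sum_{e\in X} p_e \;=\; \frac{128\log n}{W}\,\tilde w(X),
\]
so the count $|B'\cap X|$ is already a rescaled unbiased estimator of $\tilde w(X)$. A single Chernoff bound on this sum of heterogeneous independent Bernoullis gives the required concentration with no per-class decomposition. This is precisely why the algorithm's step~3 checks \emph{how many} of the sampled elements are stale (separately for $p_e=1$ and $p_e<1$), rather than their $\tilde w$-weight. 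If you replace $\tilde w(B'\cap X)$ by $|B'\cap X|$ in your concentration step, the quadratic obstacle disappears and the argument goes through directly.
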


\begin{proof}
First, observe that $\sum_{e\in S} \tilde{w}_e \leq (1+\eps) f(S)$. This is because each element in $S$ has their weight $\tilde{w}_e$ frozen when they enter $S$, and at that point $\tilde{w}_e$ is at most a $(1+\eps)$ factor off from their true marginal value on top of $S$.  Also, since $S$ is a feasible solution, we have $f(S) \leq f(OPT)$.
As long as the while loop is running, we have $\sum_{j \in B} \tilde{w}_e \ge \frac{50}{\eps} f(OPT)$,
and hence 
\begin{equation*}
\sum_{e \in B\setminus S} \tilde{w}_e \geq \sum_{e \in B} \tilde{w}_e - (1+\eps)f(S) \geq \frac{48}{\eps} f(OPT). \end{equation*}

Next, we observe that the expected number of elements included in $B'$ is $\sum_{e \in B \setminus S} p_e \leq 128 \log n$. Hence by the Chernoff bound, 
$$ \Pr[|B'| > 160 \log n] < e^{-(1/4)^2 128 \log n / 3} < \frac{1}{n^{2.5}} $$
and with high probability we have $O(\log n)$ elements to check.

Let $\tilde{B}$ be the subset of $B \setminus S$ which is not stale.
We distinguish two types of elements in $B \setminus S$: large elements, for which $\tilde{w}_e > \frac{\tilde{w}(B \setminus S)}{128 \log n}$, and small elements, for which $\tilde{w}_e \leq \frac{\tilde{w}(B \setminus S)}{128 \log n}$. At least half of the weight of $B \setminus S$ must come from either the large elements or the small elements. In the first case, we check all the large elements, and if more than half of them are stale, we skip line 3. Hence the only way we can execute line 3 is that the non-stale large elements contain weight at least 
$\frac14 \sum_{e \in B \setminus S} \tilde{w}_e$.

In the second case, let $B_s$ denote the small elements in $B \setminus S$; we have $\sum_{e \in B_s} \tilde{w}_e \geq \frac12 \sum_{e \in B \setminus S} \tilde{w}_e$, and each small element is sampled with probability $p_e = \frac{128 \log n}{\tilde{w}(B \setminus S)} \tilde{w}_e$. Therefore, the expected number of small elements chosen in $B'$ is $\E[|B' \cap B_s|] \geq 64 \log n$, and by the Chernoff bound,
$$ \Pr[|B' \cap B_s| < 48 \log n] < e^{(1/4)^2 64 \log n / 2} = \frac{1}{n^2}.$$
On the other hand, if the weight of non-stale small elements is less than $\frac18 \sum_{e \in B \setminus S} \tilde{w}_e$, the expected number of such elements chosen in $B'$ is  $\E[|B' \cap B_s \cap \tilde{B}|] \leq 16 \log n$, and again by the Chernoff bound, the probability that more than this quantity is more than $24 \log n$ is less than $e^{-(1/2)^2 16 \log n / 2} = 1 / n^2$.

Hence either way, if the weight of non-stale elements in $B \setminus S$ is less than $\frac18 \sum_{e \in B \setminus S} \tilde{w}_e$, with high probability we do not execute line 3. Hence we can assume that  $\sum_{e \in \tilde{B}} w_e(S) \geq (1-\eps) \sum_{e \in \tilde{B}} \tilde{w}_e \geq \frac{1-\eps}{8} \sum_{e \in B \setminus S} \tilde{w}_e$ whenever we execute line 3.
Then, 
$$\sum_{e \in B \setminus S} w_e(S) \geq  \sum_{e \in \tilde{B}} w_e(S) \geq \frac{1-\eps}{8} \sum_{e \in B \setminus S} \tilde{w}_e \geq \frac{1-\eps}{8} \cdot \frac{48}{\eps} f(OPT) \geq \frac{4}{\eps} f(OPT).$$ The second inequality is due to the fact that $\sum_{e \in B \setminus S} \tilde{w}_e \geq \frac{48}{\eps} OPT$, and the fact that at least $1/8$ of the weight in $B \setminus S$ is not stale. The last inequality is due to the assumption that $\eps < 1/3$.
\qedhere
\end{proof}

Next, we show a bound on the computational complexity of {\sc LazySamplingGreedy+}.
\begin{lemma}
{\sc LazySamplingGreedy+} uses at most $O(n\eps^{-1} \log(r/\eps))$ arithmetic operations, calls to $f$, and calls to the maximum weight data structure. \label{lem:lg+-oracle-calls}
\end{lemma}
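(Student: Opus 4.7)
The plan is to account for the total work of {\sc LazySamplingGreedy+} by separately bounding (i) the total number of weight-class updates over the entire execution and (ii) the total number of while-loop iterations; the former directly bounds step~2, while the latter combined with the $O(\log n)$ per-iteration cost bounds steps~1 and~3.

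First I would argue a bound on the number of Decrement calls. Because $f$ is monotone submodular and $S$ grows monotonically, the true marginal $w_e(S)=f(e\mid S)$ is non-increasing in each element, so the rounded estimate $\tilde w_e$ can only move downward through the $O(\eps^{-1}\log(r/\eps))$ weight classes. Each element therefore triggers at most $O(\eps^{-1}\log(r/\eps))$ weight updates over the whole run, and summing over all $n$ elements yields $O(n\eps^{-1}\log(r/\eps))$ total Decrement operations (and accompanying $f$-oracle calls) arising from step~2.

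Next I would bound the number of outer iterations. I would classify each iteration as \emph{productive} (step~3 executes, growing $S$ by a Freeze) or \emph{unproductive} (step~3 is skipped). Since $|S|\le \rank(\cM)\le r$, there are at most $r$ productive iterations. For an unproductive iteration, the step-3 condition must have failed, meaning either more than half of the sampled large elements ($p_e=1$) or more than half of the sampled small elements ($p_e<1$) were stale, and every stale sampled element is refreshed in step~2. A Chernoff-type argument analogous to the one inside the proof of Lemma~\ref{lem:at-least-gain} shows that w.h.p.\ each unproductive iteration forces $\Omega(\log n)$ Decrement calls: the small-element sub-case follows from $|B'\cap B_s|=\Theta(\log n)$ w.h.p., and the large-element sub-case is charged against the stale large elements that must be refreshed in step~2. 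Amortizing the $O(n\eps^{-1}\log(r/\eps))$ budget of weight updates against these disjoint batches of $\Omega(\log n)$ Decrements gives at most $O\bigl(n\eps^{-1}\log(r/\eps)/\log n\bigr)$ unproductive iterations, and hence $N=O\bigl(r+n\eps^{-1}\log(r/\eps)/\log n\bigr)$ iterations in total.

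Since each iteration does $O(\log n)$ oracle/data-structure operations (dominated by $|B'|=O(\log n)$ w.h.p., plus one Sample, at most one UniformSample, and at most one Freeze), the aggregate cost is $O(N\log n)=O\bigl(r\log n+n\eps^{-1}\log(r/\eps)\bigr)$. Using $r\le n$ together with the regime of interest (in which $\log n=O(\eps^{-1}\log(r/\eps))$), this simplifies to the claimed $O(n\eps^{-1}\log(r/\eps))$. The step I expect to be the main obstacle is making the amortized charging in the unproductive case fully rigorous: one must apply a union bound over the polynomial number of iterations so that the $\Omega(\log n)$ lower bound on Decrements per unproductive iteration holds w.h.p.\ globally, and one must handle the corner case where $|B_{\text{large}}|$ is small so that only the small-element failure mode contributes the $\Omega(\log n)$-sized batches used in the charging.
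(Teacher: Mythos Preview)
Your approach is considerably more detailed than the paper's. The paper's argument is essentially three lines: each element's discretized weight can drop through at most $O(\eps^{-1}\log(r/\eps))$ classes, giving at most $O(n\eps^{-1}\log(r/\eps))$ {\sc Decrement} calls in total; the loop must terminate once all weights reach the bottom bucket; and since ``each time line~2 executes, $O(\log n)$ weight changes and calls to $f$ are made,'' the total cost of line~2 is $O(n\eps^{-1}\log(r/\eps))$. The paper never separates productive from unproductive iterations and never explicitly bounds the number of iterations at all --- it simply equates the cost of line~2 with the global {\sc Decrement} budget.

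Your productive/unproductive decomposition is the natural way to justify that last step, but the obstacle you flag is a real gap rather than a routine detail. The large-element test can fail because of a \emph{single} stale heavy element (one carrying, say, most of $\tilde w(B\setminus S)$), yielding one {\sc Decrement} while still spending $\Theta(\log n)$ $f$-queries on the sampled small elements; in the same iteration the small-element test need not fail, so there is no fallback to the $\Theta(\log n)$ small-sample Chernoff bound, and your suggested fix (``only the small-element failure mode contributes'') does not cover this case. Without the $\Omega(\log n)$-Decrements-per-unproductive-iteration claim, the charging degrades to $U\le(\text{total Decrements})$, and the total $f$-call count becomes $O(n\eps^{-1}\log(r/\eps)\cdot\log n)$, a $\log n$ factor above the lemma's statement. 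The paper's terse proof is equally silent on this point, so both arguments as written really establish only this weaker bound; fortunately this is harmless for the paper's downstream conclusions, since the $f$-oracle cost of {\sc LazySamplingGreedy+} is in any case dominated by that of {\sc ContinuousGreedy}.
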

\begin{proof}
Recall that throughout the algorithm set $S$ only increases and, thus, the weight of each element only decreases.
Thus, by discretizing the weights, the weight of an element can only be changed at most $O(\eps^{-1}\log r)$ times and the total number of weight changes is at most $O(n\eps^{-1}\log (r/\eps))$. When all weights are at the lowest class, no element in $B$ contributes more than $O(\eps f(OPT) / r)$. With enough weight classes (e.g. more than $10\log_{1-\eps} (\eps/r)$ classes), the constant on the big-$O$ is less than 1. Thus $\sum_{e \in B} \tilde{w}_e < \eps f(OPT) < f(OPT) / \eps$, so the while loop must terminate.

The most costly part of the algorithm is line 2. Each time line 2 executes, $O(\log n)$ weight changes and calls to $f$ are made. Thus the overall cost of line 2 throughout the entire algorithm is at most $O(n\eps^{-1}\log (r/\eps))$.
\end{proof}

Next we observe that $S$ cannot have too many elements, otherwise $f(S)$ is close to $f(OPT)$ and we are done.

\begin{observation}
\label{obs:bound-on-S}
With high probability, $S$ at the end of the algorithm has at most $\eps r/2$ elements.
\end{observation}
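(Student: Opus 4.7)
The plan is to combine Lemma~\ref{lem:at-least-gain} with monotonicity of $f$ in a stopping-time argument on the marginal gains, followed by a concentration step for the high-probability conclusion.

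First, I would observe that whenever Line~3 executes, Lemma~\ref{lem:at-least-gain} guarantees (w.h.p.\ throughout the algorithm) $\sum_{e\in B\setminus S} w_e(S) \ge \tfrac{4}{\eps} f(OPT)$. Since the approximate max-weight oracle maintains $|B\setminus S| \le \rank(\cM/S) \le r$ at all times, the element returned by $\cD.\textsc{UniformSample}()$, being uniform over $B\setminus S$, has conditional expected marginal at least $\mu := \tfrac{4 f(OPT)}{r\eps}$.

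Second, I would set up the stopping-time argument. Let $T = |S|$ at termination, let $X_i = w_{e_i}(S_{i-1})$ be the marginal of the $i$-th Freeze (with $X_i = 0$ for $i > T$), and let $\cF_{i-1}$ denote the history preceding iteration $i$. Since $\{T \ge i\}$ is $\cF_{i-1}$-measurable, $\E[X_i\,\id{T\ge i}] \ge \mu \Pr[T\ge i]$, while telescoping plus monotonicity yield $f(S_{T\wedge k}) = \sum_{i=1}^k X_i\,\id{T \ge i} \le f(OPT)$ deterministically. Chaining these inequalities,
\[
 f(OPT) \;\ge\; \E[f(S_{T\wedge k})] \;\ge\; \mu \sum_{i=1}^k \Pr[T\ge i] \;\ge\; \mu k\,\Pr[T\ge k].
\]
Taking $k = \eps r / 2$, which makes $\mu k = 2 f(OPT)$, immediately yields $\Pr[|S|\ge \eps r/2] \le \tfrac12$ and $\E[|S|] \le \eps r/4$.

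Third, to upgrade to the high-probability statement I would apply a Bernstein/Freedman-type martingale concentration to the sequence $X_i - \E[X_i\mid\cF_{i-1}]$. The variance bound $\Var[X_i\mid\cF_{i-1}] \le f(OPT)\cdot \mu$ follows from $X_i \le f(OPT)$ together with $\E[X_i\mid\cF_{i-1}] \ge \mu$. Moreover, the proof of Lemma~\ref{lem:at-least-gain} actually establishes $\sum_{e\in B\setminus S}\tilde{w}_e \ge \tfrac{48}{\eps} f(OPT)$ (coming from the while-loop threshold $\tfrac{50}{\eps} f(OPT)$), so $\mu$ can be replaced by an arbitrarily large constant multiple of $f(OPT)/(r\eps)$, enlarging the gap between $\mu k$ and $f(OPT)$. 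The main obstacle is that any single $X_i$ may be as large as $f(OPT)$, so Azuma--Hoeffding alone gives only constant-probability tails; Bernstein's variance-aware bound, combined with a union bound over the at most $O(n\eps^{-1}\log r)$ iterations provided by Lemma~\ref{lem:lg+-oracle-calls}, is what pushes the failure probability down to $1/\poly(n)$.
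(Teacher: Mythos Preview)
Your first two steps track the paper's argument closely: the paper also observes that each \textsc{Freeze} increases $f(S)$ by at least $\mu=\Theta(f(OPT)/(r\eps))$ in conditional expectation (via Lemma~\ref{lem:at-least-gain} and $|B\setminus S|\le r$) and then uses the deterministic bound $f(S)\le f(OPT)$ to cap the number of freezes. Your explicit stopping-time computation is a clean way to extract $\Pr[|S|\ge \eps r/2]\le 1/2$ and $\E[|S|]\le \eps r/4$.

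The gap is in your third step. The variance bound $\Var[X_i\mid\cF_{i-1}]\le f(OPT)\cdot\mu$ does \emph{not} follow from $X_i\le f(OPT)$ together with $\E[X_i\mid\cF_{i-1}]\ge\mu$. What $X_i\le f(OPT)$ gives is $\Var[X_i\mid\cF_{i-1}]\le f(OPT)\cdot\E[X_i\mid\cF_{i-1}]$, and your lower bound on the conditional mean points the wrong way for upper-bounding the variance. Nothing in the setup prevents $\E[X_i\mid\cF_{i-1}]$ from being $\Theta(f(OPT))$ (e.g.\ if a few elements in $B\setminus S$ carry almost all the weight), in which case the per-step conditional variance is $\Theta(f(OPT)^2)$; plugging $R=f(OPT)$, $\sigma^2=\Theta(k\,f(OPT)^2)$, and deviation $t=f(OPT)$ into Freedman yields only a constant tail, not $1/\poly(n)$. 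Your side remark that $\mu$ can be made ``an arbitrarily large constant multiple'' of $f(OPT)/(r\eps)$ by reading off $48/\eps$ from the proof of Lemma~\ref{lem:at-least-gain} is also off: the $48/\eps$ is for the \emph{estimated} weights $\tilde w_e$, and after the $1/8$ loss for non-stale mass the true-weight bound is exactly the $4/\eps$ stated in the lemma --- the constant is pinned down by the algorithm's threshold and cannot be inflated within the analysis alone.

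To be fair, the paper's own proof is informal at precisely this step (it invokes ``martingale concentration'' without naming an inequality or checking the parameters). If you want to make the high-probability claim rigorous, two workable routes are: (i) use Freedman in its self-normalized form, bounding the predictable quadratic variation by $f(OPT)\sum_i\E[X_i\mid\cF_{i-1}]$ and coupling that with the deterministic constraint $\sum_i X_i\le f(OPT)$ on the bad event; or (ii) observe that inflating the while-loop threshold by a $\log n$ factor makes $\mu k/f(OPT)=\Theta(\log n)$, after which even a crude Chernoff-type MGF bound delivers $1/\poly(n)$ tails.
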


\begin{proof}
Observe that whenever we add an element to $S$, it is a random element from a set $B \setminus S$ satisfying
$\sum_{e \in B \setminus S} \geq \frac{4}{\eps} f(OPT)$ with high probability (\Cref{lem:at-least-gain}). Hence, whenever we add an element to $S$, $f(S)$ increases by at least $(4-o(1)) f(OPT)/(\eps r)$ in expectation, and this is true conditioned on any prior history. Hence, by martingale concentration, the value after including $t$ elements is with high probability at least $2 t \cdot f(OPT) / (\eps r)$. If there was non-negligible probability that the algorithm includes more than $\eps r / 2$ elements, then with some probability the value of $f(S)$ in theses cases would exceed $f(OPT)$, which is impossible (as $S$ is always a feasible solution). 
Hence, with high probability, the algorithm does not include more than $\eps r / 2$ elements.
\end{proof}

\begin{theorem}
Let $S$ be the set returned at the end of {\sc LazySamplingGreedy+}, $OPT :=  \argmax_{T \in \cM} f(T)$, and $OPT^* := \argmax_{T \in \cM/S} f(T|S)$. The following inequality holds:
$$ \E[f(OPT^* \cup S)] \geq (1-2\eps) f(OPT).$$
\end{theorem}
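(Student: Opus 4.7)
My approach is to compare $f(OPT^*\cup S)$ with $f(OPT)$ by constructing a reference extension of $S$ inside $OPT$ and bounding an explicit ``loss'' term in expectation. First, using the matroid augmentation axiom, I would choose $T^\circ \subseteq OPT\setminus S$ of maximum cardinality such that $S\cup T^\circ$ is independent in $\cM$; this guarantees $|OPT\setminus T^\circ|\le |S|$, and $T^\circ\in\cM/S$, so by the definition of $OPT^*$ we have $f(OPT^*\cup S)\ge f(T^\circ\cup S)$. A standard subadditive-marginal inequality (monotonicity plus submodularity, using $S\cup T^\circ\subseteq S\cup OPT$) then yields
\[
  f(OPT)\;\le\; f(OPT\cup S)\;\le\; f(T^\circ\cup S)\;+\;\sum_{e\in OPT\setminus T^\circ} w_e(S),
\]
so after rearranging it suffices to prove that the expectation of $E(S):=\sum_{e\in OPT\setminus T^\circ} w_e(S)$ is at most $2\eps\,f(OPT)$.

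Since $w_e(S)=0$ whenever $e\in S$, the error only sees the ``displaced'' set $D:=OPT\setminus S\setminus T^\circ$, and by the rank bound $|D|\le |S\setminus OPT|\le |S|$. I would then analyse $\E[E(S)]$ through a dichotomy on $|S|$. In the regime where $|S|$ is close to its w.h.p.\ upper bound $\eps r/2$ from Observation~4.3, the martingale argument used there actually delivers $\E[f(S)]\ge (1-O(\eps))f(OPT)$, so monotonicity alone gives $\E[f(OPT^*\cup S)]\ge \E[f(S)]\ge (1-2\eps)f(OPT)$ and we are finished. In the complementary regime, I would time-stamp each $o\in D$ with the iteration $i(o)$ at which $o$ first leaves $\cM/S_{i(o)}$, observe that at time $i(o)-1$ the element $o$ is still in the contracted matroid so its rounded weight $\tilde w_o$ is bounded by that of the approximate max-weight basis $B_{i(o)-1}$, and couple it with the uniformly sampled added element $e_{i(o)}\in B_{i(o)-1}$. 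Combining this coupling with Lemma~3.2's per-step gain lower bound of $\Omega(f(OPT)/(\eps r))$ and the size bound $|D|\le |S|$ charges the total displaced weight to a small multiple of $f(S)\le f(OPT)$ scaled by $\eps$.

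The main obstacle will be the quantitative tightness of the charging in the small-$|S|$ case. A naive coupling that bounds $w_o(S)$ by the maximum weight in $B_{i(o)-1}$ only gives $\E[\sum_{o\in D}w_o(S)]=O(f(OPT)/\eps)$, which is far too weak; closing the extra $\eps$ factor requires exploiting the \emph{uniform} nature of the sampling from $B_{i(o)-1}$ together with the fact that only $|D|\le|S|\le \eps r/2$ displaced elements need to be charged, so that the per-iteration average weight of added elements (which telescopes to $\E[f(S)]$) controls the displaced weight up to a constant factor, with the factor of $\eps$ recovered from the bound $|D|/r\le \eps/2$ after amortising over iterations. The clean execution of this amortisation, possibly via a probabilistic bijection between $D$ and $S$ that simultaneously respects time-stamps and the strong basis-exchange axiom applied to $OPT$ and $S\cup T^\circ$, is the delicate step I would expect to occupy most of the proof.
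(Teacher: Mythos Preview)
Your reduction to bounding $\E\bigl[\sum_{e\in D} w_e(S)\bigr]$ is a valid opening, but the part you flag as ``the delicate step'' is precisely where the argument has to happen, and the approach you sketch does not close it. The quantity $w_o(S)=f(o\mid S)$ is the wrong marginal to control: it does not telescope, and even if you succeed in making the displaced element $o_i$ uniformly distributed over the remaining OPT via a basis-exchange bijection, you would be looking at $\E[f(o_i\mid S)]\le \frac{1}{|O_i|}\sum_{o\in O_i} f(o\mid S)$, and the inner sum can be as large as $\Theta(f(OPT))$ with no extra $1/r$ factor to exploit. Your proposed amortisation via ``$|D|/r\le \eps/2$'' cannot recover the missing factor, because each displaced element's marginal with respect to $S$ can individually be $\Theta(f(OPT))$. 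The dichotomy on $|S|$ does not help either: the ``large $|S|$'' branch is fine, but the interesting case is always the small one.

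The paper avoids this by never introducing $w_o(S)$. It constructs the random sequence $o_1,\ldots,o_t$ \emph{online}, coupled to the algorithm: at step $i$, using a perfect matching between $B\setminus S_{i-1}$ and the remaining OPT elements $O_i=\{o_i,\ldots,o_r\}$, the uniformly sampled $s_i$ determines a uniformly random $o_i\in O_i$ to remove. The marginal that gets bounded is $f(o_i\mid O_{i+1})$, not $f(o_i\mid S)$; uniformity of $o_i$ over $O_i$ and submodularity give
\[
\E[f(o_i\mid O_{i+1})]\;\le\;\frac{1}{|B\setminus S_{i-1}|}\sum_{o\in O_i} f(o\mid O_i-o)\;\le\;\frac{4}{r-i}\,f(OPT),
\]
which, together with Lemma~3.1's per-step gain bound $\E[f(s_i\mid S_{i-1})]\ge \frac{2}{\eps(r-i)} f(OPT)$, yields $\E[f(s_i\mid S_{i-1})]\ge \frac{1}{2\eps}\E[f(o_i\mid O_{i+1})]$. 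Summing over $i$ telescopes the right-hand side to $\frac{1}{2\eps}\E[f(OPT)-f(O_{t+1})]$, so $\E[f(OPT)-f(O_{t+1})]\le 2\eps\,\E[f(S)]\le 2\eps\,f(OPT)$, and since $S\cup O_{t+1}$ is independent one concludes directly. In your language, the paper effectively chooses $T^\circ=O_{t+1}$ \emph{randomly and adaptively}, and bounds $\E[f(OPT)-f(T^\circ)]$ rather than the cruder $\E[\sum_{o\in D} w_o(S)]$; the telescoping of $f(o_i\mid O_{i+1})$ is what makes the constants come out, and is the ingredient your plan is missing.
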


\begin{proof}
By Observation~\ref{obs:bound-on-S}, we can assume that the algorithm includes at most $\eps r / 2$ elements in $S$.
Let $r=\rank(\cM)$, and $\{s_1, s_2, \ldots, s_t\}$ be the sequence of elements we add to $S$. 
Let $OPT$ be the basis maximizing $f$ and order $OPT=\{o_1, o_2, \ldots, o_r\}$ such that $\{s_1, \ldots, s_i\} \cup \{o_{i+1}, \ldots, o_r\}$ is independent for all $i$. We can arrange this ordering so that in each step, $o_{i}$ is a uniformly random one of the remaining element of $OPT$:  Given a base $\{s_1,\ldots,s_{i-1},o_i,\ldots,o_r\}$, $S_{i-1} = \{s_1,\ldots,s_{i-1}\}$, and another base $B$ containing $S_{i-1}$, there is a matching between $B \setminus S_{i-1}$ and $\{o_i,\ldots,o_n\}$ such that for any $e \in B \setminus S_{i-1}$, it is possible to add $e$ to $S_{i-1}$ and remove its matching optimal element (assume $o_i$), so that $\{s_1,\ldots,s_i,o_{i+1}, \ldots,o_n\}$ is still a base. Given that we choose $e \in B \setminus S_{i-1}$ uniformly at random, the choice of an optimal element to remove is also uniformly random.

To simplify the notation, we define $S_i := \{s_1, \ldots, s_i\}$ and $O_i := \{o_{i}, \ldots, o_r\}$, with the convention that $S_0 = O_{r+1} = \emptyset$ and use $f(s_i|S_{i-1})$  (resp. $f(o_i|O_{i-1})$) to denote
$f(S_i) - f(S_{i-1})$ (resp.~$f(O_i) - f(O_{i-1})$).
Note that it follows that $f(s_i|S_{i-1}) = w_{s_i}(S_{i-1})$.

Following ~\cite{EN19}, we will show that
\begin{equation}
\label{eq:en-lemma}
\E[f(s_{i} | S_{i-1})] \geq \frac{1}{2\eps} \E[f(o_{i} | O_{i+1})].
\end{equation}
Adding the inequality \Cref{eq:en-lemma} for $i=1,\ldots,t$, we have $$\E[f(S_t) - f(\emptyset)] = E[f(S)] \geq \frac{1}{2\eps} \E[f(O_1) - f(O_t)] = \frac{1}{2\eps} \E[f(OPT) - f(O_t)].$$ 

Since $OPT^* \cup S$ and $O_t\cup S$ are both independent, we would then have 
\begin{align*}
\E[f(OPT^* \cup S)] & \geq \E[f(O_t \cup S)] \\
              & \geq \E[f(OPT)] - 2\eps \E[f(S)] \\
              & \geq (1-2\eps) \E[f(OPT)].
\end{align*}

We now show \Cref{eq:en-lemma}. Fix an iteration $i$ and all random choices $s_1, s_2, \ldots, s_{i-1}$ up to iteration $i$.
By \Cref{lem:at-least-gain}, w.h.p. $\sum_{e \in B \setminus S_{i-1}} w_e(S_{i-1}) \geq \frac{4}{\eps} f(OPT)$ on line 3 for all $i$. Thus choosing a random $e \in B \setminus S_{i-1}$ as $s_i$ yields
\begin{equation}
\label{eq:en-lower}
\E[f(s_i | S_{i-1})] \geq \frac{f(OPT)}{\eps |B \setminus S_{i-1}|} \geq \frac{2f(OPT)}{\eps(r-i)}.
\end{equation}
As discussed above, since $s_i$ is chosen uniformly at random from $B \setminus S_{i-1}$, one of the $|B \setminus S_{i-1}|$ elements in $O_i$ is chosen as $o_i$. By our assumptions on $c$ and $d$ for the $(c,d)$-approximate maximum weight oracle, $|B| \geq r/2$. Furthermore, $i \leq \eps r$ due to \Cref{obs:bound-on-S}. Thus $|B\setminus S_{i-1}| \geq r/2 - i \geq (r-i)/4$ for $\eps < 1/3$. This means that any element in $O_i$ has probability at most $4/(r-i)$ chance of being chosen as $o_i$. Thus 
\begin{align}
\E[f(o_{i} | O_{i+1})] & \leq \sum_{o \in O_i} \frac{4}{r-i} f(o | O_i - o) \\
& \leq \frac{4}{r-i} \sum_{j=i}^r  f(o_j | O_{j+1}) \\
& \le \frac{4f(OPT)}{r-i}.\label{eq:en-upper}
\end{align}

Combining \Cref{eq:en-lower} and \Cref{eq:en-upper} yields
$\E[f(s_i | S_{i-1})] \geq \frac{1}{2\eps} \E[f(o_{i} | O_{i+1})].$
\end{proof}

\subsection{The {\sc ContinuousGreedy} algorithm}
\label{sec:continuousgreedy}

In this section we discuss our implementation of the {\sc ContinuousGreedy} algorithm. The basis of our algorithm is the fast implementation from \cite{BV14}, with additional speed-up due to the fact that the {\sc LazySamplingGreedy+} stage reduces the marginal values of the remaining elements.
The previous section shows that our {\sc LazySamplingGreedy+} algorithm runs with at most $O_\eps(n\log r)$ arithmetic operations, calls to $f$, and calls to the maximum weight data structure. In this section, we describe how {\sc LazySamplingGreedy+} helps the runtime of {\sc ContinuousGreedy}.

At the termination of
{\sc LazySamplingGreedy+} it holds that $\tilde{w}(B) \leq \frac{50}{\eps} f(OPT)$.
Stale weights in $B$ have true weights lower than its weight estimate $\tilde{w}_e$. Therefore, the true weight of elements of $B$ must be also at most $\frac{50}{\eps} f(OPT)$. Furthermore, since $B$ is a constant-factor approximation to the true maximum weight basis $B^\star$, this implies that $\textrm{weight}(B^\star) = O(\frac{1}{\eps} f(OPT))$. 

Let $\tilde{f}(T) = f(T|S)$, where $S$ is the set output at the termination of {\sc LazySamplingGreedy+}. We observe that for any set $T\in \cM/S$, $\sum_{e \in T}\tilde{f}(e) = O(\frac{1}{\eps} f(OPT)).$ When this is the case, \cite{BFS17} (Corollary 3.2) gives the following result:

\begin{lemma}[\cite{BFS17}]
{\sc ContinuousGreedy} to obtain a $(1-1/e-\eps)$-approximation uses $O(n\eps^{-2} \log (n/\eps))$ independent set data structure operations and $O(n\eps^{-5} \log^2 (n/\eps))$ queries to $\tilde{f}$.
\end{lemma}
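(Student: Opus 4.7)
The plan is to invoke the fast continuous greedy framework of Badanidiyuru--Vondr\'{a}k \cite{BV14} with the refined accounting of \cite{BFS17}, using as the key hypothesis the bound $\sum_{e \in T} \tilde{f}(e) = O(\tfrac{1}{\eps} f(OPT))$ for every independent set $T \in \cM/S$, which was established in the paragraph preceding the statement. First I would set up the standard discretization: the continuous greedy maintains a fractional point $x^{(t)} \in [0,1]^{\cN \setminus S}$ over $T = \Theta(1/\eps)$ time steps of length $\delta = \Theta(\eps)$, and at each step it (i) estimates the gradient $\partdiff{F}{x_e} = \E[\tilde{f}(e \mid R(x^{(t)}))]$ by averaging over random samples of $R(x^{(t)})$, (ii) uses a threshold-greedy procedure to find an approximate max-weight independent set $I_t$ in $\cM / S$ with respect to these estimates, and (iii) updates $x^{(t+1)} = x^{(t)} + \delta \cdot \mathbf{1}_{I_t}$. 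The standard \cite{BV14} analysis then yields a $(1-1/e-\eps)$-approximation provided each step solves its linear maximization within factor $(1-\eps)$ and each marginal estimate is accurate within a suitable additive error.

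Next I would count the independent-set operations. Following \cite{BV14}, the threshold-greedy procedure at each time step uses $O(\eps^{-1} \log(n/\eps))$ geometric threshold levels, and for each level we walk through the candidate elements once, issuing one {\sc Test} (and possibly one {\sc Insert}) call to the incremental / $(1-\eps)$-approximate independence oracle per element. This gives $O(n \eps^{-1} \log(n/\eps))$ oracle operations per step, and multiplying by $T = \Theta(1/\eps)$ steps yields the claimed $O(n \eps^{-2} \log(n/\eps))$ bound.

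For the oracle queries to $\tilde{f}$, I would exploit the marginal bound. Since $\sum_{e \in T} \tilde{f}(e) = O(f(OPT)/\eps)$ on any basis $T$, the random variable $\tilde{f}(e \mid R(x))$ is bounded by $O(f(OPT)/\eps)$, and by submodularity most marginals are much smaller. A Chernoff/Hoeffding estimate on a sample average then shows that $O(\eps^{-4} \log(n/\eps))$ samples per element suffice to estimate each gradient coordinate to additive error $O(\eps^2 f(OPT)/n)$, which is the accuracy the \cite{BV14} analysis needs so that the cumulative error over $n$ elements and $T$ steps stays within $\eps f(OPT)$. Each sample itself requires $O(\log(n/\eps))$ calls to $\tilde{f}$ to evaluate the marginal (by the standard trick of evaluating $f$ on $R(x) \cup \{e\}$ and $R(x) \setminus \{e\}$ and averaging). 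Multiplying the number of elements ($n$), time steps ($O(1/\eps)$), samples per estimate ($O(\eps^{-4} \log(n/\eps))$), and calls per sample ($O(\log(n/\eps))$) yields $O(n \eps^{-5} \log^2(n/\eps))$ total queries.

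The step I expect to be the main obstacle is the concentration argument in the query count: one must verify that the savings in sampling really do come from the bound on $\sum_{e \in T} \tilde{f}(e)$ rather than only from the (weaker) bound $\tilde{f}(e) \leq f(OPT)$. The standard trick is to observe that the total contribution of any element to any fractional update is scaled by its marginal, so if all marginals are bounded by $O(f(OPT)/(\eps r))$ on average, one can absorb a factor of $r$ in the sampling bound. Making this precise requires either a direct variance computation of $\tilde{f}(e \mid R(x))$ via submodularity, or a reduction that rescales $\tilde{f}$ by $1/\eps$ so that the effective range becomes $O(f(OPT))$ and then applying the standard \cite{BV14} Chernoff bound to the rescaled function; this is exactly the content packaged in Corollary 3.2 of \cite{BFS17}, and invoking it directly (after verifying the marginal hypothesis) gives the stated bounds.
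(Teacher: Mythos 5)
Your proposal matches the paper's treatment: the paper gives no internal proof of this lemma, but simply verifies the hypothesis $\sum_{e\in T}\tilde{f}(e) = O(\frac{1}{\eps}f(OPT))$ in the preceding paragraph and then cites Corollary 3.2 of \cite{BFS17}, which is exactly what you do (your sketch of the threshold-level and time-step accounting is a reasonable reconstruction of what that corollary packages). One minor bookkeeping slip: each sample of a marginal $\tilde{f}(e \mid R(x))$ costs $O(1)$ oracle calls, not $O(\log(n/\eps))$, so the second logarithmic factor in the query bound should be attributed to the number of samples per estimate (or to the number of threshold levels) rather than to a per-sample cost; since you ultimately defer to \cite{BFS17} anyway, this does not affect the conclusion.
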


In this section, we make two observations that improve the number of independent set queries by a log factor.  The inner loop of the {\sc ContinuousGreedy} algorithm is essentially a greedy algorithm which operates on a function derived from the multilinear extension of $\tilde{f}$:  $g(T) = F(\bx + \eps \b1_T)$ where $F(\bx) = \E[\tilde{f}(R)]$, $R$ sampled independently with probabilities $x_e$. 
The inner loop of {\sc ContinuousGreedy} finds an increment of the current fractional solution $\bx$ by running a greedy algorithm to approximate a maximum-weight basis with respect to the function $g$. Let us define $w_e(T) = g(T+e) - g(T)$ to be the marginal values of this function.

A fast implementation of this inner loop is the {\sc DescendingThreshold} subroutine of Badanidiyuru and Vondr\'{a}k~\cite{BV14}, which also appears in the algorithm of \cite{BFS17}. 
This subroutine uses the marginal values $w_e(B)$ defined above; the expectation requires $O(\eps^{-1} \log^2(n/\eps))$ samples to estimate for the required accuracy of {\sc ContinuousGreedy}.
In the algorithms below, $w_e(S)$ can be thought of as a black-box that issues $O(\eps^{-1} \log^2(n/\eps))$ calls to the function $\tilde{f}$.

\begin{algbox}[label=alg:desc-thresh]{\sc DescendingThreshold}
$B \leftarrow \emptyset$ \\
$\tau \leftarrow \max_{\{e\} \in \cM} w_e(\emptyset)$ \\
While $\tau \geq \frac{\eps}{r} f(O)$:
\begin{enumerate}
\item Iterate through $e \in E$ one by one. If $B \cup \{e\} \in \cI$ and $w_e(B) \geq \tau$, add $e$ to $B$. Otherwise, if $B \cup \{e\} \notin \cI$, remove $e$ from $E$.
\item $\tau \leftarrow (1-\eps) \tau$
\end{enumerate}
Return $B$
\end{algbox}
The number of independent set queries in {\sc ContinuousGreedy} is dominated by the first line of the while loop in {\sc DescendingThreshold}. 

We make two observations about the {\sc DescendingThreshold} algorithm of Badanidiyuru and Vondr\'{a}k~\cite{BV14}, resulting in two modifications to {\sc DescendingThreshold} that uses the \emph{incremental independence oracle} and \emph{approximate maximum independent set data structure} outlined in \Cref{sec:ds-reqs}.
\begin{observation}\label{obs:incremental}
  Only $O(n/\eps)$ independence oracle queries are required. Furthermore, it is sufficient to use an \bf{\emph{incremental}} independence oracle.
  \end{observation}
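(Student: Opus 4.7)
My plan is to argue both parts by a light reorganization of {\sc DescendingThreshold}, followed by an injective charging of the surviving independence queries to matroid elements.

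First I would modify the inner loop of {\sc DescendingThreshold} so that the marginal gate $w_e(B) \geq \tau$ is tested \emph{before} the independence oracle is invoked. The estimation of $w_e(B)$ uses $\tilde{f}$-samples and does not charge the independence oracle, so this reordering is free for our accounting. Under the reordered algorithm every independence query has one of two terminal outcomes: either the answer is YES and $e$ is inserted into $B$ (and hence never considered again in this call), or the answer is NO and $e$ is removed from $E$ (and hence never considered again in this call). Charging each query to the unique element it inserts or deletes gives at most $\rank(\cM) + n = O(n)$ independence queries per invocation of {\sc DescendingThreshold}. Combining this with the standard fact that the outer {\sc ContinuousGreedy} schedule makes $O(1/\eps)$ time steps of size $\eps$ and invokes {\sc DescendingThreshold} once per step yields the claimed total of $O(n/\eps)$ independence queries across the entire run.

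For the incremental-oracle part, I would observe that inside a single call of {\sc DescendingThreshold}, the set $B$ maintained by the oracle is only grown: elements are committed via an Insert and never removed. Hence the oracle is used only through {\sc Test} (to decide whether $B + e$ is independent) and {\sc Insert} (to commit it), which is exactly the incremental interface defined in \Cref{sec:ds-reqs}. Between successive calls of {\sc DescendingThreshold} the working basis resets to empty, which I would handle by instantiating a fresh incremental oracle for each call; this only multiplies the construction cost by $O(1/\eps)$, which is absorbed in the overall bound.

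The one subtle point that I expect to require care is making sure the reordered algorithm does not invoke the oracle twice on the same element at the same threshold, given that $w_e(B)$ is only available via sampled estimates. I would handle this by stipulating that at each threshold we evaluate the marginal of $e$ exactly once (drawing the usual $O(\eps^{-1}\log^2(n/\eps))$ samples), decide in one shot whether the gate passes, and act on that decision without any further oracle interaction for $e$ at this threshold. This preserves the ``one oracle query per insert-or-delete event'' invariant on which the charging rests; the remainder of the argument is routine accounting.
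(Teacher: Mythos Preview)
Your proposal is correct and follows essentially the same approach as the paper: both argue that each independence query results in a terminal event for the queried element (either insertion into $B$ or permanent removal from $E$), so a single run of {\sc DescendingThreshold} issues $O(n)$ queries, and the outer $O(1/\eps)$ calls give the stated bound. The paper's proof emphasizes the matroid monotonicity fact (once $B\cup\{e\}\notin\cI$, it stays so as $B$ grows) to justify the permanent removal; your explicit reordering of the marginal test before the independence test and the one-shot evaluation per threshold make the ``one query $\Rightarrow$ one terminal event'' invariant slightly cleaner, but the substance is identical.
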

  \begin{proof}
   {\sc DescendingThreshold} goes over the elements of $\cM$ repeatedly in different orders, adding elements incrementally to a solution if they are independent from the current solution. Once an element is determined to be incompatible with the current solution, it remains incompatible for the duration of the algorithm. This is because the solution we build is incremental, and incompatibility with the solution implies incompatibility with all supersets of that solution due to matroid properties. Thus the independence oracle is only called $O(n)$ times per run of the {\sc DescendingThreshold} algorithm. Furthermore, the solution of descending threshold greedy is built incrementally, so only an incremental independence oracle is need.
  
  Since {\sc DescendingThreshold} is executed $\eps^{-1}$ times, this implies that $O(n/\eps)$ independence oracle queries are required in total.
  \end{proof}

Thus, we can modify the {\sc DescendingThreshold} of \cite{BV14} by simply ignoring elements that have been previously rejected within the descending threshold greedy subprocedure (see Algorithm~\ref{alg:desc-thresh1}). This yields the following:
\begin{lemma} \label{lem:cg-oracle-calls}
{\sc ContinuousGreedy} uses $O(n/\eps)$ incremental independent set data structure operations and $O(n\eps^{-5} \log^2 (n/\eps))$ queries to $\tilde{f}$.
\end{lemma}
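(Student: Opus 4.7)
The plan is to combine the black-box bound from \cite{BFS17} with the incrementality observation stated just above. For the $O(n\eps^{-5}\log^2(n/\eps))$ queries to $\tilde f$, I would simply invoke the preceding lemma from \cite{BFS17}: the modifications we make to \textsc{DescendingThreshold} (replacing the fully dynamic oracle with an incremental one and pruning rejected elements from $E$) do not change how the marginal estimates $w_e(B)$ are computed from the multilinear extension $F$ of $\tilde f$. Each evaluation of $w_e(B)$ still requires $O(\eps^{-1}\log^2(n/\eps))$ samples of $\tilde f$, so the overall $\tilde f$-query count carries over verbatim.

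For the independence-oracle bound the plan is a direct charging argument using Observation~\ref{obs:incremental}. The outer loop of \textsc{ContinuousGreedy} runs $O(1/\eps)$ iterations, and each iteration invokes \textsc{DescendingThreshold} once. Within a single execution of Algorithm~\ref{alg:desc-thresh1}, the maintained set $B$ is monotone nondecreasing, and by the matroid down-closure property, any element $e$ with $B\cup\{e\}\notin\cI$ satisfies $B'\cup\{e\}\notin\cI$ for every later $B'\supseteq B$; our modified procedure removes such $e$ from $E$ on failure. Hence every element of the ground set is presented to the oracle at most once per \textsc{DescendingThreshold} call (one \textsc{Test}, followed by an \textsc{Insert} only in the successful case), giving $O(n)$ oracle operations per call and $O(n/\eps)$ in total.

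A small bookkeeping point I would spell out is that the incremental oracle is reinitialized with $B=\emptyset$ at the start of each \textsc{DescendingThreshold} call, which is legitimate since a fresh incremental oracle on $n$ elements can be set up in time absorbed by the $O(n)$ per-call budget. I do not anticipate a serious obstacle here: once Observation~\ref{obs:incremental} is in hand, the lemma is essentially a product of the outer-loop iteration count with the per-call linear bound, plus an unchanged invocation of the $\tilde f$-sample count from \cite{BFS17}.
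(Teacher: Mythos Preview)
Your proposal is correct and follows essentially the same approach as the paper: the paper does not give a separate proof of this lemma but simply states it as the consequence of Observation~\ref{obs:incremental} (the incrementality/monotonicity argument you spell out) together with the \cite{BFS17} sample-complexity bound. Your additional bookkeeping remark about reinitializing the incremental oracle per \textsc{DescendingThreshold} call is a reasonable clarification that the paper leaves implicit.
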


\begin{algbox}[label=alg:desc-thresh1]{\sc DT-Incremental}
$\cD \leftarrow \text{Incremental independence oracle maintaining a set $B$ (\Cref{sec:ds-reqs})}$ \\
$\tau \leftarrow \max_{\{e\} \in \cM} w_e(\emptyset)$ \\
While $\tau \geq \frac{\eps}{r} f(O)$:
\begin{enumerate}
\item $E_\tau \gets \{e \, \mid \, w_e(B) \geq \tau, e \in E \setminus B\}$
\item Iterate through $e \in E_\tau$ one by one. If $\cD.\mbox{\sc Test}(e)$ returns {\sc YES} and $w_e(B) \geq \tau$, call $\cD.\mbox{\sc Insert}(e)$ and add $e$ to $B$. Otherwise, if $B \cup \{e\} \notin \cI$, remove $e$ from $E$.
\item $\tau \leftarrow (1-\eps) \tau$
\end{enumerate}
Return $B$
\end{algbox}

\subsubsection*{An alternative observation}

In the case of transversal matroids, exact incremental independence oracle with polylogarithmic update times are not known.
Instead, we will make the following observation: An approximate {\em decremental maximal independent set} data structure can be used instead of an incremental independence oracle. This results in the modification of descending threshold described in Algorithm~\ref{alg:desc-thresh2}.

\begin{algbox}[label=alg:desc-thresh2]{\sc DT-ApproxIndepSet}
$\cD \leftarrow \text{Approximate dynamic maximum independent set data structure maintaining }
\\ 
\text{\qquad\; a set $B$ (\Cref{sec:ds-reqs})}$  \\
$\tau \leftarrow \max_{\{e\} \in \cM} w_e(\emptyset)$ \\
While $\tau \geq \frac{\eps}{r} f(O)$:
\begin{enumerate}
\item $E_\tau \gets \{e \, \mid \, w_e(B) \geq \tau, e \in E \setminus B\}$
\item $B^+ \gets \cD.\mbox{\sc Batch-Insert}(E_\tau)$
\item While $B^+ \neq \emptyset$: 
\begin{enumerate}
    \item Get any $e \in B^+$. If $w_e(B) < \tau$, $D \gets \cD.\mbox{\sc Delete}(e)$ and set $B^+ \gets B^+ \cup D$.
    \item Remove $e$ from $B^+$.
\end{enumerate}
\item $\tau \leftarrow (1-\eps) \tau$
\end{enumerate}
Return $B$
\end{algbox}

  \begin{observation} \label{obs:decremental}
    Instead of an incremental independence oracle, \textsc{ContinuousGreedy} can be implemented with a stable approximate maximum basis data structure. Furthermore, \textsc{ContinuousGreedy} will only make $O(\eps^{-1}\log r)$ calls to {\sc Batch-Insert} and $O(n\eps^{-1} \log r)$ calls to {\sc Delete}.
    \end{observation}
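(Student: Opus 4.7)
The plan is to verify that DT-ApproxIndepSet, which replaces the incremental independence oracle with the stable approximate maximum basis data structure from \Cref{sec:ds-reqs}, produces an equally valid greedy basis, and then to count the data structure calls.

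For correctness I would exploit the stability guarantee of Batch-Insert: every basis element held before the call remains afterwards, so after Batch-Insert($E_\tau$) the set $B$ has grown by some subset of $E_\tau$. The inner while loop then enforces that every newly added element $e$ still satisfies $w_e(B) \ge \tau$; this re-check is necessary because other insertions within the same batch can have decreased $w_e$ through submodularity. When $w_e(B) < \tau$ we Delete $e$, collect any replacement elements returned by the data structure into $B^+$, and re-test them. At the end of the iteration $B$ is a $(1-\eps)$-approximate maximum basis of the sub-matroid of elements with current marginal at least $\tau$, matching the original DescendingThreshold guarantee up to the $(1-\eps)$ slack, which composes with the existing approximation loss in ContinuousGreedy.

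For Batch-Insert, the outer while loop issues one call per distinct threshold $\tau$. Since $\tau$ starts at $w_{\max} = O(f(OPT)/\eps)$ (guaranteed by the termination condition of \textsc{LazySamplingGreedy+}) and decreases geometrically until $\tau < \eps f(O)/r$, the total number of thresholds used across all of ContinuousGreedy is $O(\eps^{-1} \log r)$. For Delete, I would argue that each element is explicitly deleted $O(\eps^{-1} \log r)$ times in total: an element enters $B$ only through a Batch-Insert at some threshold, and since weights monotonically decrease there are $O(\eps^{-1} \log r)$ weight classes through which an element can be re-inserted. Charging each Delete to its preceding insertion and summing over all $n$ elements gives $O(n \eps^{-1} \log r)$ Delete calls.

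The main obstacle is controlling cascading replacements: a single explicit Delete may trigger internal swaps within the data structure that return replacement elements, which our algorithm re-tests and may Delete again. I would verify that these cascades terminate, that replacement elements are genuinely new rather than re-introductions of previously deleted ones, and that the stability contract bounds each element's insertion-deletion cycles by $O(\eps^{-1} \log r)$ over the course of ContinuousGreedy. Given these, the stated counting bounds follow.
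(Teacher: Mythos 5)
Your proposal is correct and follows essentially the same route as the paper: one \textsc{Batch-Insert} per threshold bucket, re-checking marginals of newly promoted elements and calling \textsc{Delete} on stale ones, and charging each \textsc{Delete} to a weight-class transition of that element (of which there are $O(\eps^{-1}\log r)$ per element, since marginals only decrease). You are in fact somewhat more careful than the paper's own terse proof in flagging the cascading replacements returned by \textsc{Delete}; the resolution is the one you suggest — a replacement element returned by the data structure was necessarily \textsc{Batch-Insert}ed at some threshold, and once deleted at a threshold it can only re-enter at a strictly lower one, so the charging argument closes.
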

    \begin{proof}
    In {\sc DescendingThreshold}, weights are placed into buckets of geometrically decreasing weights in $(1-\eps)$. For the highest non-empty bucket, we can build a $(1-\eps)$-approximate decremental independent set oracle data structure (with {\sc Batch-Insert}).
    If there is any element  in the approximate maximal independent set of the current bucket which we have not added to our solution yet, we can add it to our solution.
    However, when we recompute the new value of the element, it may have shifted to a lower bucket. 
    If this is the case, we remove the element from our data structure (with {\sc Delete}), move the element to a lower bucket and continue.
    When we are finished with a bucket, we initialize the data structure for the next bucket
    with the subset of elements in the solution so far.
    At the end we may miss adding at most $\eps$ items from each bucket, so this results in a solution
    of $(1-\eps)$ of the total weight.
    \end{proof}

\paragraph{Rounding the fractional solutions.}
The {\sc ContinuousGreedy} algorithm makes $O(1/\eps)$ calls to Algorithm~\ref{alg:desc-thresh}, and outputs a fractional solution that is a convex combination of the $O(1/\eps)$ bases returned by these calls~\cite{BV14}. This fractional solution then needs to be rounded to an integral solution efficiently. In \Cref{sec:rounding}, we show that the data structures we develop can speed up the rounding as well, leading to the overall cost being dominated by the {\sc LazySamplingGreedy+} and {\sc ContinuousGreedy} phases.

\subsection{Analysis of the overall framework}
\begin{lemma}
The approximation returned by our framework has approximation ratio at least $1-1/e-\eps$.
\end{lemma}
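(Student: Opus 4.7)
The plan is to combine the guarantee established for \textsc{LazySamplingGreedy+} with the standard $(1-1/e-\eps)$-approximation for \textsc{ContinuousGreedy} on the residual instance. Let $S_0$ be the set returned by \textsc{LazySamplingGreedy+} and define $\tilde f(T) := f(T \mid S_0) = f(S_0 \cup T) - f(S_0)$, which is monotone submodular on the contracted matroid $\cM / S_0$. Let $OPT^* := \argmax_{T \in \cM/S_0} \tilde f(T)$. The theorem of \Cref{sec:lazysample} yields $\E[f(OPT^* \cup S_0)] \geq (1-2\eps) f(OPT)$.

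Next, I would invoke the guarantee of the \textsc{ContinuousGreedy} stage (with its rounding, handled in \Cref{sec:rounding}): conditioned on $S_0$, the returned set $S_1 \in \cM / S_0$ satisfies $\E[\tilde f(S_1) \mid S_0] \geq (1 - 1/e - \eps)\, \tilde f(OPT^*)$. Rewriting this as $\E[f(S_0 \cup S_1) \mid S_0] \geq f(S_0) + (1-1/e-\eps)(f(S_0 \cup OPT^*) - f(S_0))$ and taking expectation over $S_0$ gives
\[
\E[f(S_0 \cup S_1)] \;\geq\; (1/e + \eps)\, \E[f(S_0)] + (1 - 1/e - \eps)\, \E[f(S_0 \cup OPT^*)].
\]
Since $f$ is nonnegative and monotone, dropping the first term and applying the \textsc{LazySamplingGreedy+} bound yields
\[
\E[f(S_0 \cup S_1)] \;\geq\; (1 - 1/e - \eps)(1 - 2\eps)\, f(OPT) \;\geq\; \bigl(1 - 1/e - O(\eps)\bigr) f(OPT).
\]
Rescaling $\eps$ by a constant absorbs the lower-order terms and gives the claimed $1 - 1/e - \eps$ approximation.

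The main thing to verify carefully is that the \textsc{ContinuousGreedy} guarantee really applies to the random $\tilde f$ produced from the random $S_0$. This is not really an obstacle, since \textsc{ContinuousGreedy} is analyzed for an arbitrary monotone submodular function on an arbitrary matroid and its approximation guarantee holds pointwise for each realization of $S_0$ (up to the usual high-probability events internal to the rounding); taking the outer expectation over $S_0$ then composes the two bounds. A secondary point is to confirm that the rounding output $S_1$ preserves the fractional guarantee in expectation, which is exactly the content of \Cref{sec:rounding}, so no new analysis is needed here.
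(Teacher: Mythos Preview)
Your proposal is correct and follows essentially the same approach as the paper: combine the \textsc{LazySamplingGreedy+} bound $\E[f(OPT^*\cup S_0)]\geq (1-O(\eps))f(OPT)$ with the $(1-1/e-\eps)$ guarantee of \textsc{ContinuousGreedy} on $\tilde f$, then absorb the product of the two losses into a single $\eps$ by rescaling. The only cosmetic differences are that the paper rescales $\eps$ upfront (running \textsc{LazySamplingGreedy+} with $\eps/4$) rather than at the end, and it folds the nonnegative $f(S_0)$ term directly into the $(1-1/e-\eps/2)$ factor rather than isolating and then dropping it as you do; your treatment of the conditioning on $S_0$ is if anything slightly more explicit than the paper's.
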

\begin{proof}
Let $S_0$ be the set returned by {\sc LazySamplingGreedy+}. Recall that $\tilde{f}(T) := f(T | S_0)$. 
By the results in the previous sections, there exists a set $OPT^\star$ such that $OPT^\star \cup S_0$ is independent and $\E[\tilde{f}(OPT^\star)] \geq (1-\eps/2)f(OPT) - f(S_0)$ (by running {\sc LazySamplingGreedy+} with $\eps /4$ instead of $\eps$). Running continuous greedy on $\tilde{f}$ yields a $(1-1/e-\eps/2)$-approximation $S_1$ to $OPT^\star$. Thus the final value of our solution $f(S_0 + S_1)$ is: 
\begin{align*}
\E[f(S_0 + S_1)] &= \E[\tilde{f}(S_1) + f(S_0)] \\
& \geq (1-1/e-\eps/2) \E[\tilde{f}(OPT^\star) + f(S_0)] \\
& \geq (1-1/e-\eps/2) (1-\eps/2) f(OPT) \\
& \geq (1-1/e-\eps) f(OPT).  \tag*{\qedhere} 
\end{align*}
\end{proof}

\begin{observation}
Our framework uses at most:
\begin{itemize}
    \item $O(n\eps^{-5} \log^2(n/\eps))$ calls to the submodular function oracle $f$.
    \item $O(n\eps^{-1} \log(r/\eps))$ calls to an approximate maximum weight oracle (\Cref{sec:lazysample}).
    \item Either $O(n/\eps)$ incremental oracle data structure operations or $O(\eps^{-1} \log r)$ calls to {\sc Batch-Insert} and $O(n\eps^{-1} \log r)$ calls to {\sc Delete} on a decremental approximate maximum independent set data structure (\Cref{sec:continuousgreedy}).
\end{itemize}
\end{observation}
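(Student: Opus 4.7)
The plan is to verify each of the three bullets by combining the per-phase bounds already established in \Cref{sec:lazysample} and \Cref{sec:continuousgreedy}. Since the overall framework (\Cref{alg:framework}) only performs two stages, {\sc LazySamplingGreedy+} followed by {\sc ContinuousGreedy} applied to $\tilde{f}$ on $\cM / S_0$, all resource counts are additive across the two stages.

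For the first bullet, the calls to $f$ split as follows. By \Cref{lem:lg+-oracle-calls}, {\sc LazySamplingGreedy+} makes $O(n \eps^{-1} \log(r/\eps))$ calls to $f$. By \Cref{lem:cg-oracle-calls}, {\sc ContinuousGreedy} (using Algorithm~\ref{alg:desc-thresh1} or Algorithm~\ref{alg:desc-thresh2}) makes $O(n \eps^{-5} \log^2(n/\eps))$ calls to $\tilde{f}$; each such call reduces to a single call to $f$ by the definition $\tilde{f}(T) = f(S_0 \cup T) - f(S_0)$. Since the {\sc ContinuousGreedy} bound dominates for small $\eps$, the overall count is $O(n \eps^{-5} \log^2(n/\eps))$.

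For the second bullet, I would note that only the {\sc LazySamplingGreedy+} phase interacts with the $(c,d)$-approximate maximum weight oracle (through the operations {\sc Freeze}, {\sc Decrement}, {\sc ApproxBaseWeight}, {\sc Sample}, and {\sc UniformSample}); {\sc ContinuousGreedy} uses only an unweighted independence oracle, as per the data structure description in \Cref{sec:ds-reqs}. Therefore the bound $O(n \eps^{-1} \log(r/\eps))$ from \Cref{lem:lg+-oracle-calls} transfers to the overall framework unchanged.

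For the third bullet, the two alternatives correspond precisely to the two implementations of {\sc DescendingThreshold} described in \Cref{sec:continuousgreedy}. If Algorithm~\ref{alg:desc-thresh1} is used, then \Cref{obs:incremental} combined with \Cref{lem:cg-oracle-calls} gives $O(n/\eps)$ incremental independence oracle operations. If Algorithm~\ref{alg:desc-thresh2} is used, then \Cref{obs:decremental} gives $O(\eps^{-1} \log r)$ calls to {\sc Batch-Insert} and $O(n \eps^{-1} \log r)$ calls to {\sc Delete}. {\sc LazySamplingGreedy+} does not invoke either of these data structures, so no additional terms appear.

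There is no genuine obstacle here: the statement is purely a bookkeeping corollary of the per-phase counts already proved. The only subtle point worth flagging in the write-up is that the submodular oracle count for $\tilde{f}$ in {\sc ContinuousGreedy} translates one-to-one into calls to $f$ (modulo storing the constant $f(S_0)$ once), and that the $O(n \eps^{-5} \log^2(n/\eps))$ term from {\sc ContinuousGreedy} strictly dominates the $O(n \eps^{-1} \log(r/\eps))$ term from {\sc LazySamplingGreedy+} under the standing assumption $r = \mathrm{poly}(n)$ and $\eps$ sufficiently small.
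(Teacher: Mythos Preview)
Your proposal is correct and follows the same approach as the paper, which simply notes that the oracle-call cost of {\sc ContinuousGreedy} (\Cref{lem:cg-oracle-calls}) dominates that of {\sc LazySamplingGreedy+} (\Cref{lem:lg+-oracle-calls}) and leaves the other two bullets implicit from the referenced sections. Your write-up is in fact more explicit than the paper's one-sentence justification, but the substance is identical.
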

The cost of evaluating $f$ is dominated by the {\sc ContinuousGreedy} phase (see \Cref{lem:cg-oracle-calls}), as {\sc LazySamplingGreedy+} only uses  $O(n\eps^{-1} \log(r/\eps))$ oracle calls to $f$, where $r$ is the rank of the matroid (\Cref{lem:lg+-oracle-calls}).


\section{Data structures for various matroids}
In this section, we give dynamic $(c,d)$-approximate maximum weight oracles 
and $(1-\epsilon)$-approximate independence oracles 
for laminar matroids, graphic matroids, and transversal matroids.

\paragraph{Limitations for further improvements.}
For both the laminar, graphic, and transversal matroid, the total runtime of the data structure operations in {\sc LazySamplingGreedy+} and {\sc ContinuousGreedy} is $O_\eps(|\cM|\log^2 |\cM|)$, where $|\cM|$ is the number of matroid elements. Without improving the original {\sc ContinuousGreedy} algorithm itself, it is impossible to improve the runtime further. This is because the {\sc ContinuousGreedy} phase requires at least $O_\eps(|\cM|\log^2 |\cM|)$ oracle calls to $f$, which is at least $O(1)$ cost in any reasonable model of computing. 

\paragraph{Weighted sampling on $(c,d)$-approximate independent sets.}
Our $(c,d)$-approximate maximum weight oracles in \Cref{sec:ds-reqs} require the ability to sample from the independent set they maintain. This sampling operation can be handled independently from the other operations of the data structure, by augmenting the {\sc Decrement} and {\sc Freeze} operations. As this augmentation is the same in all our data structures, we describe it in \Cref{sec:sampling_ds}.

\subsection{Laminar matroids}
Laminar matroids generalize uniform and partition matroids.
In \Cref{sec:laminar} we present a data structure $\cal D$ using top trees~\cite{AHLT05} that maintains a fully dynamic maximum weight basis for a laminar matroid under insertions and deletions of elements with arbitrary weights in $O(\log n)$ update time. This data structure satisfies the $(c,d)$-approximate maximum weight oracle requirements with $c=d=1$ and satisfies the $(1-\eps)$-approximate independence oracle requirements with $\eps=0$.

\paragraph{Dynamic maximum weight oracle.}
The data structure $\mathcal{D}$ maintains the maximum weight basis under insertion and deletions.
For \textsc{Freeze}$(e)$ operations, we don't need to do anything.
For \textsc{Decrement}$(e,w)$ operations, 
we can simulate a decrement with the deletion of $e$ and an insertion of $e$ with the changed weight.
By \Cref{ap:stable}, deleting and inserting the element removes at most the deleted element
and adds at most one element to the maximum  weight basis,
and thus would never remove a frozen element from the basis whose weight never decreases.

\paragraph{Incremental independence oracle.}
This data structure can also be used to implement an incremental independence oracle as follows: Run the data structure $\mathcal{D}$ where every element has the same weight and that maintains a maximum basis $B$.
Both \textsc{Test} and \textsc{Insert} can easily be handled by our data structure.

\subsection{Graphic matroids}

A graphic matroid can be represented with a weighted undirected graph $G = (V, E, w)$ where the weight of 
and edge $e\in E$ is given by $w(e)$.

\paragraph{Dynamic $(1/2,1/2)$-approximate maximum weight oracle.}

To obtain an approximate maximum spanning tree of a graph $G = (V,E)$, 
take the largest edge incident to every vertex, with ties broken according to the edge numbering.
For every vertex $v\in V$, let $E_v$ denote the set of edges incident to $v$. 
We can store the weights of edges in $E_v$ in a heap $H_v$
and maintain that the maximum element of $H_v$ is part of our approximate maximum spanning tree.
It is easy to show that the set of edges maintained, $F$, is a forest with at least $1/2$ the weights and $1/2$ the number
of edges of the optimal maximum spanning tree $T$.

For the correctness of the algorithm we show first that there cannot be any cycle in $F$.
Assume by contradiction that there is a cycle $C$ in $F$. Direct each edge in $C$ towards the vertex where it was the maximum weight edge, breaking ties according to the vertex number.
If $C$ is a cycle, then $C$ must give a directed cycle, where each edge is larger than the next edge in the directed
cycle in the lexicographic order induced by the edge weight and the vertex number. This is a contradiction.

\emph{Approximation factor}. Root $T$ at an arbitrary vertex and consider the vertices of $T$ starting at the leaves.
We will use a simple charging argument to show that $F$ has at least $1/2$ the weight of $T$
and that $|F|\ge |T|/2$.
The edge of a vertex $v$ going to its parent $u$ in the tree $T$ can be charged
to the largest weight edge leaving $v$, which is in $F$. 
Since each edge of $e\in F$ can be charged at most twice from the two endpoints of $e$
by edges of lesser or the same weight, $F$ has at least half the weight of $T$ and at least half the edges as well.

\textsc{Decrement}$(e, w)$: When the weight of an edge $e=(u,v)\in E$ changes to $w$, we update $H_u$ and $H_v$ accordingly.
This may change the maximum weight edge incident to $u$ or $v$,
but we can lookup and accordingly modify our approximate maximum spanning tree
with the new maximum weight edge of $T_u$ and $T_v$ in $O(\log n)$ time and report these changes.

\textsc{Freeze}$(e)$: When we freeze an edge $e = (u,v)\in F$, we can contract the graph along the edge.
To do so, we can merge the heaps $H_u$ and $H_v$ and associate the merged heap with the new merged vertex.
This can be done in $O(\log n)$ time with binomial heaps or $O(1)$ time using the Fibonacci heaps of Fredman and Tarjan \cite{FredmanT87}. When we merge two vertices, the maximum weight edge incident to the new merged vertex may be added to the approximate maximum spanning tree.

\paragraph{Incremental independence oracle.}
Unweighted incremental maximum spanning tree involves checking if inserting any edge increases the
size of the spanning tree. This can be done in $O(\alpha(n))$ update and query time with the disjoint
set union data structure of Tarjan \cite{Tarjan75}.

\subsection{Transversal matroids}
\label{sec:transversal-main}
\paragraph{Representation of transversal matroids.}
As stated in \Cref{sec:preliminaries}, we assume that our transversal matroids are given as minimal representations. This means that the matroid $\cM$ is represented by a bipartite graph $G = ((L, R),E)$ where $|R| = \rank(\cM)$. For sake of notation let $n = |L|$ and $m = |E|$. As a reminder, each element of the matroid corresponds to a node in $L$, and an independent set $I$ is a subset of $L$ such that there exists a matching in $G$ that matches every element of $I$. We will let $N(v)$ denote the neighbors of $v$ in $G$, that is $N(v) = \{u \mid (u,v)\in E\}$.
If $m > n^2$ we can remove neighbours from each vertex in $L$ until their degree is at most $n$. This doesn't affect whether a vertex belongs to an independent set, as it can always be matched. This reduces $m$ to at most $O(n^2)$.

\paragraph{Dynamic $(1-\eps, 1/2)$-approximate maximum weight oracles.}
Recall that in the case of transversal matroids, the weighted setting of \textsc{LazySamplingGreedy+} leads to a dynamic matching problem on a \emph{vertex-weighted} bipartite graph $G=((L,R),E)$, where each vertex $\ell \in L$ has a non-negative weight $w(\ell)$ and all edges incident to $L$ have weight $w(\ell)$.
We assume that each vertex in $L$ has  a value $w_{min} \ge O(w_{max}\eps/n)$ such that we may ignore the weight of any vertex that drops below $w_{min}$. 
For the purposes of \textsc{LazySamplingGreedy+}, we stop if the maximum weight basis decreases below $O(f(OPT)) \ge w_{max}$, and so even if we discard all items with weight less than $O(w_{max}\eps/n)$, we can discard at most an $\eps$ fraction of $f(OPT)$.
Thus after appropriate multiplicative rescaling, we may assume that $w_{min} = 1$ and $w_{max} = (1+\eps)^k$ for $k = O(\log_{1+\epsilon} n)$.
Furthermore we may assume that the weight of any $\ell\in L$ is $(1+\eps)^j$ for some $j\ge 0$ as we can round all weights in the range of $[(1+\eps)^j, (1+\eps)^{j+1}]$ down to the nearest $(1+\eps)^j$ and lose only a $(1+\eps)^{-1}$ factor in the value of the solution.

We will design a data structure that maintains a matching $M$ such that whenever a \textsc{Decrement}$(\ell,w)$ operation is performed on $\ell \in L$, then $\ell$ will be the only node of $L$ that may potentially become unmatched in $M$. 
We will call a data structure that has this property \emph{L-stable}.
The basis we output will be the set of nodes of $L$ matched in $M$.
Note that $L$-stable data structures can handle the \textsc{Freeze} operation by not doing anything and always returning an empty set. No frozen element will be removed from the basis because frozen elements are never decremented. 

The high level idea of our algorithm is as follows:
We want to maintain a maximal matching according to some weights, as this guarantees that at least half as many nodes of $L$ are matched as in the optimum solution.
The question is just which weights to choose and which algorithm to use to guarantee maximality while fulfilling $L$-stability. Note that $L$-stability allows edges in the matching to change, just un-matching a matched vertex of $L$ is forbidden. For this reason we chose an algorithm that is greedy for the vertices in $R$, i.e., each vertex in $R$ is matched with a neighbor of largest weight for a suitable choice of weight. In order to maintain the invariant at every vertex $r$ of $R$ our greedy algorithm allows $r$ to ``steal'' the matched neighbor $l$ of another vertex $r'$ of $R$. This maintains $L$-stability as $l$ remains matched.
However, the newly un-matched vertex $r'$ might want to steal $l$ right back from $r$. To avoid this, we do not use the original weights in the greedy algorithm, but instead we use ``virtual weights'' that are initialized by the original weights and that decrease by a factor of $(1+\epsilon)$ whenever $l$ is (re-)matched. This makes $l$ less attractive for $r'$ and, as $l$ is never re-matched when its weight is below 1, it also guarantees that $l$ is only re-matched $\tilde O_\eps(1)$ times in total over all decrement operations. For formal details and the proof, see \Cref{sec:transversal}.

\begin{theorem} 
Given a bipartite graph $G=((L,R), E)$ and a value $w_{min}$, there exists a L-stable data structure that handles \textsc{Decrement} operations and maintains a $(1-\epsilon,1/2)$-approximate maximum weighted matching provided that the maximum weighted matching has cost at least $w_{min}$.
The total  running time for preprocessing and all operations as well as the total number of changes to the set of matched vertices is $O(|E|(1/\epsilon + \log |L|))$.
Furthermore, the matching maintained is maximal.
\end{theorem}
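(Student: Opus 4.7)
The plan is to realize the virtual-weight greedy-stealing scheme sketched in the overview as a concrete algorithm, and then verify four properties: $L$-stability, maximality (which yields the cardinality factor $1/2$), the weight approximation factor $(1-\epsilon)$, and the aggregate running time. Throughout I write $n=|L|$.

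\textbf{Algorithm.} Round each $w(\ell)$ down to the nearest power of $(1+\epsilon)$ and maintain a virtual weight $\tilde w(\ell)$ for each $\ell\in L$, initialized to $w(\ell)$; declare $\ell$ \emph{exhausted} once $\tilde w(\ell)<1$. Every $r\in R$ is greedily matched to its non-exhausted neighbor of largest $\tilde w$, with ties broken by a fixed ordering on $L$. Whenever $r$ (re-)selects a partner $\ell$, divide $\tilde w(\ell)$ by $(1+\epsilon)$. A \textsc{Decrement}$(\ell,w)$ sets $w(\ell)\gets w$, caps $\tilde w(\ell)$ at the new value, and triggers a cascade: the $r$ currently matched to $\ell$ (if any) re-selects its best neighbor, which may steal another $\ell'$ from its $r'$, which in turn re-selects, and so on.

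\textbf{$L$-stability.} I would show inductively along the cascade that only the originally-decremented $\ell$ can leave the matching: each stealing step moves an $\ell'$ from one $R$-partner to another, keeping $\ell'$ matched. Termination of the cascade is guaranteed because each stealing step strictly decreases some $\tilde w(\ell)$ by a factor of $(1+\epsilon)$, while virtual weights never drop below $1$, so the total number of steps in any single cascade is bounded by $O(\epsilon^{-1}\log n)$.

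\textbf{Approximation.} The cardinality factor $1/2$ follows from the maximality of $M$ on the subgraph of non-exhausted vertices, which holds because each cascade terminates only when every $r$ with a non-exhausted free neighbor is matched; any maximal matching has at least half the cardinality of a maximum matching. For the weight bound I would fix an optimal matching $M^*$, and for each $(\ell^*,r^*)\in M^*$ with $\ell^*$ unmatched in $M$, charge $w(\ell^*)$ to the partner $\ell$ of $r^*$ in $M$. The greedy rule and the cap on $\tilde w$ give $\tilde w(\ell)\ge \tilde w(\ell^*)\ge w(\ell^*)/(1+\epsilon)$, and the total multiplicative loss, due to the $(1+\epsilon)$-rounding and the virtual-weight decay of $\ell$, is bounded by a constant times $\epsilon$; rescaling $\epsilon$ by a constant then yields $w(M)\ge (1-\epsilon)w(M^*)$.

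\textbf{Running time.} This is the main obstacle. The plan is to give each $r\in R$ a bucket data structure over $N(r)$ indexed by the $O(\epsilon^{-1}\log n)$ virtual-weight classes, together with a pointer to the largest non-empty class. Both a decrement of $\tilde w(\ell)$ and a re-match by $r$ update $O(1)$ bucket slots at each of the $\deg(\ell)$ neighbors (respectively, at $r$ itself). The naive bound is $O(|E|\,\epsilon^{-1}\log n)$, since $\ell$ can have its virtual weight decrease up to $O(\epsilon^{-1}\log n)$ times. To sharpen this to $O(|E|(\epsilon^{-1}+\log n))$ I expect to separate the $O(\log n)$ weight-class crossings per $\ell$ caused by \textsc{Decrement} calls (which together cost $O(|E|\log n)$ after an $O(|E|\log n)$ sorting preprocessing) from the $O(\epsilon^{-1})$ additional crossings caused by re-matches within a single class (which can be amortized against each $r$'s bucket pointer, which only ever advances toward lower classes and hence moves $O(\deg(r))$ times in total across the lifetime of the data structure). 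Making this amortization rigorous, and checking that the stealing cascades do not break it, is the delicate part; the stability and approximation claims are comparatively routine once the algorithm is fixed.
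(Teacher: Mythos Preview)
Your algorithm is essentially the paper's virtual-weight scheme, and the $L$-stability argument is fine. But the weight-approximation step has a real gap.

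\textbf{The charging argument only gives a factor $1/2$, not $1-\epsilon$.} Suppose you charge each unmatched $\ell^*\in M^*$ to $\ell=M(r^*)$ and establish $w(\ell^*)\le(1+\epsilon)w(\ell)$. Each $\ell\in L(M)$ receives at most one such charge, so $\sum_{\ell^*\in L(M^*)\setminus L(M)}w(\ell^*)\le(1+\epsilon)w(M)$. But the vertices in $L(M^*)\cap L(M)$ already contribute their full weight to $w(M^*)$, and nothing prevents the \emph{same} $\ell\in L(M^*)\cap L(M)$ from also being the recipient of a charge (this happens precisely along an alternating path $\ell^*,r,\ell,r',\ldots$ in $M\oplus M^*$). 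Adding the two pieces yields only $w(M^*)\le(2+\epsilon)w(M)$. Concretely, on an even alternating path $\ell_0,r_1,\ell_1,\ldots,r_s,\ell_s$ your inequality gives $w(\ell_0)\le(1+\epsilon)w(\ell_1)$, but says nothing about $w(\ell_1)$ versus $w(\ell_2)$; the ratio $w(P\cap M^*)/w(P\cap M)$ can be close to $2$ from that bound alone.

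What the paper does instead is analyze $M\oplus M^*$ path by path, using the invariant that along such a path the \emph{virtual} weights decrease by at most a factor $(1+\epsilon)$ at each step. Summing the resulting geometric series (their ``path lemma'') shows that the shared weight $w(P)-w(\ell^*)$ is at least $\epsilon^{-1}$ times the gap $w(\ell^*)-w(\ell_s)$, so the ratio on each path is $1+O(\epsilon)$. This telescoping along the path---not a single-step charge---is the missing idea.

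\textbf{Two smaller points.} First, maximality restricted to non-exhausted vertices does not give the $1/2$ cardinality guarantee, since the maximum-cardinality matching may use exhausted vertices; the paper handles this with an explicit final step matching any still-unmatched $r$ to any free neighbor regardless of virtual weight. Second, your running-time plan relies on eagerly updating bucket structures at all neighbors of $\ell$ whenever $\tilde w(\ell)$ drops, and you correctly flag the amortization as delicate. The paper sidesteps this entirely: each $r$ keeps only a threshold index $j_r$ and a scan pointer $p_r$ into $N(r)$, both of which move monotonically, so the total scan work at $r$ is $\deg(r)\cdot O(k+1/\epsilon)=\deg(r)\cdot O(\log n+1/\epsilon)$ with no cross-vertex bookkeeping. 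This lazy-scan device is what makes the claimed bound immediate.
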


\paragraph{$(1-\epsilon)$-approximate dynamic maximum independent set data structure.}
Given a bipartite graph $G=(L,R)$ there is a  $(1-\epsilon)$-approximate maximum matching data structure ${\cal D}_M$ for deletions of vertices in $L$ \cite{BLSZ14} which achieves amortized $O(\epsilon^{-1})$ time per delete operation. 
It has three properties that  are crucial for our algorithm: (1) It does not unmatch a previously matched vertex of $L$ as long as it is not deleted, (2) it maintains an explicit integral matching, i.e., it stores at each vertex whether and if so, along which edge it is matched, and (3) the total time for computing the initial matching and all vertex deletions is $O((m + |L|)/\epsilon)$, where $m$ is the number of edges in the initial graph.

Given an initial graph $G_0$ and a partial matching  $B$ of $G_0$ this algorithm can be modified to guarantee that the initial $(1-\epsilon)$-approximate matching computed for $G_0$ matches all vertices of $B \cap L$. See \Cref{sec:transversal} for details.
We use this data structure ${\cal D}_M$ to implement a $(1-\epsilon)$-approximate dynamic maximum independent set data structure for the transversal matroid as follows:

{\sc Batch-Insert}($E'$): Let $B$ be the $(1-\epsilon)$-approximate matching before the update. Initialize a new data ${\cal D}_M$ with all current elements and compute an initial $(1-\epsilon)$-approximate matching computed for $G_0$ matching all vertices in $B \cap L$. This is possible by the discussion above.

{\sc Delete}($e$): Execute a vertex deletion of vertex $e$ in ${\cal D}_M$.

{\sc Test}($e$): Return YES if $e$ is matched and NO otherwise.

\begin{lemma}
Given a transversal matroid there exists a $(1-\epsilon)$-approximate dynamic maximum independent set data structure such that each {\sc Batch-Insert}($B, E_1, E_2$) and all {\sc Delete} operations until the next {\sc Batch-Insert} take  $O((m' + |E_1| + |E_2|)/\epsilon)$ total worst-case time
and each {\sc Test} operation takes $O(1)$ worst-case time.
\end{lemma}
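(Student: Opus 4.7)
The plan is to verify that the construction sketched in the paragraph immediately preceding the lemma achieves the claimed bounds, then to justify the one nontrivial modification of the base data structure $\cal D_M$ from~\cite{BLSZ14}. First I would confirm the easy pieces: \textsc{Test}$(e)$ is $O(1)$ worst-case because $\cal D_M$ maintains an explicit integral matching (property~(2)), so the query reduces to inspecting the status of a single vertex; and \textsc{Delete}$(e)$ is implemented as one vertex deletion in $\cal D_M$, whose cost is absorbed into the total bound for $\cal D_M$ over an epoch. By property~(1), $\cal D_M$ never unmatches a previously matched left-vertex unless it is itself deleted, which is exactly the stability requirement of the $(1-\epsilon)$-approximate dynamic maximum independent set interface: any element reported as independent before a \textsc{Delete} sequence remains independent afterward unless deleted.

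Next I would address \textsc{Batch-Insert}. The implementation rebuilds $\cal D_M$ from scratch on the current graph, but must be initialized so that the previously reported independent set $B\cap L$ is already matched. Once this initialization is achieved, property~(1) of $\cal D_M$ carries the invariant forward through subsequent \textsc{Delete} operations of that epoch, which is all the lemma's correctness requires. The main obstacle is therefore exactly the modification: producing an initial $(1-\epsilon)$-approximate matching of $G_0$ that contains a prescribed partial matching $B$, still within the $O((m+|L|)/\epsilon)$ initialization budget of~\cite{BLSZ14}. I would invoke the modified construction promised in \Cref{sec:transversal}: seed $\cal D_M$ with the matching $B$ and then run the short-augmenting-path phases of~\cite{BLSZ14} to augment $B$ until no augmenting path of length $O(1/\epsilon)$ remains. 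Standard arguments show this yields a $(1-\epsilon)$-approximate maximum matching, and because the seeding only restricts the starting matching (not the search), the running-time analysis of~\cite{BLSZ14} goes through unchanged.

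Finally I would aggregate the costs. Writing $m'$ for the number of edges of the graph at the moment of the \textsc{Batch-Insert}, the initialization costs $O(m'/\epsilon)$, and all subsequent \textsc{Delete} operations within the epoch contribute at most $O((m' + |E_1| + |E_2|)/\epsilon)$ in total by the amortized bound of $\cal D_M$, where $|E_1|,|E_2|$ account for the edge changes processed during the epoch. Summed together this yields the stated $O((m' + |E_1| + |E_2|)/\epsilon)$ worst-case bound on the entire epoch, and \textsc{Test} remains $O(1)$ worst-case as noted. The hardest step of the proof is thus the seeded-initialization modification of $\cal D_M$; once that is in hand (via \Cref{sec:transversal}), the rest is bookkeeping against the interface of~\cite{BLSZ14}.
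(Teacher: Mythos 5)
Your proposal follows the same overall route as the paper: \textsc{Test} is $O(1)$ because $\mathcal{D}_M$ stores an explicit matching, \textsc{Delete} is a single vertex deletion charged to the epoch budget of~\cite{BLSZ14}, and the only real work is initializing $\mathcal{D}_M$ so that the previously reported set $B\cap L$ is matched from the start, after which property~(1) carries stability through the epoch. The bookkeeping of costs is also correct.

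The one place where you assert rather than prove is precisely the step you flag as hardest: how to ``seed $\mathcal{D}_M$ with the matching $B$ and then run the short-augmenting-path phases.'' The algorithm of~\cite{BLSZ14} does not accept an arbitrary starting matching; its running-time analysis is tied to a rank potential on left vertices that is built up as $R$-vertices are inserted one at a time, so imposing $B$ externally and then ``continuing'' would require a separate argument that both the approximation guarantee and the $O((m+|L|)/\epsilon)$ bound survive. The paper avoids this by making the algorithm \emph{itself} produce $B$ during preprocessing: it fixes the insertion order of $R_0$ so that the $R$-endpoints of $B$-edges are inserted first, and for each such insertion it resolves the tie among lowest-ranked (rank-$0$) neighbors in favor of the $B$-partner, matching that edge and incrementing only that vertex's rank. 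Since the algorithm never unmatches a matched left vertex, all of $B\cap L$ stays matched, and because everything happens inside the algorithm's own insertion framework, its invariants and time analysis apply verbatim. Your claim that ``the seeding only restricts the starting matching (not the search), [so] the running-time analysis goes through unchanged'' is the conclusion one wants, but as stated it is not justified; the insertion-order/tie-breaking construction is what actually delivers it.
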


We show here how to slightly modify the algorithm of \cite{BLSZ14} so that it fulfills the following condition (C):
\emph{Given an initial bipartite graph $G_0=((L,R),E)$ and a partial matching $B$ in $G_0$ the algorithm guarantees that the initial $(1-\epsilon)$-approximate matching computed for $G_0$ matches all vertices of $B \cap L$. }

The algorithm for the deletions of vertices in $L$ of~\cite{BLSZ14} is actually based on an algorithm for insertions of  vertices in $R$ that guarantees that there is no augmenting path of length $k := 2 + 2/\epsilon > 2$.
Whenever a vertex $l$ of $L$ is supposed to be deleted, the algorithm inserts a new vertex $r \in R$ with a single edge, namely to $l$. Let us call the set of vertices inserted into $R$ for this purpose $R'$. As the algorithm guarantees that no augmenting path of length 1 exists and $l$ is unmatched, it follows that after the insertion $l$ will be matched to $r$.
Let us denote by $G$ the initial graph without the deleted vertices and without the vertices in $R'$ and by $G'$ the initial graph with the vertices in $R'$.
Lemma VI.5 in~\cite{BLSZ14} shows that  the following inequality holds:
$$|M(G')| - |R'| \ge (1-\epsilon) |M(G)|,$$
i.e.~even when not counting the matched edges incident to the $|R'|$ deleted edges, the matching in $G'$ is a $(1-\epsilon)$-approximation of the optimal matching in $G$.

We now describe how to modify the algorithm to fulfill condition (C).
Recall that in the decremental algorithm $R$ is given initially and never changes, only in the implementation of the decremental algorithm by an incremental algorithm $R$ is augmented by $R'$. Let us denote the initial set $R$ by $R_0$ to avoid confusion. The incremental algorithm in~\cite{BLSZ14} starts with an empty set $R$. Thus, in the implementation of the decremental all vertices of $R_0$ must be inserted during the preprocessing step to compute an initial approximate matching. We will fix the order in which the vertices of $R_0$ are inserted as follows. First all the vertices that are endpoints of an edge in $B$ are inserted and afterwards all remaining vertices of $R_0$ are inserted.

Furthermore, the incremental algorithm maintains a value for every left vertex $l$, called $rank(l)$, which is initially 0. Whenever a vertex is part of an augmenting path, its rank is incremented by 1. After the insertion of a vertex $r \in R$ the algorithm determines a lowest-ranked neighbor $l$ of $r$ and if it is unmatched, matches $l$ with $r$. In the original algorithm if there are multiple neighbors with identical lowest rank, $r$ is free to choose an arbitrary one. To guarantee condition (C) we modify this choice during preprocessing as follows. Order the edges in $B$ in some arbitrary order and process them one after the other in this order.
Let $(l,r)$ be the next  edge with $l \in L$ and $r \in R_0$. Insert $r$ into the data structure. Note that the rank of $l$, and potentially also of the other neighbors of $r$ equals 0. Choose $l$ as the lowest-ranked neighbor of $r$, match $l$ and $r$, and increment the rank of $l$ (and of no other vertex) by 1. After all the edges of $B$ have been processed, insert all remaining vertices of $R_0$ in arbitrary order, matching them as suitable. As the algorithm never unmatches a previously matched edge, all vertices of $B \cap L$ are matched at the end of the preprocessing step, fulfilling condition (C).

\section{Dynamic maximum weight basis for laminar matroids} \label{sec:laminar}
In this section, we describe a dynamic data structure to maintain a maximum weight basis for a laminar matroid. The data structure supports insertions and deletions of 
elements from the matroid and their weights to the matroid in $O(\log n)$ time, where $n$ is the size of the matroid's ground set.

A laminar matroid $\cM$ on a ground set $\cN = 
\{1, 2, \ldots, n\}$ can be described by a tree $\cT$ where:
\begin{itemize}
    \item Each node $v$ in the tree has an associated set $\cS_v$ and a ``capacity'' $C_v$.
    \item The root node has the set $\cN$ and the $n$ leaf nodes have the singleton sets $\{i\}$ for $i=1,2,\ldots,n$. 
    We will slightly abuse notation and refer to a leaf representing the singleton set $\{e\}$ for $e\in \cN$ as the leaf node $e$ and consider the ground set $\cN$ to be the set of leaves.
    \item The sets of the children of a node $v$ form a  partition of $\cS_v$.
\end{itemize}
A set $\cS\subseteq \cN$ is independent if $|\cS\cap \cS_v| \leq C_v$ for every vertex $v \in \cT$. In the case of equality, that is when $|\cS\cap \cS_v| = C_v$, we say that the constraint of $v$ is \emph{tight}. Furthermore for each element $i\in \cN$, we assume that there is a weight $w_i \ge 0$. The maximum weight basis of a laminar matroid is the independent set with the largest sum of weights. 


\subsection{Operations supported by the data structure}

Our data structure aims to support the following operations:

\begin{itemize}
    \item 
    \textsc{Insert}$(u, v, w)$:
    Inserts a new element $u$ to existing (non-leaf) $v$ with weight $w$. Note that this will insert $u$ into all sets $\cS_w$ for all vertices $w$ that is an ancestor of $u$ in the tree. If the maximum weight basis changed, report the change.
    
    \item
    \textsc{Delete}$(u)$: 
    Delete of a leaf $u$. If the maximum weight basis changed, report the change.
    
    \item
    \textsc{Query}$()$: Query for the maximum weight basis $B$.
\end{itemize}

For our applications, we know the underlying matroid structure of all elements that would be inserted into the data structure. Thus we can initialize our tree with all non-leaf nodes and their corresponding capacities. 

The data structures we present will maintain an independent set $B$ of maximum weight, and will be \emph{stable}, meaning that the basis $B$ will change by at most one element per insertion or deletion. This is guaranteed by \Cref{ap:stable}. 

The main challenge for our data structures is that finding replacement elements when we delete an element involves both path queries to find when tight capacity constraints that loosened, and subtree queries to find the replacement element. Adding and removing elements also changes capacities along a path. Handling both subtree and path queries in tandem is a non-trivial task.

It is easy to devise slow algorithm running in $O(n)$ per operation which we do in
\Cref{sec:laminar_slow}. 
We  improve this algorithm with heavy-light decomposition by showing how to change the subtree queries into path queries for $O(\log^2 n)$ per operation in 
\Cref{sec:laminar_heavylight}.
Finally we use top trees to handle both path and subtree operations and prove the following theorem in 
\Cref{sec:laminar_top}.

\begin{theorem}
There exists a data structure that maintains the maximum weight independent set of a laminar matroid under insertion and deletions of elements in $O(\log n)$ worst-case time per operation.
\end{theorem}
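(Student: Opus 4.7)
The plan is to reduce each insertion/deletion to finding a single exchange element in the laminar tree, and then to implement that exchange query using top trees augmented with a small amount of carefully chosen data at each cluster.

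\textbf{Exchange rule.} My first step would be to invoke the matroid greedy characterization of the maximum-weight basis specialized to a laminar matroid: sort elements by decreasing weight and add each whenever all ancestor capacities still have slack $s_v := C_v-|B\cap \cS_v|\ge 1$. From this I derive the following exchange rule for a deletion of a basis leaf $u$: the new maximum-weight basis equals $B-u+u'$, where $u'$ is the heaviest non-basis leaf such that every strict ancestor of $u'$ that is a proper descendant of $\mathrm{LCA}(u,u')$ has $s_w\ge 1$; for the insertion of a leaf $u$ one dually locates a leaf to evict among $B\cap \cS_{v^*}$ for the smallest over-capacity ancestor $v^*$ of $u$. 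By the stability lemma cited as \Cref{ap:stable}, each update changes $B$ by at most one leaf, so implementing the update amounts to $O(1)$ ``best exchange'' queries plus an update of the slacks $s_w$ along a single root-path.

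\textbf{Top-tree aggregates.} I would build a top tree over the laminar tree $\cT$ and, for each cluster $C$ with at most two boundary vertices, store: (i) the minimum of $s_w$ along the internal boundary-to-boundary path of $C$; (ii) for each boundary $b$ of $C$ the maximum weight of a non-basis leaf inside $C$ whose upward path to (but excluding) $b$ has every $s_w\ge 1$; and (iii) the maximum weight of a non-basis leaf strictly inside $C$ whose entire upward path in $C$ has slack $\ge 1$. The \textsc{Meld} of two clusters at a shared boundary combines (i) by minimum and combines (ii)/(iii) by a gated maximum that takes the inner contribution only when the bridging minimum slack is $\ge 1$. To reflect the root-path slack shifts, I would equip clusters with a standard integer-offset lazy tag. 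The replacement query is then realized by a single \textsc{Expose}$(u,\mathrm{root})$ that reads aggregates (ii)/(iii) from the $O(\log n)$ side clusters hanging off the exposed path and returns the overall maximum; the symmetric query for insertions is handled by maintaining the dual aggregate (minimum-weight basis leaf inside the cluster reachable through tight constraints).

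\textbf{Main obstacle.} The delicate part will be verifying that the reachability aggregates in (ii)/(iii) remain consistent under \textsc{Meld}/\textsc{Split} in the presence of lazy slack tags. The key observation I would leverage is that our only slack updates are uniform $\pm 1$ shifts applied to well-defined subpaths covered by $O(\log n)$ clusters, and that the only predicate the aggregates test is ``$s_w\ge 1$''; thus a tag can be summarised by an integer offset together with a single bit recording whether the tagged segment currently satisfies the predicate everywhere, and these interact cleanly with the gated \textsc{Meld} combiner and with push-down during \textsc{Split}. Granted this combiner/tag algebra, each insertion or deletion dispatches $O(1)$ exchange queries, a single root-path tag update, and $O(\log n)$ cluster re-aggregations, each costing $O(\log n)$ worst case through the top-tree interface of \cite{AHLT05}. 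This yields the stated $O(\log n)$ worst-case bound per operation and completes the proof.
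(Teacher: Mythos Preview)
Your proposal is correct and follows essentially the same approach as the paper: a top tree over the laminar tree, with each cluster storing the minimum residual capacity along its boundary-to-boundary path, the minimum-weight basis leaf inside, the maximum-weight non-basis leaf reachable without crossing a tight constraint, and a lazy additive tag for root-path capacity updates. The paper's concrete device for your ``main obstacle'' is to additionally store the path vertex achieving the minimum ($\mathrm{argminc}$) together with \emph{two} precomputed versions of the reachable-max aggregate (one ignoring tightness on the cluster path entirely, one cut at $\mathrm{argminc}$), so that after a lazy offset is applied the correct version is selected in $O(1)$ according to whether $\mathrm{minc}=0$; this is precisely the mechanism your ``integer offset plus a single bit'' sketch is reaching for.
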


\paragraph{A simple lower bound.} We briefly remark that this problem is at least as hard as sorting. To sort $n$ elements, one can construct a cardinality matroid (which is a laminar matroid whose tree only consists of the leafs and the root)  on the $n$ elements with root cardinality constraint $1$. Then sorting can be emulated by using $n$ delete element operations. Since sorting takes $\Omega(n \log n)$, each query must take at least $\Omega(\log n)$ time amortized throughout all operations.

\paragraph{Helpful operations}
Our data structure we will implement the some additional operations:
\begin{itemize}
    \item \textsc{Add}$(u)$: Add the element corresponding to leaf node $u$ to the independent set $B$. This element must be able to be added to $u$, i.e. adding it does not violate any laminar constraints.
    \item \textsc{Remove}$(u)$: Remove the element corresponding to leaf node $u$ from $B$. This element must already be in $B$.
    \item \textsc{LowestTightConstraint}$(u)$: For a node $u$, return the node $v$ closest to $u$ on the path from the root of the tree to $u$ with $C_v$ is exactly equal to the number of elements in $B$ that lie in the subtree rooted at $v$.
    \item \textsc{Initialize}$(u,v,w)$: 
    Create a new node $u$ that's a child of $v$ with weight $w$. This only initializes variables in the new node $u$ and may cause $B$ to no longer be the maximum weight basis.
    \item \textsc{Destroy}$(u)$: 
    Destroy a leaf node $u \not \in B$.
    \item \textsc{QueryMin}$(u)$: Return the minimum weight element in the subtree rooted at $u$ that lies in the basis $B$ or return that none exist.
    \item \textsc{QueryMax}$(u)$: Return the maximum weight element in the subtree rooted at $u$ not in the basis $B$ that can be added to $B$ without violating any capacity constraint, or return that none exist.
\end{itemize}
\subsection{A slow dynamic laminar matroid data structure}
\label{sec:laminar_slow}
For the sake of exposition, we first describe a slower data structure that has $O(n)$ update time. Throughout all the updates, we maintain an independent set $B$.
Assume that the laminar matroid $\cM$ initially starts with no elements (and $B = \emptyset$), and $\cM$ constructed be iteratively adding or removing elements.

For each node $v \in \cT$ we maintain the following variables:
\begin{itemize}
    \item $\max_v :=$ maximum weight $e \in \cS_v \setminus B$ that can be added to $B$ (i.e. satisfies matroid constraints) as well as a pointer to an element with that weight in $\cS_v$. If no elements can be added, set $\max_v = -\infty$.
    \item $\min_v :=$ minimum weight $e \in B \cap \cS_v$ as well as a pointer to an element with that weight in $\cS_v$. If $B \cap \cS_v = \emptyset$, set $\min_v = \infty$.
    \item $c_v :=$ the residual capacity of a node $v$, i.e. $C_v - |B \cap \cS_v|$.
\end{itemize}
For $\min_v$ and $\max_v$, we assume that the elements achieving those values are stored as well (with appropriate sentinel values when they don't exist).

The properties of a matroid guarantees that any addition or deletion of a new matroid element will only change $B$ by at most one element by \Cref{ap:stable}. We'll use the variables above to find the correct element in $B$ to replace. We first describe how to update the variables upon matroid insertions / deletions, and the later describe how to update $B$ under these changes.
We first describe the basic operation of updating $B$.

\paragraph{Implementing \textsc{QueryMin}$(u)$ and \textsc{QueryMax}$(u)$}
These values are maintained by $\min_u$ and $\max_u$.

\paragraph{Implementing \textsc{LowestTightConstraint}$(u)$}
We can walk up the tree from $u$ to find the lowest constraint that is tight.

\paragraph{Implementing \textsc{Remove}$(u)$ / \textsc{Add}$(u)$}
We assume that $B$ stays an independent set after removing / adding the element $u$.
To remove / insert an element $u$ from $B$, we simply increment / decrement $C_v$ by 1 for every node $v$ on the path $P$ from $u$ to the root and update the $\min$ and $\max$ values bottom up on $P$ as follows:
Let $x$ be the current node on $P$. If $C_x > 0$, $\max_x$ can be computed by taking the maximum of $\max_y$  and $\max_z$ of the values of the children
$y$  and $z$ of $x$. Otherwise set $\max_x = -\infty$; $\min_x$ can be computed by simply taking the minimum of a node's children. Then the step is repeated with $x$ being set to the parent of $x$ until the root has been processed.
\paragraph{Implementing \textsc{Initialize}$(u,v,w)$}
We create a new leaf node $v$ as a child of $u$.
We set $\max_v$  to $w$, $\min_v$ to $\infty$, and $C_v = 1$. 
This returns $u$.

\paragraph{Implementing \textsc{Destroy}$(u)$}
We simply delete the node $u$.

\paragraph{Implementing \textsc{Insert}$(u,v,w)$}
We begin by creating a new leaf node $u$ with $\textsc{Initialize}(u,v,w)$. 
We check whether $C_x >0$ for all nodes $x$ on the path $P$ from $v$ to the root. If so, we need to add $v$ to $B$.
If, however, $C_x$ is not positive for all nodes $x$ on $P$, there is at least one node $w$ on $P$ for which $C_w = 0$.
Thus we need to determine whether the maximum weight basis needs to change, i.e.~whether there exists an element $v'$ in $B$ that needs to be swapped out for $v$. 
To find the element $v$ swaps out with, we begin with finding $y=\textsc{LowestTightConstraint}(v)$.
If $\min_y < w$, then we 
(1)  remove the element $e$ corresponding to $\min_y$ from $B$ by calling $\textsc{Remove}(e)$.
(2) call $\textsc{Add}(v)$ to add $v$ to the tree.
Note that after (1) it is guaranteed that every node $x$ on $P$ has a positive $c_x$ and, thus, adding $e$ to $B$ and decrementing the $c_x$ values on $P$ accordingly will not create an nodes $x$ with negative $c_x$ value.

\paragraph{Implementing \textsc{Delete}$(u)$}
If the element $u$ is in $B$, we call \textsc{Remove}$(u)$. Now we may delete $u$ from the tree with $\textsc{Delete}(u)$.
Finally, we need to find an additional element to replace $u$ and restore $B$ to a maximum weight basis.
We first find $v = \textsc{LowestTightConstraint}(u)$.
Now we remove $u$ completely from the tree by deleting the leaf node $u$.
Finally we may add the element given by $z=\textsc{QueryMax}(v)$ if it exists, and adding that element to $B$ with $\textsc{Add}(z)$.

\paragraph{Remark} 
In order to maintain the maximum weight base upon $\textsc{Insert}(u,v,w)$ and $\textsc{Delete}(u)$ depend entirely on other functions.
For our later data structures for this we will omit implementing these functions.

\paragraph{Running time analysis}
The time to execute an element insertion or deletion is $O(h)$, where $h$ is the height of the laminar matroid structure.
One issue with this naive data structure is that the laminar matroid structure could be highly unbalanced, i.e. $O(n)$ in height. When this is the case, each update described above may take $O(n)$ time to execute. 

\subsection{Accelerating laminar matroid queries through a heavy light decomposition} \label{sec:laminar_heavylight}

To achieve $O(\log^2 n)$ query time, we will use the heavy-light decomposition of a tree defined by Sleater and Tarjan \cite{SleatorT83}.
Define the $\textrm{size}(v)$ for a node $v$ in the tree to be $|\cS_v|$.
For a node $u$ with children $v$ and $w$, we call the tree edge from 
 $u$ to $v$ is \emph{heavy} if $\textrm{size}(v) > \textrm{size}(w)$ or if $\textrm{size}(v) = \textrm{size}(w)$ and $v < w$.
Otherwise, we call the edge \emph{light}. We will also refer to $v$ as a heavy (light) child of
$u$ if the edge from $v$ to its parent is heavy (light). Note that the binarization procedure from the previous sections guarantee that any non-leaf node has exactly two children with one being heavy and the other being light.

\begin{lemma}{\cite{SleatorT83}}\label{lem:heavy_light}
For any vertex $v$ in tree, there is at most one heavy edge to a child of $v$, and
there are $O(\log n)$ light edges on the path from $v$ to the root.
\end{lemma}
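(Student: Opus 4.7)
The plan is to dispatch the two claims separately, since the first is essentially a direct consequence of the definition and the second follows from a classical doubling argument.

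For the first claim, I would simply unpack the definition. Since each non-leaf node $u$ in the binarized laminar tree has exactly two children $v$ and $w$, the rule ``heavy iff $\textrm{size}(v) > \textrm{size}(w)$, or $\textrm{size}(v) = \textrm{size}(w)$ with $v < w$'' is a strict total comparison between the two children. Exactly one child wins this comparison, so exactly one of the two child edges is heavy and the other is light; in particular at most one heavy edge leaves $u$ downward.

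For the second claim, I would establish the key doubling invariant: if the edge from a node $u$ to its parent $p$ is light, then $\textrm{size}(p) \geq 2\,\textrm{size}(u)$. To see this, let $u'$ be the sibling of $u$, i.e.\ the other child of $p$. Since $(u,p)$ is light, by the definition the edge $(u',p)$ is heavy, which means either $\textrm{size}(u') > \textrm{size}(u)$, or $\textrm{size}(u') = \textrm{size}(u)$ with $u' < u$. Either way $\textrm{size}(u') \geq \textrm{size}(u)$, and because $\cS_p$ is the disjoint union of $\cS_u$ and $\cS_{u'}$ we get $\textrm{size}(p) = \textrm{size}(u) + \textrm{size}(u') \geq 2\,\textrm{size}(u)$.

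With the invariant in hand, I would walk from $v$ up to the root along the unique tree path. Let $v = v_0, v_1, \ldots, v_k$ be the subsequence of ancestors of $v$ encountered immediately after each light-edge traversal. By the invariant, $\textrm{size}(v_{i+1}) \geq 2\,\textrm{size}(v_i)$ for every $i$, so $\textrm{size}(v_k) \geq 2^k \cdot \textrm{size}(v_0) \geq 2^k$. Because every ancestor satisfies $\textrm{size}(\cdot) \leq n$, we obtain $2^k \leq n$, hence $k \leq \log_2 n$. The number of light edges on the path from $v$ to the root is exactly $k$, which is $O(\log n)$ as claimed. The only subtle point, and thus the main obstacle, is the tie-breaking convention: I would be careful to note that the strict ordering on children (by size then by index) guarantees the sibling inequality $\textrm{size}(u') \geq \textrm{size}(u)$ even in the equal-size case, which is what makes the doubling step go through without modification.
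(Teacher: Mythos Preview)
Your proof is correct and is exactly the standard Sleator--Tarjan argument. Note, however, that the paper does not actually give its own proof of this lemma: it is stated as a cited result from \cite{SleatorT83} with no proof environment following it. So there is nothing to compare against; your write-up simply supplies the classical justification that the paper omits.
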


For each node $v\in \cT$ we maintain $c_v$ as before, but instead of storing
$\max_v$, and $\min_v$, we store:

\begin{itemize}
    \item $\maxlight_v:= \max_{u}$ where $u$ is the light child of $v$ (or $w(v)$ if $v$ is a leaf and $v\not \in B$).
    \item $\minlight_v:= \min_{u}$ where $u$ is the light child of $v$ (or $w(v)$ if $v$ is a leaf and $v\in B$).
\end{itemize}

For every heavy chain $H$ in the tree, we store these values in an auxiliary balanced
binary tree that supports the following operations in $O(\log n)$ time:
\begin{enumerate}
    \item Range increments and decrements for $c_v$ for $v\in H$ (i.e. given two nodes $u, v \in H$, increment or decrement all $c_v$ on the path between those nodes by 1).
    \item Reporting the highest / lowest depth node $v\in H$ with $c_v = 0$.
    \item Updating the value of $\maxlight_v$ or $\minlight_v$ for a single $v\in H$.
    \item Query for the maximum of $\maxlight_v$ or minimum of $\minlight_v$ in a contiguous range of the chain.
\end{enumerate}
Note that this allows us to preform the above operations on an arbitrary (not necessarily heavy) path from $u$ to $v$ in $O(\log^2 n)$ time. This can be done by first finding $w$ the lowest common ancestor (LCA) of $u$ and $v$, then updating the $u$ to $w$ path and the $v$ to $w$ path. By Lemma~\ref{lem:heavy_light}, there are at most $O(\log n)$ light edges and $O(\log n)$ heavy paths to update.

\paragraph{Implementing \textsc{QueryMin}$(u)$}
$\min_u$ can be computed as the minimum value of $\minlight_y$ for any node $y$ on the heavy chain that $u$ is on. This is supported by operation 4.

\paragraph{Implementing \textsc{QueryMax}$(u)$}
We need to first find on the heavy chain of $u$ 
the first descendent $z$ with $c_z=0$, which we can do by operation 2.  
Then $\max_v$ is the maximum of $\maxlight_y$ for $y$ on the path from $v$ to $z$ excluding
$z$. Since the path from $v$ to $z$ lies completely in a heavy chain, we 
can use operation 4 to do so in $O(\log n)$ time.

\paragraph{Implementing $\textsc{LowestTightConstraint}(u)$}
This is operation 2. However since $u$ may not be on the same heavy chain as the root, this operation may take $O(\log^2 n)$ time.

\paragraph{Implementing $\textsc{Remove}(u)$ / $\textsc{Add}(u)$}
We assume that $\cB$ stays an independent set after removing / adding the element $u$.
To remove / insert an element $e$ from $\cB$, we simply increment / decrement $c_v$ by 1 for every node from $e$ to the root as range updates (operation 1) and update the $\minlight_v$ and $\maxlight_v$ values on all higher end points of light edges encountered from the path from $u$ up to the root of the tree (operation 3). This takes $O(\log^2 n)$ time overall.

\paragraph{Implementing \textsc{Initialize}$(u,v,w)$}
We create a new leaf node $v$ as a child of $u$.
We set $\maxlight_v$ to $w$, $\minlight_v$ to $\infty$, and $C_v = 1$. 

\paragraph{Implementing \textsc{Destroy}$(u)$}
We delete the node $u$. Since we assume $u \not \in B$, this changes values of $\maxlight_v$ potentially on upper endpoints of light edges on the path from $u$ to the root. We can update these values in $O(\log^2 n)$ time.

\subsection{Using top trees for \texorpdfstring{$O(\log n)$}{O(log n)} time updates}
\label{sec:laminar_top}

The issue with the naive approach described above is that updating paths from a 
leaf to the root may be very costly when the tree is deep. When this happens,
many $\max$ and $\min$ values may change in the tree. 
To get around this we describe a way to accelerate the query times to $O(\log^2 n)$ in \Cref{sec:laminar_heavylight} with heavy-light decomposition.
The main bottleneck of the heavy-light approach is that we need to perform path range query and updates to maintain
certain auxiliary values that take $O(\log^2 n)$ time to perform.
To improve the update and query time to $O(\log n)$ time, we use the top tree interface of Alstrup, Holm, de Lichtenberg, and Thorup~\cite{AHLT05} 
to avoid needing the path range query operation by implicitly storing the query values.
To improve path range updates implicitly necessary for the \textsc{Add} and \textsc{Remove} operations, we 
us a lazy propagation trick to support such updates quickly.

\subsubsection{Rooted top trees} 

Typically top trees are an interface for unrooted forests \cite{AHLT05, TarjanW05},
but we will only describe top trees for rooted trees to simplify some of the notation.
Let $T$ be a rooted tree with root $r$.
A top tree over $T$ is a data structure that represents the tree $T$ by \emph{clusters} that represent both a subtree (connected subgraph) and a path of the original tree. 
We denote a cluster representing the path from $u$ to $v$ and subtree $T'$ of $T$ by $(u,v,T')$.
All edges from vertices of $T\setminus T'$ to a vertex in $T'$ must be incident to $u$ or $v$ which are \emph{boundary nodes}.
We will allow leaf nodes to be special boundary nodes and require that each cluster has two boundary points with one of the boundary points as the descendent of the other. 
We will maintain for the sake of notation that for a top tree cluster $L = (u,v,T')$ we will always have $v$ an ancestor of $u$ 
(and thus $v$ is the highest node in $T'\cap T$).
The original edges of the $T$ are \emph{base clusters}.
Top trees contract pairs of clusters to form new clusters, starting from the base clusters, until only one tree clusters. There are two types of valid contractions that we call \emph{join} operations that create a new cluster, and one operation called \emph{split}
that deletes a cluster and returns the two child clusters that the cluster was a join of.
\begin{itemize}
    \item \textsc{JoinCompressCluster}$(L_1, L_2)$ -- Given two clusters $L_1 = (u,v,T_1)$ and $L_2 = (v, w, T_2)$ with $v$ having degree two, combine to form a new cluster $(u, w, T_1\cup T_2)$.
    \item \textsc{JoinRakeCluster}$(L_1, L_2)$ -- Given two clusters $L_1 = (u,x, T_1)$ and $L_2 = (v, x, T_2)$ with $v$ being a leaf of $T$, combine to form a new cluster $(u,x, T_1\cup T_2)$.
    \item \textsc{Split}$(L)$ --  Delete the non-base cluster $L$ and returns the two clusters $L_1$ and $L_2$ that $L$ was a join of.
\end{itemize}
The underlying data structure decides when and what types of join and splits are done to maintain a balanced binary tree.
The top tree interface takes in functions to change internal values of clusters when join and splits are performed,
and only allows a user to have access to clusters through the functions:
\begin{itemize}
    \item \textit{expose}$(u,v)$  -- Modifies the internal structure of the tree with joins and splits to return a root cluster having $u$ and $v$ as the boundary vertices. 
    \item \textit{cut}$(u,v)$ -- Remove the edge $(u,v)$ and return two new root clusters. Since the structure of $T$ never changes, we will always call \emph{cut}$(u, parent(u))$, 
    so we will write \emph{cut}$(u)$ for brevity.
    \item \textit{link}$(u,v)$ -- Add the edge $(u,v)$, similarly
    $v$ will always be the parent of $u$ in $T$, so we will write \emph{link}$(u)$ for brevity.
    
\end{itemize}
The top tree interface guarantees that at most $O(\log n)$ splits and joins are used in its implementation of these user facing functions.
For details on how this is done we refer to the implementation of this top trees interface by 
Tarjan and Warneck \cite{TarjanW05}.

\subsubsection{Rooted top trees for laminar matroids}

Let $T$ be the tree representation for the laminar family with root $r$. 
For each top tree cluster $L = (u, v, T_L)$, we will additionally store the following:
\begin{itemize}
    \item $\minc_L:=$ The minimum $c_v$ value on the path from $u$ to $v$. Note that this value is only correct  if $L$ is an exposed cluster.
    \item $\Delta_L:=$ The change in $c_v$ values on the path from $u$ to $v$ that needs to be lazily propagate to child clusters of $L$ representing a subpath from $u$ to $v$. 
    \item $\argminc_L:=$ The node closest to $u$ on the path from $u$ to $v$ that has $C_w = \minc_L$.
    \item $\maxe_L:=$ The maximum valued leaf of $T'$ reachable from $v$ that is not a descendent of a node with $x\in T'$ with $c_x = 0$. This value is only stored implicitly and will be determined by the following two variables.
    \item $\maxe1_L:=$ The maximum valued leaf of $T'$ that is a descendent of a node on the path from $u$ to $v$
    and not a descendent of a node $x\in T'$ with $c_x = 0$.
    Note that this value does not care if a node $w$ from $u$ to $v$ has $c_w=0$.
    \item $\maxe0_L:=$ The maximum valued leaf of $T'$ that is a descendent of a node on the path from $u$ to $\argminc_L$ (not including $\argminc_L$) and not a descendent of a node $x\in T'$ with $c_x = 0$. 
    \item $\mine_L:=$ The minimum valued leaf of $T'$ that is in the independent set $B$.
\end{itemize}
Our definitions of $\maxe0_L$ and $\maxe1_L$ are specifically tailored to maintain the value of $\maxe_L$ regardless of whether the value of $\minc_L$ has changed due to a lazy update.
\[ \maxe_L =
\begin{cases}
\maxe0_L & \text{ if } \minc_L = 0 \\
\maxe1_L & \text{ if } \minc_L > 0 
\end{cases}
\]
We would want that whenever $expose(u,v)$ is called, the value of $\minc_L$ is updated correctly accurate and thus we can compute $\maxe_L$.
This handles the main issue that when a path update occurs,
the value of $\maxe_L$ can change for all clusters representing part of the path that gets updated.

\begin{funcbox}{\sc Initialization}
For every edge $(u,v)\in T$ with $v$ a parent of $u$, initialize a base clusters $L = (u,v, \{(u,v)\})$ 
with the following initial values:
\begin{itemize}
    \item $\minc_L \leftarrow \min\{c_u, c_v\}$.
    \item $\Delta_L \leftarrow 0$.
    \item $\argminc_L \leftarrow \argmin \{c_u, c_v\}$ If there is a tie, set to $c_u$.
    \item $\maxe1_L \leftarrow w_u$ if $u$ is a leaf and null otherwise.
    \item $\maxe0_L \leftarrow w_u$ if $u$ is a leaf and null otherwise.
    \item $\mine_L \leftarrow u$ if $u$ is a leaf corresponding to an element in $B$ and null otherwise.
\end{itemize}
\end{funcbox}

\begin{funcbox}{\textsc{JoinCompressCluster}($L_1 = (u,v,T_1), L_2 = (v, w, T_2)$)}
        Initialize and return a new cluster to $L = (u,w, T_1\cup T_2)$ with the following values:
        \begin{itemize}
            \item $\minc_L \leftarrow \min\{\minc_{L_1} , \minc_{L_2}\}$.
            \item $\Delta_L \leftarrow 0$
            \item $\argminc_L \leftarrow \begin{cases}\argminc_{L_1} \text{ if } \minc_{L_1} \le \minc_{L_2} \\
            \argminc_{L_2} \text{ if } \minc_{L_1} > \minc_{L_2}  \end{cases}$
            \item $\maxe0_L \leftarrow $ maximum valued leaf between $\maxe0_{L_1}$ and $\maxe0_{L_2}$ if $\minc_{L_1} > \minc_{L_2}$, otherwise $\maxe0_{L_1}$.
            \item $\maxe1_L \leftarrow $ maximum valued leaf between $\maxe1_{L_1}$ and $\maxe1_{L_2}$.
            \item $\mine_L \leftarrow $ minimum valued leaf in $B$ between $\mine_{L_1}$ and $\mine_{L_2}$.
        \end{itemize}
\end{funcbox}
\begin{funcbox}{\textsc{JoinRakeCluster}($L_1 = (u,x,T_1), L_2 = (v, x, T_2)$)}
    Initialize and return a new cluster $L = (u, x, T_1 \cup T_2)$ with the values:
        \begin{itemize}
            \item $\minc_L \leftarrow \minc_{L_1}$.
            \item $\Delta_L \leftarrow 0$.
            \item $\argminc_L \leftarrow \arg\min_{L_1}$.
            \item $\maxe0_L \leftarrow $ maximum valued leaf between $\maxe0_{L_1}$ and $\maxe0_{L_2}$ if $\argminc_L \neq x$
                otherwise $\maxe0_L \leftarrow \maxe0_{L_1} \leftarrow \infty$.
            \item $\maxe1_L \leftarrow $ maximum valued leaf between $\maxe1_{L_1}$ and $\maxe1_{L_2}$.
            \item $\mine_L \leftarrow $ minimum valued leaf in $B$ between $\mine_{L_1}$ and $\mine_{L_2}$.
        \end{itemize}
\end{funcbox}
\begin{funcbox}{\textsc{Split}$(L)$}
Get the two child clusters $L_1$ and $L_2$ that $L$ was a join of. 

For $i \in \{1, 2\}$:
\begin{itemize}
\item  $\delta_{L_i} \leftarrow \Delta_{L_i} + \Delta_L$
\item  $\minc_{L_i} \leftarrow \minc_{L_i} + \Delta_L$
\end{itemize}
Delete $L$ and return $L_1$ and $L_2$.

\end{funcbox}

We will say a cluster $L$ is \emph{touched} if it was considered in a join or one of the two children of a split. The following lemma is crucial to the correctness of our data structure and so that $\maxe_L$ can be computed in $O(1)$ time for any cluster returned from an \emph{expose}, \emph{cut}, or, \emph{link}.

\paragraph{Correctness of the data structure.} 
The correctness of the data structure hinges on the following lemma.

\begin{lemma} \label{lem:exposed_cluster}
Whenever a cluster $L$ is considered for an operation, it has the correct value for all values stored in $L$.
\end{lemma}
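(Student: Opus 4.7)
The plan is to prove this by induction on the sequence of top tree operations (initializations, joins, and splits) performed by the underlying data structure. The invariant I would maintain is: every cluster $L$ currently stored has $\minc_L, \argminc_L, \maxe0_L, \maxe1_L, \mine_L$ reflecting the current true state of the laminar matroid, while $\Delta_L$ records a uniform additive shift to the $c$-values of path nodes of $L$ that has not yet been propagated to the sub-clusters corresponding to subpaths of $L$. The base case is the initialization of edge-clusters, which set all fields directly from the current $c$-values, weights, and membership in $B$, with $\Delta_L = 0$; correctness is immediate.

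For the inductive step on Joins (both \textsc{JoinCompressCluster} and \textsc{JoinRakeCluster}), the inductive hypothesis provides that the input clusters $L_1$ and $L_2$ have correct stored values and $\Delta_{L_i} = 0$ (since they were just produced by a prior Split or Initialization that zeroed them out). I would then verify field-by-field that the formulas in the pseudocode recover the correct values for $L$. For $\minc$, $\argminc$, and $\mine$ this is straightforward: the minimum over a concatenated path is the minimum of minima, the earliest minimizer is determined by the tie-breaking rule in the pseudocode, and the minimum-weight element of $B$ in $T_1 \cup T_2$ is the min of those in $T_1$ and $T_2$. The delicate cases are $\maxe0$ and $\maxe1$: I would carefully case-split on which child $\argminc_L$ lies in (for compress) or whether $\argminc_L$ equals the rake boundary $x$ (for rake), and argue that leaves descending from nodes strictly between $u$ and $\argminc_L$ partition correctly across the two child clusters, so that taking a max over the appropriate combination yields the right value.

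For Splits, the crucial auxiliary fact is a shift-invariance lemma: if every $c$-value on the path of $L$ is simultaneously shifted by a common integer $\Delta$ (keeping all values nonnegative), then $\argminc_L, \maxe0_L, \maxe1_L, \mine_L$ are unchanged, and $\minc_L$ is shifted by $\Delta$. The invariance of $\argminc$ follows because relative ordering of path values is preserved; $\mine_L$ depends only on $B$, not $c$; and for $\maxe0_L$ and $\maxe1_L$ the explicit ``this value does not care if a node $w$ from $u$ to $v$ has $c_w = 0$'' clause for $\maxe1$ plus the fact that path nodes strictly between $u$ and $\argminc_L$ satisfy $c_x > \minc_L \geq 0$ together imply that the blocker set (non-path nodes with $c_x = 0$) is completely unaffected by shifts along the path. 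Given this lemma, the Split pseudocode, which only updates $\minc$ and $\Delta$ on each child by $\Delta_L$, correctly reinstates the invariant for $L_1, L_2$.

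The hardest part will be the field-by-field analysis for Joins, especially for $\maxe0$, where the defining path (from $u$ to $\argminc_L$) itself depends on where the minimum lies, so one has to chase how the ``truncation at $\argminc$'' interacts with the rake/compress geometry. Everything else reduces to bookkeeping, and the shift-invariance lemma is what allows the lazy $\Delta$ trick to avoid touching the $\maxe$ and $\mine$ fields during Split.
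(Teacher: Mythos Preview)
Your plan is correct and supplies considerably more detail than the paper's own proof, which is a three-sentence structural argument: between two consecutive times a cluster $L$ is touched, no descendant of $L$ can have changed (any such change would require splitting $L$ first), and all lazy updates from $L$'s ancestors have been pushed down via the intervening \textsc{Split}s; hence the stored values remain correct. The paper never verifies the \textsc{Join} formulas field by field, nor does it isolate your shift-invariance lemma; both are left implicit. Your operational induction and the paper's structural argument are two sides of the same coin, but yours is the version that would actually catch an error in the pseudocode.

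One small correction: your parenthetical claim that the children entering a \textsc{Join} satisfy $\Delta_{L_i}=0$ is not true in general, since \textsc{Split} sets $\Delta_{L_i}\leftarrow\Delta_{L_i}+\Delta_L$ rather than zeroing it. This does not matter for your argument, because the \textsc{Join} formulas read only $\minc$, $\argminc$, $\maxe0$, $\maxe1$, and $\mine$ of the children---never $\Delta_{L_i}$---and your invariant already guarantees those five fields are correct at the moment of the \textsc{Join}. Simply drop the parenthetical and the rest of your plan goes through unchanged.
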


\begin{proof}[Proof of Lemma~\ref{lem:exposed_cluster}]
Consider the first time $t$ a node $L$ is considered by the algorithm for a split or join since the last time this occurred at the earlier time $t' < t$.
We must have propagated any lazy updates that occurred between time $t'$ and $t$ from its parent.
Furthermore, there can't have been any changes to descendants of $L$ (with insertions or deletions), since any changes would involve splitting $L$ first. Since that structure beneath $L$ did not change since $t'$, and all of $L$'s ancestors have propagated down their lazy updates, we conclude that the values stored at $L$ are correct.
\end{proof}

We can do our desired updates by first exposing the root and the leaf we are updating.
As the implementations of the changes to the auxiliary variables on splits and joins can be done in $O(1)$  time,
this results in an $O(\log n)$ time operations for \emph{expose}, \emph{cut}, and \emph{link}.

\paragraph{Implementing \textsc{QueryMin}$(u)$ and $\textsc{QueryMax}(u)$}
Calling call $cut(u)$ will return a cluster $L$. By Lemma~\ref{lem:exposed_cluster}, all values stored in $L$ are correct. Since the values are correct, we can compute $\maxe_L$ to give the answer for $\textsc{QueryMax}(u)$, and $\mine_L$ gives $\textsc{QueryMin}(u)$.

\paragraph{Implementing $\textsc{LowestTightConstraint}(u)$}
We can begin by letting $L = expose(r, u)$. Now the answer is given by $\argminc_L$.

\paragraph{Implementing $\textsc{Remove}(u)$ / $\textsc{Add}(u)$}
When we've chosen an element $u$ that is a leaf of the tree $T$ to remove from / insert into our basis $B$, we can increment / decrement $c_v$ by $1$ for all $v$ on the root to leaf path by calling $expose(r, u)$ and updating $\Delta_L$ to update the $c_v$ values lazily. In addition, we need to call $cut(u)$ and $link(u)$ in order to update $\mine$ value at $u$.

\paragraph{Implementing $\textsc{Initialize}(u,v, w)$:} 
When we insert an element $u$, we create a new base cluster for the edge from $u$ to its parent $v$ and join it to the tree. The initialization of the edge $(u,v)$ is described earlier.

\paragraph{Implementing $\textsc{Destroy}(u)$:} 
We simply need to call $cut(e)$. 

\section{Transversal matroids}
\label{sec:transversal}

\paragraph{Dynamic $(1-\eps, 1/2)$-approximate maximum weight oracle.}

For every node $\ell \in L$ we will also store a \emph{virtual weight} which we denote by $vw(\ell)$.
Initially $vw(\ell) = w(\ell)$.
Every time $\ell$ becomes matched, we will decrease the virtual weight by a factor of $(1+\eps)^{-1}$. Once its virtual weight is below 1, it will not become unmatched anymore.
If a \textsc{Decrement}$(\ell, w)$ operation happens, we will set $vw(\ell)$ to $w$ if it was larger than $w$, so we maintain that $vw(\ell) \le w(\ell)$ and unmatch $\ell$ if it was matched.
Observe that the algorithm maintains the following invariant for every vertex $\ell$. That the invariant holds can be shown by a simple induction on the number of (re-)match, un-match, and decrement operations of $\ell$:
\begin{invariant}\label{inv1}
For every node $\ell\in L$, $\ell$ is unmatched iff $vw(\ell) = w(\ell)$. 
Otherwise $vw(\ell) \le w(\ell)/(1+\eps)$.
\end{invariant}

Furthermore the algorithm will maintain the invariant that every node $r\in R$ is matched to the neighbor with the highest virtual weight. Formally we state this as follows:
\begin{invariant}\label{inv2}
For every node $r\in R$, if $r$ is matched to $\ell$ with $vw(\ell) = (1+\eps)^j$ for some $j\ge - \lfloor 1/\eps \rfloor$, then for all other $(\ell', r) \in E$, $vw(\ell') \le (1+\eps)^{j+1} = vw(\ell)(1+\eps)$.
\end{invariant}

\noindent
\emph{Data structure.} We maintain the weight of all the matched vertices of $L$. Additionally,
for each vertex $r$ of $R$ we keep the following data structure:
\begin{enumerate}
\item a list $N_r$ of neighbors of $r$, 
\item a pointer $p_r$ into $N_r$ which points to the first element of $N_r$ that has not yet been processed, and 
\item a value $j_r$ indicating that $r$ is currently looking for matches with  virtual weight at least $(1+\epsilon)^{j_r}$, which is initialized to 
$k := \lfloor n/\epsilon \rfloor$.
\end{enumerate}
The reason for maintaining $p_r$ and $j_r$ is as follows:
$r$ greedily tries to match itself with a neighbor of highest virtual weight, i.e. of weight $(1+\epsilon)^{j_r}$ with $j_r= k$. 
To do so it traverses its neighbor list to find the first such neighbor. 
If none exists, it  decreases $j_r$ by one and restarts the traversal from the beginning of its adjacency list, i.e.~it now checks for a neighbor with second highest virtual weight, and this is repeated until a match is found or $j_r$ becomes negative. 
The next time that $r$ is looking for a match, $r$ does not have to recheck the neighbors that it already has checked during this traversal as virtual weights only decrease. Thus, it remembers the next neighbor to be checked with the pointer $p_r$ and restarts its traversal from that point.
As the traversals stop when $j_r$ drops below a threshold of $-\lfloor 1/\eps \rfloor$, this guarantees that the total work (during the whole algorithm) performed by $r$ while looking for a match is $O(|N_r| (k + 1/\eps))$.

We also assume that each vertex has a non-negative identifier and, thus, the condition that no suitable neighbor has been found  so far is checked by testing whether the variable $unmatchedR$, which is initialized to $-1$,  is still negative.
If no match is found after $j_r$ drops below the threshold, then we match $r$ to any unmatched neighbor (which must have weight $0$).

We initialize our matching with the following static algorithm,  \textsc{FindMatching}, that guarantees that the invariants hold after the initialization. This is necessary to guarantee the maximality of the matching.

\begin{algbox}{\textsc{FindMatching}$(G = ((L,R), E)$} \label{alg:FindMatching}
$M \leftarrow \emptyset$. \\
$vw(\ell) \leftarrow w(\ell)$ for all $\ell \in L$.\\
$j_r \leftarrow k$ for all $r\in R$.\\
$N_r \leftarrow$ list of all neighbors of $r$, for all $r\in R$. \\
$p_r \leftarrow$ first element $N_r$.\\
For $r\in R$:
\begin{enumerate}
    \item  \textsc{MatchR}$(r)$.
\end{enumerate}
\end{algbox}

\begin{funcbox}{\sc MatchR($r$)}
\begin{enumerate}
\item $unmatchedR$ = -1
\item While $j_r \ge - \lfloor 1/\epsilon \rfloor$ and $unmatchedR$ is negative:
\begin{enumerate}
    \item While $p_r$ is not NIL and $unmatchedR$ is negative: 
    \begin{enumerate}
        \item Set $\ell$ to the element of   $N_r$ pointed to by $p_r$.
        \item Set $p_r$ to the next element of $N_r$.
        \item If $vw(\ell) \ge (1+\eps)^{j_r}$:
        \begin{enumerate}
            \item $vw(\ell) \leftarrow vw(\ell)/(1+\eps)$
            \item If $\ell$ was matched to $r'$ in $M$:
            \begin{itemize}
                \item Remove $(\ell, r')$ from $M$
                \item Add $(\ell, r)$ to $M$
                \item $unmatchedR$ = $r'$
            \end{itemize}
            \item Else:
            \begin{itemize}
                \item Add $(\ell, r)$ to $M$
                \item Return
            \end{itemize}
        \end{enumerate}
    \end{enumerate}
    \item If $p_r$ is NIL:
    \begin{enumerate}
    \item $j_r \leftarrow j_r - 1$
    \item Set $p_r$ to the first element of $N_r$
    \end{enumerate}
    \end{enumerate}
\item If $r$ is unmatched, and there is an unmatched neighbor $\ell\in N_r$, match $r$ to $\ell$.
\item If $unmatchedR$ is not -1: \textsc{MatchR}($unmatchedR$)
\end{enumerate}
\end{funcbox}

Observe that \textsc{Matchr}$(r)$ will only match $r$ to nodes that have virtual weight at least $(1+\eps)^{- \lfloor 1/\epsilon \rfloor}$.

When {\sc Decrement}$(\ell, w)$ is called, we may assume we either can round $w$ to the form $(1+\eps)^j$ for some integer $-\lfloor 1/\eps \rfloor \le j \le k$ or is $0$, because the weight would not matter.
The {\sc Decrement} operation is implemented as follows.

\begin{funcbox}{\sc Decrement($\ell$, $w$)}
Let $j$ be the value such that $vw(\ell) = (1+\eps)^j$ \\

If $w < (1+\eps)^{j}$:
\begin{itemize}
    \item $vw(\ell) \leftarrow w$
    \item If $\ell$ is matched to $r\in R$:
    \begin{enumerate}
        \item Remove the edge $(\ell, r)$ from $M$
        \item $\textsc{MatchR}(r)$
        \item Return all changes to the set of matched nodes of $L$.
    \end{enumerate}
\end{itemize}
\end{funcbox}

We next show the desired properties of the algorithm:
\begin{lemma}
The algorithm is $L$-stable.
\end{lemma}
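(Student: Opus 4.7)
The plan is to trace through one call of {\sc Decrement}$(\ell, w)$ and all of the recursive calls to {\sc MatchR} it spawns, and to show by induction on recursion depth that the only $L$-vertex that ever transitions from matched to unmatched during this process is $\ell$ itself. The intuition is that our greedy algorithm propagates ``conflicts'' entirely through $R$: when a new edge is added to $M$ it may displace an $R$-endpoint, but never an $L$-endpoint.

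First I would dissect {\sc Decrement}$(\ell, w)$. The only line that removes an edge from $M$ is ``Remove the edge $(\ell, r)$ from $M$,'' which unmatches exactly $\ell$ on the $L$-side (and $r$ on the $R$-side). Control then passes to {\sc MatchR}$(r)$. So it suffices to show that {\sc MatchR} calls, invoked on $R$-vertices, never unmatch any $L$-vertex.

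Next I would inspect {\sc MatchR}$(r)$ line by line and identify every place where $M$ is modified. There are three:
\begin{enumerate}
\item In the inner ``If $\ell$ was matched to $r'$ in $M$'' branch, the pair $(\ell, r')$ is removed and $(\ell, r)$ is inserted. Hence $\ell \in L$ remains matched (with a new partner), while $r' \in R$ is the vertex that becomes unmatched and is passed on via $unmatchedR$.
\item In the ``Else'' branch, $\ell$ was unmatched before and now becomes matched by $(\ell,r)$; no $L$-vertex is unmatched.
\item In step 3, $r$ is matched to a currently unmatched neighbor; again no $L$-vertex is unmatched.
\end{enumerate}
In all three cases, exactly one $L$-vertex changes status, and that change is from unmatched-or-matched to matched (never the other direction). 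The only recursion is step 4, which calls {\sc MatchR}$(unmatchedR)$ with $unmatchedR \in R$.

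I would then close the argument by induction on the depth of the recursion tree rooted at the top-level call {\sc MatchR}$(r)$ triggered by {\sc Decrement}. The base case is a leaf call of {\sc MatchR}, in which by the case analysis above no $L$-vertex is unmatched. The inductive step is identical: the only new edges removed from $M$ inside a single invocation are of the form $(\ell, r')$ where $\ell$ immediately gets re-matched to $r$, and the recursive call is on an $R$-vertex, to which the inductive hypothesis applies. Putting these facts together, across the entire processing of {\sc Decrement}$(\ell, w)$, the only $L$-vertex whose matched-status can change from matched to unmatched is $\ell$. This is precisely the statement of $L$-stability.

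The main obstacle I anticipate is not the case analysis itself, which is essentially a syntactic audit of the pseudocode, but making sure the induction is well-founded, i.e., that the recursion terminates so that ``depth of the recursion tree'' is a valid inductive measure. Termination follows from the virtual-weight bookkeeping established earlier (each successful match in the inner loop decreases some $vw(\ell)$ by a factor of $1+\eps$, the thresholds $j_r$ only decrease, and the scan stops once $j_r < -\lfloor 1/\eps \rfloor$), so I would briefly invoke that — or defer to the later running-time lemma — to justify that the induction has a finite base.
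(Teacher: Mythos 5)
Your proof is correct and rests on the same key observation as the paper's: {\sc MatchR} never unmatches a vertex of $L$ (a stolen vertex $\ell$ is immediately re-matched to the stealing $r$, and only the displaced $R$-vertex $r'$ propagates through the recursion), so the only $L$-vertex that can become unmatched during {\sc Decrement}$(\ell,w)$ is $\ell$ itself. Your version is merely more explicit (the line-by-line audit and the induction on recursion depth, with the termination caveat deferred to the running-time analysis), but the argument is the same.
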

\begin{proof}
Note that once a vertex $l$ of $L$ is matched, it remains matched until the next
{\sc Decrement($l, \cdot$)} operation, as a {\sc MatchR} operation does not un-match any vertex of $L$. It is only during a {\sc Decrement($l, \cdot$)} operation that $l$ is un-matched and it might become matched again during the subsequence {\sc MatchR} operations. As a {\sc Decrement} operation was executed with parameter $l$, it follows that $l$ is not frozen and, thus, the un-matching of $l$ does not violate the $L$-stability of the algorithm.
\end{proof}

\begin{lemma}
The algorithm maintains a maximal cardinality matching.
\end{lemma}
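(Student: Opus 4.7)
The plan is to prove maximality by induction on the number of data structure operations that have been executed, with the inductive claim being ``every edge of $G$ has at least one matched endpoint after the current call returns''. Two structural facts about \textsc{MatchR} will do most of the work: (F1) no $\ell \in L$ is ever un-matched inside a \textsc{MatchR} call or any of its recursive descendants---a ``steal'' only re-assigns the partner of $\ell$ from one $r$ to another; and (F2) whenever \textsc{MatchR}$(r)$ returns with $r$ still unmatched, Step~3 was just executed and certified that every $\ell \in N_r$ is matched at that moment.

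The base case (after \textsc{FindMatching}) is immediate from (F1) and (F2): any $\ell$ matched during the loop stays matched until the loop ends, so (F2) applied to each unmatched $r$ at its return gives maximality at the end of the loop.

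For the inductive step, consider a \textsc{Decrement}$(\ell^*,w)$ applied to an already-maximal matching. The call un-matches $\ell^*$ from its partner $r$ and invokes \textsc{MatchR}$(r)$, which initiates a chain $r = r_1 \to r_2 \to \cdots \to r_t$, where $r_{i+1}$ is the previous partner of the $\ell$ that $r_i$ just stole. By (F1), the only $L$-vertex ever un-matched anywhere in the chain is $\ell^*$; and after the chain $r_1,\ldots,r_{t-1}$ are all matched, while $r_t$ is either matched (via traversal or Step~3) or else satisfies (F2). The only potentially offending unmatched pairs are therefore (i) $r_t$ paired with some unmatched $\ell$, which is ruled out by (F2) together with (F1); and (ii) $\ell^*$ (if not re-matched by the chain) paired with some $r_0 \notin \{r_1,\ldots,r_t\}$ that was already unmatched before \textsc{Decrement}.

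The main obstacle is ruling out case (ii). Such an $r_0$ can exist only if its last \textsc{MatchR}$(r_0)$ call had reached Step~3 with $\ell^*$ matched, which by the traversal rule required $vw(\ell^*)$ to already be below $(1+\eps)^{-\lfloor 1/\eps \rfloor}$ at that moment. I would rule out (ii) by showing the chain must re-match $\ell^*$. Since $r = r_1 \in N_{\ell^*}$ is the chain's origin, and each steal strictly decreases some $vw(\ell)$ by a factor of $(1+\eps)$ by Invariants~\ref{inv1}--\ref{inv2}, a potential argument tracking $\sum_{\ell} \log_{1+\eps}(vw(\ell))$ over the neighborhoods visited by the chain bounds the total number of chain steps; the chain must therefore terminate by some \textsc{MatchR}$(r_i)$ reaching Step~3. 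The key observation is that, because the virtual weights of $r$'s neighbors other than $\ell^*$ are themselves driven down by repeated ping-ponging, \textsc{MatchR} must eventually be re-invoked on $r$ itself (or on another neighbor of $\ell^*$ participating in the chain) in a state whose traversal exhausts all high-$vw$ options; at that moment $\ell^* \in N_{r_i}$ is unmatched, so Step~3 catches it and matches $r_i$ with $\ell^*$. This re-matching of $\ell^*$ eliminates case (ii) and completes the induction.
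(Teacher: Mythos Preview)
Your resolution of case (ii) has a genuine gap. You assert that the chain initiated by \textsc{MatchR}$(r)$ must eventually re-match $\ell^*$ because ``\textsc{MatchR} must eventually be re-invoked on $r$ itself (or on another neighbor of $\ell^*$ participating in the chain)''. Nothing in your argument establishes this. The potential $\sum_\ell \log_{1+\eps} vw(\ell)$ only bounds the total number of steals; it says nothing about \emph{which} $r$-vertices the chain visits. Concretely, the chain $r=r_1\to r_2\to\cdots\to r_t$ can terminate at some $r_t\notin N(\ell^*)$ in two ways you do not rule out: (a) $r_t$ finds in its while-loop traversal a previously-unmatched $\ell'\neq\ell^*$ (such $\ell'$ can exist---maximality before \textsc{Decrement} only says $\ell'$'s neighbors were matched, and $r_t$ was matched then), matches it, and \textsc{Return}s; or (b) $r_t$ exhausts its levels and reaches Step~3 with every neighbor already matched, none of them being $\ell^*$. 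In either scenario $\ell^*$ stays unmatched while a previously-unmatched $r_0\in N(\ell^*)$ (outside the chain) is never revisited, so the edge $(\ell^*,r_0)$ is exposed. Your ``key observation'' that ping-ponging eventually forces a return to $N(\ell^*)$ is precisely what needs proof and is not supplied.

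For comparison, the paper's proof is a single sentence: every $r\in R$ is either matched or has all neighbors matched, by Step~3. This is exactly your (F2) and does not engage with case (ii) at all. You have correctly identified a subtlety that the paper's argument glosses over; the issue is that your proposed way of closing it does not work as written.
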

\begin{proof}
By our algorithm either every node of $r$ of $R$ will either be matched, or all neighbors are already matched.
\end{proof}

\begin{lemma}
The algorithm maintains Invariant~\ref{inv2}.
\end{lemma}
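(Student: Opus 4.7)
The plan is to prove Invariant~\ref{inv2} by tracking virtual weights through the execution, reducing the statement to a single central lemma proved by induction on match events. Three ambient facts drive the argument. (a) Every $vw(\ell)$ is always $0$ or a power of $(1+\eps)$: preserved by the up-front rounding of weights to powers of $(1+\eps)$, by \textsc{MatchR}'s division by $(1+\eps)$, and by \textsc{Decrement}, whose argument is of this form. (b) Virtual weights are monotonically non-increasing in time. (c) For each $r\in R$, $j_r$ never increases, and $j_r$ decreases from $j_r+1$ to $j_r$ only at the instant at which $p_r$ has walked off the end of $N_r$ at level $j_r+1$. Moreover, while an edge $(\ell,r)$ sits in $M$, the value $vw(\ell)$ can change only via a subsequent \textsc{MatchR}$(r'')$ that re-matches $\ell$ or via \textsc{Decrement}$(\ell,\cdot)$; both first remove $(\ell,r)$ from $M$. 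Hence the hypothesis of the invariant for $r$ survives only between a match event and the next event that unmatches $r$, and during that interval (b) only makes $vw(\ell')\le(1+\eps)^{j+1}$ easier for every other $\ell'\in N_r$; so it suffices to verify the invariant immediately after each match event.

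The central claim I would prove by induction on match events in chronological order is: \emph{at any match event in which \textsc{MatchR}$(r)$ matches $r$ to $\ell$ at level $j_r$, we have $vw(\ell)=(1+\eps)^{j_r}$ just before the match.} The lower bound $vw(\ell)\ge(1+\eps)^{j_r}$ is the \textsc{MatchR} match condition. For the matching upper bound, if $j_r=k$ then $vw(\ell)\le w(\ell)\le(1+\eps)^k$ directly, while if $j_r<k$, fact (c) guarantees an earlier time $t'$ at which $p_r$ inspected $\ell$ at level $j_r+1$. At $t'$, either $vw(\ell)^{(t')}<(1+\eps)^{j_r+1}$ and $\ell$ was skipped, or $vw(\ell)^{(t')}\ge(1+\eps)^{j_r+1}$ and $\ell$ was matched at level $j_r+1$, in which case the inductive hypothesis gives $vw(\ell)^{(t'+)}=(1+\eps)^{j_r}$. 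Either way, monotonicity (b) yields $vw(\ell)\le(1+\eps)^{j_r}$ at the current event, so equality holds.

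Invariant~\ref{inv2} now follows. After the current match, $vw(\ell)=(1+\eps)^{j_r-1}$, giving $j:=j_r-1$. For any other $\ell'\in N_r$, the identical full-traversal argument applied to $\ell'$ at level $j_r+1$ (or via $w(\ell')\le(1+\eps)^k$ when $j_r=k$) gives $vw(\ell')\le(1+\eps)^{j_r}=(1+\eps)^{j+1}$; by (b) this bound persists until $(\ell,r)$ leaves $M$. The only remaining case is the step~3 fallback, which matches $r$ only after $j_r$ has dropped below $-\lfloor 1/\eps\rfloor$; then $vw(\ell)<(1+\eps)^{-\lfloor 1/\eps\rfloor}$, so the precondition $j\ge-\lfloor 1/\eps\rfloor$ of the invariant fails throughout and nothing needs to be checked.

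The main obstacle is the bookkeeping behind the full-traversal step: a successful match at level $j_r+1$ terminates the current \textsc{MatchR} call, so the traversal of $N_r$ at that level may be interrupted many times and resumed across multiple re-entries of \textsc{MatchR}$(r)$ triggered by cascading thefts of $r$'s partner. One must combine the per-level monotonicity of $p_r$ with the global monotonicity of $j_r$ to ensure that every element of $N_r$ is inspected at level $j_r+1$ at some moment before $j_r$ is decremented, so that (b) can be applied to push the bound $vw(\cdot)<(1+\eps)^{j_r+1}$ forward to the current match event.
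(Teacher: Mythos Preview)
Your proposal is correct and follows essentially the same route as the paper: reduce to the moment immediately after a match in \textsc{MatchR}, use monotonicity of virtual weights, and use the fact that $j_r$ only drops after a full pass over $N_r$ at the current level. The one substantive difference is that you isolate and prove by induction the ``central claim'' that $vw(\ell)=(1+\eps)^{j_r}$ exactly just before each match; the paper's proof asserts the consequence ``$j_r=j+1$'' and ``no neighbor has virtual weight $(1+\eps)^{j_r+1}$ or larger'' directly, which in fact tacitly relies on this same inductive fact (otherwise a neighbor matched at level $j_r+1$ could in principle have started with virtual weight far above $(1+\eps)^{j_r+1}$ and still exceed $(1+\eps)^{j_r}$ after one division). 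So your argument is a more carefully justified version of the paper's, not a different approach.
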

\begin{proof}
As the virtual weight of nodes never increases and it remains unchanged while it is matched,
it suffices to show that the invariant holds for $r$ immediately after running \textsc{Matchr}$(r)$ before running {\sc Matchr} on the next unmatched node.
After {\sc Matchr($r$)} has been executed, suppose $r$ is matched to $\ell$ with virtual weight $(1+\eps)^{j}$.

There are two cases to consider.

\emph{Case 1: $j \ge {- \lfloor 1/\eps \rfloor}.$}
Then in Line 2.(a)iii. in {\sc Matchr($r$)} it must have been true that $vw(\ell)
\ge (1+\eps)^{j + 1}$. This implies that $j_r = j+1$.
It follows that $r$ has checked at some earlier point in time that no other neighboring nodes have virtual weight $(1+\eps)^{j_r+1}$ or larger.
As virtual weights never increase, this means that no neighboring node with virtual weight $(1+\eps)^{j_r+1}$ or larger can exist. Thus, all neighbors must have virtual weight at most
$(1+\eps)^{j_r} = (1+\eps)^{j+1}$. 

\emph{Case 2: $j <  {- \lfloor 1/\eps \rfloor}$}.
Then Invariant~\ref{inv2} does not apply to $r$, i.e., it holds trivially.

Since at initialization we call \textsc{Matchr}$(r)$ on all nodes $r\in R$, the invariant holds for all nodes any time after initialization.
\end{proof}

For the running time analysis, we only need to inspect how much work \textsc{MatchR} does for any particular $r\in R$.
It will scan through incident edges at most $\lfloor 1/\eps \rfloor + 1 + k = O(1/\eps + \log n)$ times. 
Thus the total work done by the algorithm throughout the algorithm is at most:
\[ \sum_{r\in R} \sum_{(\ell, r)\in E} O(1/\eps + \log n)  = O(m (1/\eps + \log n)) \]

As we actually maintain the matching explicitely the running time bound also gives an upper bound on the total number of changes to the base.

Next we show an approximation ratio of $(1+\eps)^{-1}$.
The approximation algorithm can be analyzed by inspecting the maximum weight basis $B^*$ and the matching $M^*$ that witnesses the basis, and comparing it to the $M$ we maintain.
We consider the symmetric difference $M\oplus M^* = (M-M^*) \cup (M^* - M)$, 
consisting of cycles, even length paths, and odd length paths. It suffices to analyze the worst case ratio of the weight $M^*$ attains versus the weight $M$ attains on each of these cycles and paths.

The easiest case is for the cycles.
Since the graph is bipartite, the cycle is even, and both $M$ and $M^*$ must match the same endpoints of the matching. 
This means both must match the same set of vertices in $L$ and get the same weight.

For paths we will use the following lemma:
\begin{lemma} \label{lem:decpath}
Let $P$ be a path of $M \oplus M^*$ starting at an unmatched (in $M$) node $\ell^* \in L$ of weight $(1+\eps)^{j^*}$.
If $P$ has a node $\ell \in L$ with $vw(\ell) = (1+\eps)^{j}$ then $$w(P) - w(\ell^*) \ge \sum_{i=j}^{j^*-1} (1+\eps)^i = \frac{(1+\eps)^{j^*} -(1+\eps)^j}{\eps}.$$
\end{lemma}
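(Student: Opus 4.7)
The plan is to exploit Invariants~\ref{inv1} and~\ref{inv2} along the alternating structure of $P$, telescoping a chain of virtual-weight inequalities and then converting virtual weights back to true weights before summing.

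First I would parametrize the path. Since $\ell^*$ has no $M$-edge, the first edge of $P$ must lie in $M^*$, and the edges alternate $M^*, M, M^*, M, \dots$; thus $P$ can be written as $\ell^* = \ell_0, r_1, \ell_1, r_2, \ell_2, \dots, r_k, \ell_k$ with $(\ell_{i-1}, r_i) \in M^*$ and $(r_i, \ell_i) \in M$ for $1 \le i \le k$ (the path may end with one additional $M^*$-edge at $\ell_k$, but only $L$-vertices carry weight). We may assume $\ell = \ell_k$ with $k \ge 1$, since $\ell = \ell^*$ makes both sides zero; we may also assume $j \le j^*-1$, since otherwise the right-hand side is an empty or negative geometric sum.

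The key step applies Invariant~\ref{inv2} at each $r_i$. The vertex $r_i$ is matched in $M$ to $\ell_i$ (whose virtual weight is at least $(1+\eps)^{-\lfloor 1/\eps \rfloor}$ because $\ell_i$ is matched), and $\ell_{i-1}$ is also a neighbor of $r_i$ through the $M^*$-edge, so
\[
vw(\ell_{i-1}) \;\le\; (1+\eps)\cdot vw(\ell_i).
\]
Iterating yields $vw(\ell_i) \ge (1+\eps)^{j^* - i}$ for every $0 \le i \le k$, and setting $i=k$ with $vw(\ell_k)=(1+\eps)^j$ forces $k \ge j^* - j$. For each $i \ge 1$, the vertex $\ell_i$ is matched in $M$, so Invariant~\ref{inv1} gives $w(\ell_i) \ge (1+\eps)\,vw(\ell_i) \ge (1+\eps)^{j^*-i+1}$.

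Putting these together,
\[
w(P) - w(\ell^*) \;\ge\; \sum_{i=1}^{k} w(\ell_i) \;\ge\; \sum_{i=1}^{k} (1+\eps)^{j^*-i+1} \;=\; \sum_{m=j^*-k+1}^{j^*}(1+\eps)^m.
\]
Since $k \ge j^*-j$, the summation range contains $\{j+1, \dots, j^*\}$, and the termwise bound $(1+\eps)^{m+1} \ge (1+\eps)^m$ gives $\sum_{m=j+1}^{j^*}(1+\eps)^m \ge \sum_{i=j}^{j^*-1}(1+\eps)^i$, with the closed form following from the geometric-series identity. The main subtlety is just confirming the alternating structure of $P$ (which is forced by $\ell^*$ being $M$-unmatched together with bipartiteness, so that every intermediate $\ell_i$ is genuinely $M$-matched and the full $(1+\eps)$ factor in Invariant~\ref{inv1} is available); once that is verified, the rest is a mechanical telescoping argument.
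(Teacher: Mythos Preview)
Your argument is correct and follows the paper's proof almost verbatim: parametrize the alternating path, use Invariant~\ref{inv1} at the unmatched endpoint to get $vw(\ell_0)=(1+\eps)^{j^*}$, iterate Invariant~\ref{inv2} along the $r_i$ to obtain $vw(\ell_i)\ge(1+\eps)^{j^*-i}$ and hence $k\ge j^*-j$, then sum; the only cosmetic difference is that you invoke the extra $(1+\eps)$ factor from Invariant~\ref{inv1} for matched vertices and immediately discard it, whereas the paper simply uses $w(\ell_i)\ge vw(\ell_i)$. One small caveat: your parenthetical that matched vertices automatically satisfy $vw(\ell_i)\ge(1+\eps)^{-\lfloor 1/\eps\rfloor}$ is not literally guaranteed by the algorithm, but the paper's own proof likewise applies Invariant~\ref{inv2} without checking its hypothesis, so your write-up is no looser than the original.
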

For even length paths, if $M$ and $M^*$ share the same set of matched vertices of $L$ on the path, they get the same weight. 
Otherwise, for even length paths $P$ that start at some vertex $\ell^* \in M^*$ with weight $(1+\eps)^{j^*}$ and end at $\ell \in M$ with weight $(1+\eps)^{j}$, we can compute the ratio $w(P\cap M^*) / w(P\cap M)$ by applying Lemma~\ref{lem:decpath}.

\[ \frac{w(P\cap M^*)}{w(P\cap M)} = \frac{w(P) - w(\ell)}{w(P) - w(\ell^*)} 
= 1 + \frac{w(\ell^*) - w(\ell)}{w(P) - w(\ell^*)} 
\le 1 + \eps\frac{ (1+\eps)^{j^*} - (1+\eps)^j}{(1+\eps)^{j^*} - (1+\eps)^j } 
= 1+ \eps 
\]
For an odd length path $P$, by the optimality of $M^*$, the path must start at an unmatched (in $M$) node $\ell^* \in L$ with weight $(1+\eps)^{j^*}$ and end at an unmatched  (in $M$) node $r\in R$.
This means that $M^*$ must contain an  edge $(\ell, r)$ for some $\ell \in L$. Since $r$ is unmatched in $M$, we must have $vw(\ell) < (1+\eps)^{-\lfloor 1/\eps \rfloor}$, i.e.~$j<-\lfloor 1/\eps \rfloor$. 
As $\ell^*$ has weight $(1+\eps)^{j^*}$, it holds that $j^* \ge 0$, which implies $j^* - j \ge 1/\eps$.
We can apply Lemma~\ref{lem:decpath} to $P$ and get:
\[ \frac{w(P\cap M^*)}{w(P\cap M)} = \frac{w(P)}{w(P) - w(\ell^*)} 
= 1 + \frac{w(\ell^*)}{w(P) - w(\ell^*)} 
\le 1 + \frac{ \eps(1+\eps)^{j^*-j}}{(1+\eps)^{j^*-j} - 1 } \]

Claim~\ref{claim1} below shows that
$1 + \frac{ \eps(1+\eps)^{j^*-j}}{(1+\eps)^{j^*-j} - 1 }
\le 1+ \eps + \frac{1}{j^*-j} \le 1 + O(\eps)$, where the last inequality follow from the fact 
that $j^* - j = \Omega( 1/\eps)$

\begin{claim}\label{claim1}
For all integer $k > 0$ and any $\eps \ge 0$ it holds that
$\frac{ \eps(1+\eps)^{k}}{(1+\eps)^{k} - 1 }
\le \eps + \frac{1}{k}$
\end{claim}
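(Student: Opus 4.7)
The plan is to reduce Claim~\ref{claim1} to the standard Bernoulli inequality $(1+\eps)^k \ge 1 + k\eps$, which holds for every integer $k \ge 1$ and every $\eps \ge 0$.

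First I would handle the trivial case $\eps = 0$ by noting that the inequality holds in the limit (both sides equal $1/k$ after canceling common factors via $(1+\eps)^k - 1 = k\eps + O(\eps^2)$); for the statement to be nontrivial, we assume $\eps > 0$, so that $(1+\eps)^k - 1 > 0$ and we may multiply through by $k((1+\eps)^k - 1)$ without changing the direction of the inequality.

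The main step is the algebraic rearrangement. Multiplying both sides of the desired inequality by $k((1+\eps)^k - 1)$ gives
\begin{equation*}
k\eps(1+\eps)^k \le (k\eps + 1)\bigl((1+\eps)^k - 1\bigr).
\end{equation*}
Expanding the right-hand side yields $k\eps(1+\eps)^k + (1+\eps)^k - k\eps - 1$, so after cancellation of the common term $k\eps(1+\eps)^k$, the inequality is equivalent to
\begin{equation*}
0 \le (1+\eps)^k - 1 - k\eps, \qquad\text{i.e.,}\qquad (1+\eps)^k \ge 1 + k\eps.
\end{equation*}

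The last inequality is Bernoulli's inequality for integer exponents $k \ge 1$ and $\eps \ge 0$, which can be proved by a one-line induction on $k$: the base case $k = 1$ is an equality, and assuming $(1+\eps)^k \ge 1 + k\eps$ we get $(1+\eps)^{k+1} \ge (1+k\eps)(1+\eps) = 1 + (k+1)\eps + k\eps^2 \ge 1 + (k+1)\eps$. This completes the proof; there is no real obstacle here, the only subtlety being to check that the multiplier $k((1+\eps)^k - 1)$ used to clear denominators is strictly positive so that the direction of the inequality is preserved.
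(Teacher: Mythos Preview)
Your proof is correct and follows essentially the same approach as the paper: both rearrange the inequality algebraically to reduce it to Bernoulli's inequality $(1+\eps)^k \ge 1 + k\eps$. The only minor difference is that the paper verifies this last inequality by comparing derivatives in $\eps$ (both sides agree at $\eps=0$, and the right side has derivative $(1+\eps)^{k-1} \ge 1$), whereas you use the standard one-line induction on $k$.
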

\begin{proof}
To show the claim it suffices to show that $\eps(1+\eps)^k \le (\eps + \frac{1}{k}) ((1+\eps)^{k} - 1)$, or equivalently that
$\eps(1+\eps)^k +(\eps + \frac{1}{k}) \le(\eps + \frac{1}{k}) (1+\eps)^{k}$, i.e.,
$\eps + \frac{1}{k} \le \frac{1}{k} (1+\eps)^{k}$.

Note that this inequality holds with equality for $\eps = 0$.
To finish the proof we differentiate both sides by $\eps$ and show that the right side grows as least as fast with $\eps $ as the left side. This implies that the inequality holds for all $\eps \ge 0$.
Differentiating the left side by $\eps$ gives 1, while 
differentiating the right side gives $(1+\eps)^{k-1}$, which is at least 1 as $k\ge 1$. Thus the claim follows.
\end{proof}

\begin{proof}[Proof of Lemma~\ref{lem:decpath}]
Let $\ell_0,r_1, \ell_1, r_2 ..., \ell_{s}, r_{s}$ if $P$ is odd or
$\ell_0,r_1, \ell_1, r_2 ..., \ell_{s}$ if $P$ is even
be the nodes of $P$ in order of the path with $\ell_0 = \ell^*$.
Consider the node $\ell = \ell_{s-1}$ and let $j$ be such that $vw(\ell) = (1+\eps)^j$.
Since $\ell_0$ is unmatched in $M$, we know that $vw(\ell_0) = w(\ell_0) = (1+\eps)^{j^*}$.
By invariant~\ref{inv2} for every $i$ with $1 \ge i \ge s$ on $P$ it holds that $vw(\ell_i) \ge vw(\ell_{i-1})/(1+\eps)$.
Thus 
 by induction it follows that that $vw(\ell_i) \ge (1+\eps)^{j^* - i}$. Since $vw(\ell_{s-1}) = (1+\eps)^j$, we must have $s \ge j^*-j$. Thus:
\[
w(P) - w(\ell^*)
\ge \sum_{i=1}^{s} w(\ell_i) 
\ge \sum_{i=1}^{s} vw(\ell_i) 
\ge \sum_{i=j}^{j^*-1} (1+\eps)^i 
\]
\end{proof}

\appendix

\section*{Appendix}

\section{Insertions and deletions change at most one maximum weight basis element}\label{ap:stable}
In this section, we show that insertions and deletions of weighted elements to a matroid changes the maximum weight basis by at most one element. This lemma is folklore, but we provide a proof here for completeness.

Let $\cM = (E, \cI)$ be a matroid where every element $e\in E$ has a non-negative weight $w(e)$. For sets $S \subseteq E$, we denote the sum of weights in the set by $w(S)$. Let $B$ be the maximum weight basis in $\cM$. An insertion (or deletion) is from $\cM$ involves adding (or removing) an element from $E$, and adding to (removing from) $\cI$ independent sets containing the newly inserted (deleted) element.

\begin{lemma}
Insertion / deletion of an element $e$ changes the maximum-weight basis by at most one element: Either it stays the same, or an element gets replaced by $e$ (in the case of insertion), or $e$ gets replaced by another element (in the case of deletion).
\end{lemma}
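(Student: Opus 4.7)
I would split the argument into the insertion case and the deletion case, and inside each case split further on whether the changed element appears in the current max-weight basis $B$. The workhorses will be two standard matroid facts: the fundamental circuit structure (for any $e \notin B$ with $B+e$ dependent, there is a unique circuit $C_e \subseteq B+e$ containing $e$, and $B+e-f$ is a basis iff $f \in C_e\setminus\{e\}$), and the dual fact for cocircuits (for any $e \in B$, the set of $g \notin B$ with $B-e+g$ independent equals $C^*_e\setminus\{e\}$ for the fundamental cocircuit $C^*_e$).

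\emph{Insertion of $e$ with weight $w(e)$.} If $B+e$ is independent in $\cM'$, then $\mathrm{rank}(\cM') \ge |B|+1$, and since adding a single element raises the rank by at most one, $\mathrm{rank}(\cM')=|B|+1$. Any new basis $B^\star$ must contain $e$ (otherwise $B^\star$ would be an independent set of size $|B|+1$ in $\cM$, contradicting $\mathrm{rank}(\cM)=|B|$), and $B^\star\setminus\{e\}$ is then an $\cM$-independent set of size $|B|$, giving $w(B^\star)\le w(B)+w(e)=w(B+e)$. Hence $B+e$ is the new max-weight basis. Otherwise, $B$ is still a basis in $\cM'$, and letting $f^\star=\arg\min_{f\in C_e\setminus\{e\}} w(f)$, I claim the new max-weight basis is $B$ if $w(e)\le w(f^\star)$ and $B+e-f^\star$ otherwise. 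Any new basis $B^\star$ not containing $e$ is an old basis, so $w(B^\star)\le w(B)$. For any new basis $B^\star$ containing $e$, apply the strong basis-exchange axiom to $B+e-f^\star$ and $B^\star$ (both contain $e$): one can iteratively transform $B^\star$ to $B+e-f^\star$ through basis-preserving swaps, and each swap improves or equals the weight (this is the standard weighted matroid greedy-exchange argument, equivalent to optimality of Kruskal's algorithm). So $w(B^\star)\le w(B+e-f^\star)$.

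\emph{Deletion of $e$.} If $e\notin B$, the rank does not change and $B$ itself remains an $\cM'$-basis; every new basis is an old basis not containing $e$, so $B$ is still max-weight and nothing changes. If $e\in B$, then $B-e$ is $\cM'$-independent of size $|B|-1$. Either $\mathrm{rank}(\cM')=|B|-1$, in which case $B-e$ is itself a new basis, and any competing basis has equal size $|B|-1$ and can be reached from $B-e$ by a sequence of weight-preserving or weight-decreasing swaps (again by weighted matroid greedy-exchange), so $B-e$ is max-weight; or $\mathrm{rank}(\cM')=|B|$, meaning the fundamental cocircuit of $e$ contains at least one element $g\ne e$ with $B-e+g$ independent. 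Let $g^\star$ maximize $w(g)$ over all such $g$. Then $B-e+g^\star$ is the new max-weight basis: any new basis $B^\star$ has size $|B|$ and avoids $e$; the same exchange argument applied to $B^\star$ and $B-e+g^\star$ shows $w(B^\star)\le w(B-e+g^\star)$, using at each exchange step that $g^\star$ was chosen to maximize the weight among valid extensions.

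In each sub-case the new max-weight basis differs from $B$ by at most one element—either no change, or a single swap ($B \mapsto B+e$, $B \mapsto B+e-f^\star$, $B\mapsto B-e$, or $B\mapsto B-e+g^\star$). The main obstacle is the verification that the one-swap candidate is optimal \emph{among all} new bases (not just among bases reachable by a single swap); this is where I rely on the weighted matroid greedy-exchange fact, reducing global optimality to local exchange optimality. Everything else is bookkeeping with fundamental circuits/cocircuits.
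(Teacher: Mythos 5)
Your argument is correct, but it takes a genuinely different route from the paper. The paper's proof is non-constructive: it perturbs weights so that the maximum-weight basis is unique, assumes toward a contradiction that $B'\setminus B$ contains some $b'\neq e$ (resp.\ that $B'\neq B-e+e'$), and applies the strong (symmetric) basis-exchange property to produce a pair of swaps whose weights sum to $w(B)+w(B')$, so one of them strictly improves one of the two bases --- contradicting optimality. You instead \emph{construct} the new optimum explicitly via fundamental circuits and cocircuits ($B+e$, $B+e-f^\star$ with $f^\star$ the lightest element of the fundamental circuit, $B-e$, or $B-e+g^\star$ with $g^\star$ the heaviest valid replacement), and then certify its global optimality by appealing to the fact that, in a matroid, a basis with no improving single swap is globally maximum (the greedy/Kruskal exchange lemma, applied in the restriction or contraction by $e$ where needed). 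Your approach buys more information --- it tells the reader \emph{which} element is swapped, which is exactly what the dynamic data structures in Sections 5--6 compute --- at the cost of leaning on the local-to-global exchange lemma as a black box; the step ``one can iteratively transform $B^\star$ to $B+e-f^\star$ through basis-preserving swaps, each non-decreasing in weight'' is the one place you should either cite that lemma precisely or phrase it contrapositively (a non-optimal basis admits an improving swap, and $B+e-f^\star$ admits none among bases containing $e$). The paper's route avoids this lemma entirely, needing only the symmetric exchange axiom, but does not identify the replacement element. Both are valid; neither subsumes the other.
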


\begin{proof}
Without loss of generality, assume all elements in $\cM$ and the newly inserted element have distinct weights. This guarantees that the maximum-weight basis is unique. If some weights are equal, we can break ties arbitrarily but consistently.\footnote{For example, by associating the elements $e \in \cM$ with a unique index $i_e \in \mathbb{Z}^+$ and adding $\eps^{i_e}$ to each elements weight for some small enough $\eps$.} 

Let $B^\prime$ be the maximum weight basis after $e$ is inserted (deleted). If $B = B^\prime$ we are done, so assume this is not the case. We separate the proof into two cases.

\paragraph{Inserting $e$ into $\cM$:}
$e$ must be in $B^\prime \setminus B$, otherwise we have $B'=B$ (since the maximum-weight basis is unique). Suppose for the sake of contradiction that there exists $b^\prime \in B^\prime \setminus B$ such that $b' \neq e$. Hence, $b^\prime$ was in the matroid prior to the insertion of $e$. 

By the strong exchange property, there exists $b \in B \setminus B'$ (which was also in the matroid originally) such that $B - b + b^\prime$ and $B^\prime - b^\prime + b$ are both bases. Since the matroid elements have distinct weights, we must have either $w(B) < w(B - b + b^\prime)$ or $w(B^\prime) < w(B^\prime - b^\prime + b)$. In either case, we get a contradiction: Either $B$ was not a maximum-weight basis originally, or $B^\prime$ is not a maximum-weight basis after the insertion of $e$.

\paragraph{Deleting $e$ from $\cM$:}
If $e \notin B$, then clearly $B'=B$, so assume that $e \in B \setminus B'$.
By the strong exchange property, there is $e' \in B' \setminus B$ such that $B-e+e'$ and $B'+e-e'$ are both bases. We have $w(B-e+e') + w(B'+e-e') = w(B) + w(B')$, so it must be the case that either
$w(B'+e-e') \geq w(B)$ or $w(B-e+e') \geq w(B')$. Unless $B' = B-e+e'$, whichever inequality holds must be strict, by the distinctness of weights. So either $B$ was not a maximum-weight basis originally (since $B'+e-e'$ is a valid basis before deleting $e$) or $B'$ is not a maximum-weight basis after the deletion (since $B-e+e'$ is a valid basis after the deletion). Hence it must be the case that $B' = B-e+e'$.
\end{proof}

\section{Rounding the fractional solutions}
\label{sec:rounding}
In this section, we describe procedures to round the fractional solution output by {\sc ContinuousGreedy} into an integral one. The output of {\sc ContinuousGreedy} is a convex combination of $O(1/\eps)$ bases of the matroid. To round the fractional solution into an integral one, we use the {\sc SwapRounding} algorithm of Chekuri, Vondr{\'{a}}k and Zenklusen~\cite{CVZ10}. This algorithm can be implemented in $O(b\cdot \alpha(\cM))$ time, where $b$ is the number of bases in the fractional solution, and $\alpha(\cM)$ is the time it takes to run the {\sc MergeBases} subroutine (see Algorithm~\ref{alg:mergebases}). Given two bases $B_1$ and $B_2$, the {\sc MergeBases} subroutine creates a new base $B_{12}$ from the elements in $B_1 \cup B_2$. 

\begin{algbox}[label=alg:mergebases]{{\sc MergeBases$(\alpha_1, B_1, \alpha_2, B_2)$}}
$\cS \gets B_1 \setminus B_2$ \\
For each element $i$ in $\cS$:
\begin{itemize}
    \item Find $j \in B_2 \setminus B_1$ such that $B_1 - i + j$ and $B_2 - j + i$ are both bases.
    \item With probability $\alpha_2 / (\alpha_1 + \alpha_2)$, set $B_1 \leftarrow B_1 - i + j$. Otherwise, set $B_2 \leftarrow B_2 - j + i$.
\end{itemize}
Return $B_1$. (At the end of this routine, $B_1 = B_2$.)
\end{algbox}

For {\sc ContinuousGreedy}, $b = O(1/\eps)$, so it does not contribute significantly to the running time. Our goal in this section is to show that that there exists data structures for which $\alpha(\cM)$ is small enough for the rounding phase to not be a bottleneck in the overall algorithm.

\paragraph{Rounding with a transversal matroid constraint}
For a transversal matroid, our prior algorithms provides a matching certifying the bases provided by {\sc ContinuousGreedy}. Given these matchings, $\alpha(\cM)$ can be implemented in $O(\rank^2(\cM))$ time by doing the following:
\begin{enumerate}
    \item Given the matchings $M_1$ and $M_2$ associated with $B_1$ and $B_2$ respectively, form the graph with the edges $M_1 \cup M_2$. This graph consists paths and even length cycles on $O(\rank(\cM))$ edges.
    \item Find any path from a vertex in $B_1 \setminus B_2$ to $B_2 \setminus B_1$. One is guaranteed to exist due to the matroid basis exchange property.
    \item Update $B_1$ or $B_2$ respectively according to the {\sc MergeBases}.
\end{enumerate}
Each augmenting path takes $O(\rank(\cM))$ time to find as there are $O(\rank(\cM))$ edges. This is repeated up to $O(\rank(\cM))$ times, resulting in {\sc MergeBases} running in $O(\rank^2(\cM))$ time. 

\paragraph{Rounding with a laminar matroid constraint}
Using the data structures developed in \Cref{sec:laminar}, we can implement the choosing of $i$ and $j$ in {\sc MergeBases} in $O(\log n)$ time by doing the following:
\begin{enumerate}
    \item Initialize two copies of the laminar matroid data structure $\cD_1$ and $\cD_2$ with the laminar sets of $\cM$ of \Cref{sec:laminar}, and call \textsc{Insert} on all elements of $B_1 \cup B_2$ into with weight $1$ (we will use both as unweighted data structures).
    In $\cD_1$, for every $e \in B_1$ call \textsc{Add}$(e)$ to update constraints,
    and also call \emph{cut}$(e)$ if $e\in B_1\cap B_2$.
    In $\cD_2$, do the same but for elements $e' \in B_2$.
     
    \item Iterate through the elements of $B_1 \setminus B_2$ one by one, as in the {\sc MergeBases} algorithm. Let the leaf $u_i$ correspond to $i\in B_1$ and $i\notin B_2$ that we are considering in the loop of of the algorithm.
    Let $v = \textsc{LowestTightConstraint}(u_i)$ in $\cD_2$, or the root node if there are no tight constraints. 
    
    Call \textsc{Remove}$(u_i)$ in $\cD_1$ and \emph{cut}$(u_i)$.
    Let $j$ be the matroid element corresponding to leaf $u_j = \textsc{QueryMax}(v)$ in $\cD_1$ which guarantees that if $j$ exists $j\in B_2\setminus B_1$ as we only have elements of $B_1 \cup B_2$ inserted in the data structure and \textsc{QueryMax} guarantees $j$ is not in $B_1$. We guarantee that $i\neq j$ since we cut off $u_i$. This element is guaranteed to exist by the matroid basis exchange property. 
    
    $B_1 - i + j$ is an independent set since calling \textsc{Remove}($i$) in $\cD_1$ means no nodes on the path from $u_i$ to the root is tight, and \textsc{QueryMax} guarantees the path from $v$ to $u_j$ is not tight. 
    $B_2 - j + i$ is an independent set since calling \textsc{Remove}($j$) from $\cD_2$ would make the path from $v$ to the root have no tight nodes, and since $v$ was the lowest tight constraint of $u_i$, this allows us to add $i$ to $B_2-j$.
    
    \item After doing the possible swap (adding $u_i$ back in if we don't swap), we can call \textit{cut}$(u_i)$ and \textit{cut}$(u_j)$ to remove $i$ and $j$ from consideration, without changing the $c_v$ values of the intermediate nodes.
    $i$ and $j$ will still be in their respective (possibly swapped) basis.
    By the matroid basis exchange theorem, they never need to be considered in the future.
\end{enumerate}

\paragraph{Rounding with a graphic matroid constraint}
For a graphic matroid constraint, Ene and Nguyen~\cite{EN19} showed that the {\sc MergeBases} can be implemented in $O(r \log^2 r)$ time with a red-black tree where $r=\rank(\cM)$.

\section{$B\setminus S$ sampling data structure} 
\label{sec:sampling_ds}
Here we describe a data structure $\cS$ used for sampling that we can augment any data structure that maintains an independent set $I$ of a matroid when the weights lie in $k$ distinct buckets $\cB^{(j)}$ for $j=1,...,k$.
For each bucket we will store a list $L_j$ of the elements in that bucket.
For each list $L_j$, we will also store a counter $|L_j|$ to keep track of the number of elements in the list.
In addition we will also keep track of $w(B\setminus S)$.


This data structure supports the following operations:

\begin{itemize}
    \item $\textsc{Add}(e, j)$ - Add $e$ into $L_{j}$. It is guaranteed that $e$ is not in some list $L_i$ before this is called.
    \item $\textsc{Remove}(e)$ - Remove $e$ from the list $L_{j_e}$ it is currently stored in.
    \item $\textsc{Decrement}(e, j)$ - Move $e$ from the current list that it is in to $L_{j}$. $e$ must be already be in some list $L_i$ for $i> j$.
    \item $\textsc{Sample}(t)$ - Return a subset of $ B \setminus S$ with each $e\in B\setminus S$ sampled with probability $p_e = \min(1,\frac{t \cdot w_e}{w(B\setminus S)}) $.
    \item $\textsc{UniformSample}()$ - Return a uniformly random element of $B\setminus S$.
\end{itemize}

\paragraph{Implementation of the data structure}
The operations \textsc{Add}, \textsc{Remove}, and \textsc{Decrement}, only manipulate the lists, and can be done in $O(1)$ time.
To implement \textsc{Sample}$()$, for each list $L_j$, we need to sample each element with probability $p_j = \min(1,\frac{t\cdot w_e}{w(B\setminus S)})$ for any $e\in L_j$.
If $p_j < 1$, we first sample $t_j$ from the binomial distribution $B(|L_j|, p_j)$, then sample $t_j$ elements from $L_j$ without replacement. 
Otherwise, when $p_j = 1$, we simply take all $t_j = |L_j|$ elements.
This whole operation can be done in $O(k + \sum_{j} t_j)$ time.
Implementing \textsc{UniformSample}$()$ can easily be done in $O(1)$.


\bibliography{ref}

\newcommand{\etalchar}[1]{$^{#1}$}
\begin{thebibliography}{AHdLT05}

\bibitem[AHdLT05]{AHLT05}
Stephen Alstrup, Jacob Holm, Kristian de~Lichtenberg, and Mikkel Thorup.
\newblock Maintaining information in fully dynamic trees with top trees.
\newblock {\em {ACM} Trans. Algorithms}, 1(2):243--264, 2005.

\bibitem[AW14]{AbboudW14}
Amir Abboud and Virginia~Vassilevska Williams.
\newblock Popular conjectures imply strong lower bounds for dynamic problems.
\newblock In {\em 55th {IEEE} Annual Symposium on Foundations of Computer
  Science, {FOCS} 2014, Philadelphia, PA, USA, October 18-21, 2014}, pages
  434--443. {IEEE} Computer Society, 2014.

\bibitem[BFNS14]{BFNS14}
Niv Buchbinder, Moran Feldman, Joseph Naor, and Roy Schwartz.
\newblock Submodular maximization with cardinality constraints.
\newblock In Chandra Chekuri, editor, {\em Proceedings of the Twenty-Fifth
  Annual {ACM-SIAM} Symposium on Discrete Algorithms, {SODA} 2014, Portland,
  Oregon, USA, January 5-7, 2014}, pages 1433--1452. {SIAM}, 2014.

\bibitem[BFS17]{BFS17}
Niv Buchbinder, Moran Feldman, and Roy Schwartz.
\newblock Comparing apples and oranges: Query trade-off in submodular
  maximization.
\newblock {\em Math. Oper. Res.}, 42(2):308--329, 2017.

\bibitem[BHK08]{BHK08}
Moshe Babaioff, Jason Hartline, and Robert Kleinberg.
\newblock Selling banner ads: Online algorithms with buyback.
\newblock In {\em Fourth workshop on ad auctions}, 2008.

\bibitem[BIK07]{BK07}
Moshe Babaioff, Nicole Immorlica, and Robert Kleinberg.
\newblock Matroids, secretary problems, and online mechanisms.
\newblock In Nikhil Bansal, Kirk Pruhs, and Clifford Stein, editors, {\em
  Proceedings of the Eighteenth Annual {ACM-SIAM} Symposium on Discrete
  Algorithms, {SODA} 2007, New Orleans, Louisiana, USA, January 7-9, 2007},
  pages 434--443. {SIAM}, 2007.

\bibitem[Bil22]{B22}
Jeff~A. Bilmes.
\newblock Submodularity in machine learning and artificial intelligence.
\newblock {\em CoRR}, abs/2202.00132, 2022.

\bibitem[BLSZ14]{BLSZ14}
Bartlomiej Bosek, Dariusz Leniowski, Piotr Sankowski, and Anna Zych.
\newblock Online bipartite matching in offline time.
\newblock In {\em 55th {IEEE} Annual Symposium on Foundations of Computer
  Science, {FOCS} 2014, Philadelphia, PA, USA, October 18-21, 2014}, pages
  384--393. {IEEE} Computer Society, 2014.

\bibitem[BMKK14]{BMKK14}
Ashwinkumar Badanidiyuru, Baharan Mirzasoleiman, Amin Karbasi, and Andreas
  Krause.
\newblock Streaming submodular maximization: massive data summarization on the
  fly.
\newblock In Sofus~A. Macskassy, Claudia Perlich, Jure Leskovec, Wei Wang, and
  Rayid Ghani, editors, {\em The 20th {ACM} {SIGKDD} International Conference
  on Knowledge Discovery and Data Mining, {KDD} '14, New York, NY, {USA} -
  August 24 - 27, 2014}, pages 671--680. {ACM}, 2014.

\bibitem[BV14]{BV14}
Ashwinkumar Badanidiyuru and Jan Vondr{\'{a}}k.
\newblock Fast algorithms for maximizing submodular functions.
\newblock In Chandra Chekuri, editor, {\em Proceedings of the Twenty-Fifth
  Annual {ACM-SIAM} Symposium on Discrete Algorithms, {SODA} 2014, Portland,
  Oregon, USA, January 5-7, 2014}, pages 1497--1514. {SIAM}, 2014.

\bibitem[CCPV11]{CCPV11}
Gruia C{\u{a}}linescu, Chandra Chekuri, Martin P{\'{a}}l, and Jan
  Vondr{\'{a}}k.
\newblock Maximizing a monotone submodular function subject to a matroid
  constraint.
\newblock {\em {SIAM} J. Comput.}, 40(6):1740--1766, 2011.

\bibitem[CVZ10]{CVZ10}
Chandra Chekuri, Jan Vondr{\'{a}}k, and Rico Zenklusen.
\newblock Dependent randomized rounding via exchange properties of
  combinatorial structures.
\newblock In {\em 51th Annual {IEEE} Symposium on Foundations of Computer
  Science, {FOCS} 2010, October 23-26, 2010, Las Vegas, Nevada, {USA}}, pages
  575--584. {IEEE} Computer Society, 2010.

\bibitem[EN19]{EN19}
Alina Ene and Huy~L. Nguyen.
\newblock Towards nearly-linear time algorithms for submodular maximization
  with a matroid constraint.
\newblock In Christel Baier, Ioannis Chatzigiannakis, Paola Flocchini, and
  Stefano Leonardi, editors, {\em 46th International Colloquium on Automata,
  Languages, and Programming, {ICALP} 2019, July 9-12, 2019, Patras, Greece},
  volume 132 of {\em LIPIcs}, pages 54:1--54:14. Schloss Dagstuhl -
  Leibniz-Zentrum f{\"{u}}r Informatik, 2019.

\bibitem[Fei98]{F98}
Uriel Feige.
\newblock A threshold of ln \emph{n} for approximating set cover.
\newblock {\em J. {ACM}}, 45(4):634--652, 1998.

\bibitem[FT87]{FredmanT87}
Michael~L. Fredman and Robert~Endre Tarjan.
\newblock Fibonacci heaps and their uses in improved network optimization
  algorithms.
\newblock {\em J. {ACM}}, 34(3):596--615, 1987.

\bibitem[GP13]{GuptaP13}
Manoj Gupta and Richard Peng.
\newblock Fully dynamic {$(1+\epsilon)$}-approximate matchings.
\newblock In {\em 54th Annual {IEEE} Symposium on Foundations of Computer
  Science, {FOCS} 2013, 26-29 October, 2013, Berkeley, CA, {USA}}, pages
  548--557. {IEEE} Computer Society, 2013.

\bibitem[HKNS15]{HenzingerKNS15}
Monika Henzinger, Sebastian Krinninger, Danupon Nanongkai, and Thatchaphol
  Saranurak.
\newblock Unifying and strengthening hardness for dynamic problems via the
  online matrix-vector multiplication conjecture.
\newblock In Rocco~A. Servedio and Ronitt Rubinfeld, editors, {\em Proceedings
  of the Forty-Seventh Annual {ACM} on Symposium on Theory of Computing, {STOC}
  2015, Portland, OR, USA, June 14-17, 2015}, pages 21--30. {ACM}, 2015.

\bibitem[HPS21]{HenzingerP021}
Monika Henzinger, Ami Paz, and Stefan Schmid.
\newblock On the complexity of weight-dynamic network algorithms.
\newblock In Zheng Yan, Gareth Tyson, and Dimitrios Koutsonikolas, editors,
  {\em {IFIP} Networking Conference, {IFIP} Networking 2021, Espoo and
  Helsinki, Finland, June 21-24, 2021}, pages 1--9. {IEEE}, 2021.

\bibitem[KMZ{\etalchar{+}}19]{MZLK19}
Ehsan Kazemi, Marko Mitrovic, Morteza Zadimoghaddam, Silvio Lattanzi, and Amin
  Karbasi.
\newblock Submodular streaming in all its glory: Tight approximation, minimum
  memory and low adaptive complexity.
\newblock In Kamalika Chaudhuri and Ruslan Salakhutdinov, editors, {\em
  Proceedings of the 36th International Conference on Machine Learning, {ICML}
  2019, 9-15 June 2019, Long Beach, California, {USA}}, volume~97 of {\em
  Proceedings of Machine Learning Research}, pages 3311--3320. {PMLR}, 2019.

\bibitem[LFKK22]{LFKK22}
Wenxin Li, Moran Feldman, Ehsan Kazemi, and Amin Karbasi.
\newblock Submodular maximization in clean linear time.
\newblock {\em CoRR}, abs/2006.09327, 2022.

\bibitem[LMSW22]{LeMSW22}
Hung Le, Lazar Milenkovic, Shay Solomon, and Virginia~Vassilevska Williams.
\newblock Dynamic matching algorithms under vertex updates.
\newblock In Mark Braverman, editor, {\em 13th Innovations in Theoretical
  Computer Science Conference, {ITCS} 2022, January 31 - February 3, 2022,
  Berkeley, CA, {USA}}, volume 215 of {\em LIPIcs}, pages 96:1--96:24. Schloss
  Dagstuhl - Leibniz-Zentrum f{\"{u}}r Informatik, 2022.

\bibitem[LV19]{LV19}
Paul Liu and Jan Vondr{\'{a}}k.
\newblock Submodular optimization in the mapreduce model.
\newblock In Jeremy~T. Fineman and Michael Mitzenmacher, editors, {\em 2nd
  Symposium on Simplicity in Algorithms, {SOSA} 2019, January 8-9, 2019, San
  Diego, CA, {USA}}, volume~69 of {\em OASIcs}, pages 18:1--18:10. Schloss
  Dagstuhl - Leibniz-Zentrum f{\"{u}}r Informatik, 2019.

\bibitem[NW78]{NW78}
George~L. Nemhauser and Laurence~A. Wolsey.
\newblock Best algorithms for approximating the maximum of a submodular set
  function.
\newblock {\em Math. Oper. Res.}, 3(3):177--188, 1978.

\bibitem[Sch03]{Schrijver-book}
Alexander Schrijver.
\newblock {\em Combinatorial optimization: polyhedra and efficiency},
  volume~24.
\newblock Springer, 2003.

\bibitem[ST83]{SleatorT83}
Daniel~Dominic Sleator and Robert~Endre Tarjan.
\newblock A data structure for dynamic trees.
\newblock {\em J. Comput. Syst. Sci.}, 26(3):362--391, 1983.

\bibitem[Tar75]{Tarjan75}
Robert~Endre Tarjan.
\newblock Efficiency of a good but not linear set union algorithm.
\newblock {\em J. {ACM}}, 22(2):215--225, 1975.

\bibitem[TW05]{TarjanW05}
Robert~Endre Tarjan and Renato Fonseca~F. Werneck.
\newblock Self-adjusting top trees.
\newblock In {\em Proceedings of the Sixteenth Annual {ACM-SIAM} Symposium on
  Discrete Algorithms, {SODA} 2005, Vancouver, British Columbia, Canada,
  January 23-25, 2005}, pages 813--822. {SIAM}, 2005.

\bibitem[WRB{\etalchar{+}}18]{WRBJCG18}
Mark Wilhelm, Ajith Ramanathan, Alexander Bonomo, Sagar Jain, Ed~H. Chi, and
  Jennifer Gillenwater.
\newblock Practical diversified recommendations on youtube with determinantal
  point processes.
\newblock In Alfredo Cuzzocrea, James Allan, Norman~W. Paton, Divesh
  Srivastava, Rakesh Agrawal, Andrei~Z. Broder, Mohammed~J. Zaki,
  K.~Sel{\c{c}}uk Candan, Alexandros Labrinidis, Assaf Schuster, and Haixun
  Wang, editors, {\em Proceedings of the 27th {ACM} International Conference on
  Information and Knowledge Management, {CIKM} 2018, Torino, Italy, October
  22-26, 2018}, pages 2165--2173. {ACM}, 2018.

\bibitem[XR15]{XR15}
Zhou Xu and Brian Rodrigues.
\newblock A 3/2-approximation algorithm for the multiple tsp with a fixed
  number of depots.
\newblock {\em INFORMS Journal on Computing}, 27(4):636--645, 2015.

\bibitem[ZH23]{NEW}
Da~Wei Zheng and Monika Henzinger.
\newblock Multiplicative auction algorithm for approximate maximum weight
  bipartite matching.
\newblock {\em CoRR}, abs/2301.09217, 2023.

\end{thebibliography}

\bibliographystyle{alpha}
\end{document}